\newcommand{\oA}{\bar{A}}
\newcommand{\MDS}{\operatorname{MDS}}
\newcommand{\GZP}{\operatorname{GZP}}
\newcommand{\MDSb}{\operatorname{MDSb}}
\newcommand{\LDMDS}{\operatorname{LD-MDS}}
\newcommand{\wt}{\operatorname{wt}}
\newcommand{\rank}{\operatorname{rank}}
\newcommand{\eps}{\epsilon}
\newcommand{\F}{\mathbb F}
\newcommand{\Z}{\mathbb Z}
\newcommand{\cS}{\mathcal S}
\newcommand{\ex}{\operatorname{ex}}
\newcommand{\bz}{{\bm z}}
\newcommand{\Span}{\operatorname{span}}
\newcommand{\gid}{\operatorname{gid}}
\newcommand{\bi}{\mathbf i}
\newcommand{\bj}{\mathbf j}
\newcommand{\Det}{\operatorname{Det}}
\newcommand{\change}[1]{{#1}}
\newtheorem{theorem}{Theorem}[section]
\newtheorem{lemma}[theorem]{Lemma}
\newtheorem{corollary}[theorem]{Corollary}
\newtheorem{proposition}[theorem]{Proposition}
\newtheorem{question}[theorem]{Question}
\newtheorem{conjecture}[theorem]{Conjecture}
\newtheorem{claim}[theorem]{Claim}
\newtheorem{definition}{Definition}[section]
\newtheorem{remark}{Remark}[section]
\newtheorem{example}{Example}[section]
\title{Generalized GM-MDS: \\Polynomial Codes are Higher Order MDS\footnote{A preliminary version of this manuscript appeared in the proceedings of STOC 2024~\cite{brakensiek2024Generalized}. This expanded versions gives full proofs of all claims.}}
\author{Joshua Brakensiek\thanks{University of California, Berkeley. Email: \texttt{josh.brakensiek@berkeley.edu}. Research supported in part by a Microsoft Research PhD fellowship while a student at Stanford University.}  \and
Manik Dhar\thanks{Department of Mathematics, Massachusetts Institute of Technology. Email: \texttt{dmanik@mit.edu}. Part of this work was done while this author was a graduate student at the Department of Computer Science, Princeton University where his research was supported by NSF grant DMS-1953807.} \and
Sivakanth Gopi\thanks{Microsoft Research. Email: \texttt{sigopi@microsoft.com}}}
\date{}
\begin{document}
\maketitle

\begin{abstract}
The GM-MDS theorem, conjectured by Dau-Song-Dong-Yuen and proved by Lovett and Yildiz-Hassibi, shows that the generator matrices of Reed-Solomon codes can attain every possible configuration of zeros for an $\MDS$ code. The recently emerging theory of higher order MDS codes has connected the GM-MDS theorem to other important properties of Reed-Solomon codes, including showing that Reed-Solomon codes can achieve list decoding capacity, even over fields of size linear in the message length. 

A few works have extended the GM-MDS theorem to other families of codes, including Gabidulin and skew polynomial codes. In this paper, we generalize all these previous results by showing that the GM-MDS theorem applies to any \emph{polynomial code}, i.e., a code where the columns of the generator matrix are obtained by evaluating linearly independent polynomials at different points. We also show that the GM-MDS theorem applies to dual codes of such polynomial codes, which is non-trivial since the dual of a polynomial code may not be a polynomial code. More generally, we show that GM-MDS theorem also holds for \emph{algebraic codes} (and their duals) where columns of the generator matrix are chosen to be points on some irreducible variety which is not contained in a hyperplane through the origin. Our generalization has applications to constructing capacity-achieving list-decodable codes as shown in a follow-up work \cite{bdg2023b}, where it is proved that randomly punctured algebraic-geometric (AG) codes achieve list-decoding capacity over constant-sized fields.
\end{abstract}
\newpage
\tableofcontents
\newpage

\section{Introduction}

The optimal rate distance tradeoff of an $(n,k)$-code (over large alphabets) is given by the Singleton bound~\cite{singleton1964maximum} which states that the minimum distance $d$ is at most $n-k+1$. A code which achieves this bound is called an MDS (Maximum Distance Separable) code. MDS codes are some of the most important codes because of their optimality, for example they are extensively used in distributed storage. Reed-Solomon codes~\cite{reed1960polynomial} are the most well-known family of MDS codes and one of the first polynomial codes to be discovered. An $(n,k)$ Reed-Solomon code is obtained by evaluating degree $<k$ polynomials over $\F_q$ at $n$ distinct points from $\F_q$. These codes achieve the Singleton bound over any field of size $q \ge n$. In this paper, we will only consider linear codes defined over finite fields. It is not hard to show that a linear $(n,k)$-code $C$ with generator matrix $G_{k\times n}$ is MDS iff for every $k\times k$ minor of $G$ is nonzero. The generator matrix of a Reed-Solomon code is of the form \begin{equation}
  \label{eq:rs_generator}
  G_{k\times n}=\left[\begin{matrix}
    1 & 1 & \dots & 1\\
    \alpha_1 & \alpha_2 & \dots & \alpha_n\\
    \alpha_1^2 & \alpha_2^2 & \dots & \alpha_n^2\\
    \vdots & \vdots & \ddots & \vdots\\
    \alpha_1^{k-1} & \alpha_2^{k-1} & \dots & \alpha_n^{k-1}
  \end{matrix}\right]
\end{equation}
where $\alpha_1,\alpha_2,\dots,\alpha_n\in \F_q$ are distinct. There are several natural generalizations of MDS codes and recently many of them have been unified into a single concept called \emph{higher order MDS codes} \cite{bgm2021mds,bgm2022}. Higher order MDS codes tie together the GM-MDS theorem, optimally list-decodable codes and MR tensor codes. We will now briefly discuss these connections.

\paragraph{GM-MDS theorem.} In some coding theory applications, we need MDS codes whose generator matrices have constrained support, i.e., with a specific pattern of zeros. See \cite{dau2014gmmds,Halbawi2014distributed,yan2013algorithms,dau2013balanced,dau2014simple} for applications of such codes to multiple access networks and secure data exchange. Let $\cS=(S_1,S_2,\dots,S_n)$ where $S_1,\dots,S_k\subset [n] \change{:= \{1, 2, \hdots, n\}}$, we call such an $\cS$ a zero pattern for $k\times n$ matrices. If there are $\ell$ distinct non-empty sets among $S_1,\dots,S_k$, we say that $\cS$ is an order-$\ell$ zero pattern. We say that an $(n,k)$-code $C$ with some generator matrix $G_{k\times n}$ attains the zero pattern $\cS$ if there exists an invertible matrix $M_{k\times k}$ such that $\tilde{G}=MG$ has zeros in the positions specified by $\cS$ (i.e., in positions $\cup_{i=1}^k \{i\}\times S_i$). Since $\tilde{G}$ is also a generator matrix of $C$, equivalently, we say $C$ attains the zero pattern $\cS$ if there exists a generator matrix of $C$ which has zeros in positions of $\cS$. For an MDS code to attain a zero pattern $\cS$, a simple necessary condition is that the zero pattern should satisfy a Hall type condition:
\begin{equation}
  \label{eq:hall_zp}
	\forall I\subset [k],\ |\cap_{i\in I} S_i|\le k-|I|.
\end{equation}
 It is easy to see why (\ref{eq:hall_zp}) is a necessary condition, if (\ref{eq:hall_zp}) is not satisfied, then there exists a $k\times k$ submatrix of $G$ which has too many zeros and hence cannot be invertible. On the other hand, if (\ref{eq:hall}) is satisfied, then by Hall's matching theorem every $k\times k$ submatrix has some matching of non-zeros entries and hence can be made invertible by a suitable (or generic) choice of the non-zero entries \cite{dau2014gmmds,dau2014simple}. Therefore, over large enough alphabet, one can always find an MDS matrix with a given zero pattern $\cS$ if (\ref{eq:hall}) is satisfied. A pattern which satisfies condition (\ref{eq:hall_zp}) is called a \emph{generic zero pattern}~\cite{bgm2022}.
 
 Dau et al.~\cite{dau2014gmmds,dau2014simple} conjectured that given any generic zero pattern, there exists a Reed-Solomon code over any field of size $q\ge n+k-1$ which attains the zero pattern $\cS$. This conjecture which came to be known as the \emph{GM-MDS conjecture}\footnote{It is unclear what GM stands as \cite{dau2014gmmds,dau2014simple} don't explain it in the paper where they introduced the conjecture. Our best guess is that it stands for `Generator Matrix' given that we are looking for MDS codes with constrained generator matrices.} was eventually proved independently by Lovett~\cite{lovett2018gmmds} and Yildiz and Hassibi~\cite{yildiz2019gmmds}. The heart of the proof (using a reduction shown by \cite{dau2014gmmds,dau2014simple}) is to show that generic Reed-Solomon codes can achieve the zero pattern $\cS$ which itself reduces to showing that a symbolic matrix has non-zero determinant. By observing that the degree of each variable in the determinant polynomial is at most $n+k-2$, one can immediately get an upper bound on the field size needed to achieve the zero pattern $\cS$ \cite{dau2014gmmds,dau2014simple}. So the GM-MDS theorem can equivalently be stated as follows.

\begin{theorem}[GM-MDS Theorem~\cite{dau2014gmmds,dau2014simple,lovett2018gmmds,yildiz2019gmmds}]
  \label{thm:gm-mds}
  Let $\cS=(S_1,S_2,\dots,S_n)$ be any generic zero pattern. Then for any finite field $\F_q$, there exists some $\alpha_1,\dots,\alpha_n\in \overline{\F}_q$ (algebraic closure of $\F_q$) such that the Reed-Solomon code with generator matrix $G$ as in (\ref{eq:rs_generator}) attains the zero pattern $\cS$.
\end{theorem}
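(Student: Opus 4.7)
The plan is to translate the theorem into an algebraic non-vanishing statement and then prove it by induction. First, I recast attaining the zero pattern as a polynomial condition: since the rows of a Reed--Solomon generator matrix correspond to polynomials of degree $<k$, the code attains $\cS$ iff there exist linearly independent $f_1, \ldots, f_k \in \F_q[x]_{<k}$ with $f_i(\alpha_j) = 0$ for all $j \in S_i$. Each such $f_i$ must be divisible by $P_i(x) := \prod_{j \in S_i}(x - \alpha_j)$, so I parametrize $f_i = P_i(x) \cdot \sum_{\ell=0}^{k - |S_i| - 1} c_{i, \ell}\, x^\ell$ with fresh formal variables $c_{i,\ell}$, and let $A$ be the $k \times k$ matrix whose $i$-th row lists the coefficients of $f_i$ in the basis $1, x, \ldots, x^{k-1}$. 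Viewing both the $\alpha_j$'s and the $c_{i,\ell}$'s as indeterminates, it suffices to show that $\det(A)$ is a non-zero polynomial, since then a non-vanishing assignment exists in the infinite field $\overline{\F}_q$.

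Second, to prove $\det(A) \not\equiv 0$, I would run an induction on a natural complexity parameter such as $\sum_i (k - |S_i|)$ or $n + k$, branching the inductive step on whether the Hall-type condition (\ref{eq:hall_zp}) is \emph{tight} for some non-trivial $I \subsetneq [k]$, i.e., $|\bigcap_{i \in I} S_i| = k - |I|$. In the strict case there is slack, so I can enlarge some $S_i$ by one element without violating (\ref{eq:hall_zp}); solving this more constrained instance by induction and then relaxing the extra constraint yields the desired conclusion. In the tight case, setting $T := \bigcap_{i \in I} S_i$, every admissible $f_i$ with $i \in I$ must share the common factor $\prod_{j \in T}(x - \alpha_j)$, so $A$ block-decomposes (after reordering rows and columns) into two smaller GM-MDS sub-instances: one on the rows indexed by $I$ with effective ground set $T$ and residual zero patterns $S_i \setminus T$, and one on the rows $[k] \setminus I$ with ground set $[n] \setminus T$. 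Each sub-instance satisfies its own Hall condition and has strictly smaller complexity.

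The main obstacle is the combinatorial bookkeeping needed to verify both that (\ref{eq:hall_zp}) descends cleanly to each sub-instance in the tight case, and that the full determinant factors --- up to a non-zero scalar, after a suitable specialization of the $\alpha_j$'s for $j \in T$ or a change of basis that exploits the common factor --- as a product of the two sub-determinants. This is the heart of the arguments of Lovett~\cite{lovett2018gmmds} and of Yildiz--Hassibi~\cite{yildiz2019gmmds}, and is the step where the proofs genuinely differ in technique (Lovett exhibits an explicit non-vanishing monomial via a clever inductive reduction, while Yildiz--Hassibi proceed by specialization and an inductive combinatorial identity). Once this factorization is established, both sub-determinants are non-zero by the inductive hypothesis, so $\det(A)\not\equiv 0$, and the theorem follows from the existence of a non-vanishing assignment in $\overline{\F}_q$.
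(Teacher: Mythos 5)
There is a genuine gap, and it sits exactly where you defer to the literature. Your first step (attaining $\cS$ iff there exist linearly independent $f_1,\dots,f_k\in\F_q[x]_{<k}$ with $f_i$ vanishing on $S_i$, parametrized as $f_i=P_i\cdot(\text{generic polynomial})$, reducing to $\det(A)\not\equiv 0$) is the standard reduction of Dau et al.\ and is fine, modulo the small point that you must also force the $\alpha_j$ distinct (multiply in the Vandermonde factor). But the induction you sketch does not close. In the ``slack'' branch, enlarging some $S_i$ is only possible while $|S_i|<k-1$; after maximization every $|S_i|=k-1$, and in the remaining configuration --- all sets of size $k-1$ and no tight $I$ with $2\le|I|\le k-1$ --- your scheme has no move at all. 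This is precisely the regime where Lovett and Yildiz--Hassibi must \emph{strengthen} the induction hypothesis, allowing each $g_i$ a repeated root at $0$ (the $x^{\sigma_i}$ factors), and where this paper instead introduces the ``basis'' matrix $U$ in the $\MDSb$ framework (Definition~\ref{def:MDSb}, Cases 2--3 of the proof of Theorem~\ref{thm:gm-mds-primal}); plain GM-MDS instances alone cannot express the needed sub-statements.

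The tight-case factorization is also misidentified. Writing $T=\bigcap_{i\in I}S_i$ and $Q=\prod_{j\in T}(x-\alpha_j)$, a change of basis adapted to $Q\cdot\F_q[x]_{<|I|}$ does make the determinant block-triangular, and the first block is indeed the dimension-$|I|$ instance with patterns $S_i\setminus T$. But the second block is \emph{not} ``rows $[k]\setminus I$ with ground set $[n]\setminus T$'': it is the linear independence of the residues $f_j \bmod Q$ for $j\notin I$, i.e., the nonsingularity of the evaluation matrix $(f_j(\alpha_t))_{j\notin I,\,t\in T}$, where each $f_j$ is still constrained by \emph{all} of $S_j$ (including $S_j\setminus T$). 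Whenever $|S_j\setminus T|>|I|$ these rows are confined to proper subspaces, so this factor is not an instance of the same theorem and the induction hypothesis does not apply to it; equivalently, in the standard framing the second application of induction involves the polynomials $x^iQ(x)$, which have a repeated root at $0$ and hence require the generalized statement. Since you explicitly leave this ``combinatorial bookkeeping and factorization'' to \cite{lovett2018gmmds,yildiz2019gmmds}, the proposal is an outline of the known reduction rather than a proof: the missing ingredient is exactly the strengthened inductive statement (repeated roots at $0$, or in this paper's language the $\MDSb$ basis) that makes both branches of the induction well-defined.
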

An MDS code which achieves all generic zero patterns of order $\ell$ is called a $\GZP(\ell)$ code~\cite{bgm2022}. In other words, the GM-MDS theorem states that Reed-Solomon codes are $\GZP(\ell)$ for all $\ell$. GM-MDS type theorems have been proved for Gabidulin codes~\cite{gabidulin2021rank} in \cite{yildiz2019gabidulin} and for linearized Reed-Solomon codes (or skew polynomial codes) in \cite{liu2023linearized}. Are these codes somehow very special or is there a common generalization of these results?

\begin{question}
  \label{q:gm-mds}
  Is there a common generalization of the GM-MDS theorem which implies the existing results for Reed-Solomon codes, Gabidulin codes and linearized Reed-Solomon codes?
\end{question}

Note that all these codes are polynomial codes. We will see shortly that we can in fact prove a GM-MDS theorem for all polynomial codes!

\paragraph{Optimally list-decodable codes.} A code with minimum distance $d$ can be uniquely decoded up to half the minimum distance. \emph{List-decoding} introduced by \cite{wozencraft1958list,elias1957list} allows us to decode from beyond half the minimum distance, by instead outputting a small list of possible codewords. An $(n,k)$-code is $(\rho,L)$-list decodable if any Hamming ball of radius $\rho n$ contains at most $L$ codewords. \emph{Average-radius list-decoding} is a slight strengthening of list-decoding defined as follows.
\begin{definition}
  Let $\wt(x)$ denote the hamming weight (i.e., number of nonzeros) of a vector $x$. We say that an $(n,k)$-code is $(\rho,L)$-average-radius list-decodable if there doesn't exist $L+1$ codewords $c_0,c_1,\dots,c_L\in C$ and $y\in \F^n$ such that $$\frac{1}{L+1}\sum_{i=0}^{L}\wt(c_i-y) \le \rho n.$$
\end{definition}
A code which is $(\rho,L)$-average-radius list-decodable is also $(\rho,L)$-list-decodable. A generalized Singleton bound for list-decoding was proved recently~\cite{shangguan2020combinatorial,roth2021higher,goldberg2021singleton} which states that a $(\rho,L)$-list-decodable (and hence also average-radius list-decodable) code should satisfy 
\begin{equation}
  \label{eq:listdecoding_singleton}
  \rho \le \frac{L}{L+1}\change{\cdot \frac{n-k}{n}}.
\end{equation}
A code which meets this bound exactly for all list sizes $\le L$ is called an $\LDMDS(\le L)$ (list-decoding MDS) code \change{and is an example of a higher order MDS code~\cite{roth2021higher}}.

\paragraph{MR tensor codes.} Give an $(m,m-a)$-code $C_{col}$ and an $(n,n-b)$-code $C_{row}$, the tensor code $C_{col}\otimes C_{row}$ consists of all $m\times n$ matrices where each row belongs to the row code $C_{row}$ and each column belongs to the column code $C_{col}$. Tensor codes are used for erasure coding in data centers. \change{For example the f4~\cite{muralidhar2014f4} architecture has the parameters $(m,n,a,b) = (3,14,1,4)$.} A Maximally Recoverable (MR) tensor code are `optimal' tensor codes which can recover from any erasure pattern as long as it is information theoretically possible~\cite{Gopalan2016}. \cite{Gopalan2016} also give an explicit combinatorial condition on which erasure patterns are information theoretically recoverable which called `regularity'.

\paragraph{Higher order MDS codes.} \change{Motivated by higher order MDS codes,} Brakensiek, Gopi and Makam\change{ in \cite{bgm2021mds}} introduced a \change{different} generalization of MDS codes called \emph{higher order MDS codes}.
\begin{definition}[\cite{bgm2021mds}]\label{def:MDS-ell}
  Let $C$ be an $(n,k)$-code with generator matrix $G$. For $\ell \ge 1$, we say $C$ is $\MDS(\ell)$ if for any subsets $A_1,A_2,\dots,A_\ell\subset [n]$, we have that $$\dim(G_{A_1}\cap G_{A_2}\cap \dots \cap G_{A_\ell})=\dim(W_{A_1}\cap W_{A_2} \cap \dots W_{A_\ell})$$ where $G|_{A_i}$ is the submatrix of $G$ with columns indexed by $A_i$, $G_{A_i}$ is the column span of $G|_{A_i}$, and $W_{k\times n}$ is a generic matrix (over fields of the same characteristic).\footnote{Actually, the results of \cite{bgm2022} show that the RHS is independent of the characteristic of the field.}
\end{definition}

\change{As observed by \cite{bgm2021mds}, the conditions $\ell=1$ and $\ell=2$ coincide with ordinary higher order MDS codes, but the conditions for $\ell \ge 3$ are strictly stronger.} In their follow up work, \cite{bgm2022} showed a remarkable equivalence between various natural generalizations of MDS codes by showing that all of them are equivalent to higher order MDS codes. 

\begin{theorem}[\cite{bgm2022}]\label{thm:mds-equiv}
  The following are equivalent for any $(n,k)$-code $C$ and $\ell \ge 2$.
  \begin{enumerate}
    \item $C$ is $\MDS(\ell)$.
    \item $C$ is $\GZP(\ell)$.
    \item $C_{parity} \otimes C$ is an MR tensor code where $C_{parity}$ is the $(\ell,\ell-1)$ parity check code.
    \item $C^\perp$ is $\LDMDS(\le \ell-1)$ where $C^\perp$ is the dual code of $C$.
  \end{enumerate}
\end{theorem}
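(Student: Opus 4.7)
The plan is to establish the three equivalences $1 \Leftrightarrow 2$, $1 \Leftrightarrow 3$, and $1 \Leftrightarrow 4$ through a common theme: each of conditions 2, 3, and 4 translates, via a linear algebra duality, into the requirement that certain intersections of column spans of the generator matrix $G$ of $C$ attain their generic dimensions, which is exactly condition 1.

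For $1 \Leftrightarrow 2$, I would invoke Rado's theorem (the matroid extension of Hall's matching). A code $C$ attains a zero pattern $\cS = (S_1,\dots,S_k)$ iff there exist codewords spanning $C$ with the $i$-th codeword vanishing on $S_i$; equivalently, the subspaces $V_i \subseteq \F^k$ of messages whose codewords vanish on $S_i$ admit a linearly independent system of representatives. Since $V_i = G_{S_i}^\perp$ (the orthogonal complement inside $\F^k$ of the column span), the identity $\dim(\bigcap_{i \in I} G_{S_i}) = k - \dim(\sum_{i \in I} V_i)$ turns Rado's condition into an upper bound on $\dim(\bigcap_{i \in I} G_{S_i})$. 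For a generic matrix these intersections take their minimal possible values, and the statement ``every order-$\ell$ generic zero pattern is attainable'' reduces, after grouping equal $S_i$'s into $\ell$ distinct sets, to the intersection dimensions for $\ell$-tuples matching their generic values, i.e., $\MDS(\ell)$.

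For $1 \Leftrightarrow 4$, I would work with the parity-check matrix $H$ of $C$, which generates $C^\perp$. A witness to $\LDMDS(\leq \ell-1)$ being tight for $C^\perp$ consists of $\ell$ codewords $c_0,\dots,c_{\ell-1} \in C^\perp$ and a center $y \in \F^n$ meeting equality in \eqref{eq:listdecoding_singleton}. Letting $A_i = \supp(c_i - y)$ and rewriting $\sum_i \wt(c_i - y)$ via a rank identity for the matrix stacking the $c_i - y$ (relative to the shared center $y$), one shows that equality can be achieved precisely when a corresponding intersection of column spans of $G$ exceeds its generic dimension, yielding the duality with $\MDS(\ell)$ on $C$. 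For $1 \Leftrightarrow 3$, I would apply the regularity characterization of erasure patterns from \cite{Gopalan2016}: when $C_{parity}$ is the $(\ell,\ell-1)$ single-parity code, an erasure pattern in $C_{parity} \otimes C$ is recoverable iff for every grouping of $\ell$ rows and every choice of column subsets $A_1,\dots,A_\ell \subseteq [n]$ the combined missing coordinates can be reconstructed via $C$, which again reduces to the same generic intersection condition over $\ell$ sets.

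The main obstacle is the equivalence $1 \Leftrightarrow 4$, since it requires converting a metric statement (average Hamming radius) into an algebraic one (intersection dimensions). The key technical step is to identify the tightest list-decoding witness: one relates $\sum_i \wt(c_i - y)$ to the rank of the matrix whose rows are $c_i - y$, then uses MDS-type intersection bounds to lower-bound that rank. Pinning down exactly when equality can be attained forces the full $\MDS(\ell)$ intersection condition on all $\ell$-tuples of column subsets rather than weaker pairwise-type conditions; establishing this rigidity, and checking that the duality lines up with $\LDMDS(\leq \ell-1)$ rather than with some weaker list-decoding property, is the technical heart of the argument.
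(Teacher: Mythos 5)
First, a point of reference: this paper does not prove Theorem~\ref{thm:mds-equiv} at all --- it is quoted from \cite{bgm2022} (building on \cite{bgm2021mds,roth2021higher,Gopalan2016}), so there is no in-paper proof to compare against. Measured against the proofs in those cited works, your outline follows essentially the same route: zero patterns via the duality $m^T G|_{S_i}=0 \iff m\in (G_{S_i})^\perp$ and an independent-transversal criterion, the MR tensor equivalence via the $a=1$ regularity characterization and a parity-check rank computation (compare Proposition~\ref{prop:mds-dual} in this paper), and the list-decoding equivalence via translating an average-radius witness into a nonzero vector lying in an intersection of column spans of $G$.

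That said, as a proof the sketch leaves the genuinely hard directions open. For $1\Leftrightarrow 2$, your Rado/Hall setup naturally gives $\MDS(\ell)\Rightarrow\GZP(\ell)$: generic intersection dimensions imply the transversal condition for every generic zero pattern. The converse is the nontrivial part: given an $\ell$-tuple $A_1,\dots,A_\ell$ whose intersection exceeds its generic dimension, one must manufacture an order-$\ell$ generic zero pattern (with the $\ell$ distinct sets $A_i$ repeated with suitable multiplicities) that $C$ cannot attain; this is exactly where the dual variables $\delta_1,\dots,\delta_\ell$ of Theorem~\ref{thm:mds6} (the partition/LP duality) are needed, and your sketch does not supply that mechanism, nor the bookkeeping that an order-$\ell$ pattern consists of $k$ sets of which only $\ell$ are distinct. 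For $1\Leftrightarrow 4$, the clean bridge is the shared-syndrome observation rather than a rank identity on the stacked rows $c_i-y$: since $G$ is a parity-check matrix of $C^\perp$, the vector $s:=G(y-c_i)^T$ is independent of $i$, and $s\in G_{A_i}$ with $A_i=\supp(y-c_i)$, so a witness violating the bound with $\sum_i|A_i|\le(\ell-1)k$ yields a nonzero vector in $G_{A_1}\cap\cdots\cap G_{A_\ell}$ and conversely. Even then one must handle degeneracies --- $s=0$, repeated codewords, effective lists of size smaller than $\ell-1$, and tuples not satisfying the null intersection property --- which is precisely why the statement involves $\LDMDS(\le \ell-1)$ (all list sizes up to $\ell-1$) together with monotonicity of $\MDS(\ell)$ in $\ell$; your proposal flags this rigidity issue as the ``technical heart'' but does not resolve it.
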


As a result of this theorem, an equivalent way to state the GM-MDS theorem is to state that generic Reed-Solomon codes are higher order MDS codes, i.e., $\MDS(\ell)$ for all $\ell$. In particular, this implies that Reed-Solomon codes over sufficiently large alphabet are optimally list decodable, i.e., they achieve the list-decoding singleton bound (\ref{eq:listdecoding_singleton}) for all list sizes, which was originally conjectured by Shangguan and Tamo \cite{shangguan2020combinatorial}. In subsequent work, Guo and Zhang \cite{guo2023randomly} showed that Reed-Solomon codes over fields of size $O_\eps(nk)$ can achieve list-decoding capacity, i.e., $(\rho,L)$ list decodable with $\rho=1-R-\eps$ and $L=O(1/\eps)$ where $R=k/n$ is the rate of the code. Note that this is weaker than being optimally list-decodable, but still very useful for applications and crucially now requires only $O_\eps(nk)$ field size. In fact this relaxation is necessary, \cite{brakensiek2023improved} showed that achieving $\LDMDS(\le 2)$ (i.e., optimally list decodable with for list sizes $L\le 2$) already requires exponential field sizes. Shortly after, Alrabiah, Guruswami and Li \cite{alrabiah2023randomly} further improved this to show that Reed-Solomon codes over $O_\eps(n)$ fields can achieve list-decoding capacity.

Given the rich set of connections of higher order MDS codes, answering Question~\ref{q:gm-mds} or equivalently understanding what codes have the higher order MDS property is a fundamental question which will likely lead to progress in several areas of coding theory such as list-decoding, MR tensor codes etc..

\subsection{Main Results}

We now discuss the main contributions of this paper. We satisfactorily answer Question~\ref{q:gm-mds} by showing that there is nothing very special about Reed-Solomon, Gabidulin or Linearized Reed-Solomon codes. We prove that GM-MDS theorem holds for all polynomial codes.

\begin{definition}[Polynomial codes]
\label{def:polycode}
Let $f=(f_1,f_2,\dots,f_k) \in (\F[x_1,\dots,x_r])^k$ be a $k$-dimensional vector of polynomials such that $f_1,\dots,f_k$ are linearly independent over $\F$. Let $\alpha_1,\alpha_2,\dots,\alpha_\change{n}\in \F^r$ be distinct. The polynomial code $C_f \change{\subseteq \F^n}$ is defined as the linear code with generator matrix $G_{k\times n}$ where 
\begin{equation}
	\label{eq:polycode_generator}
  G= \begin{bmatrix}
    f_1(\alpha_1) & f_1(\alpha_2) &  \cdots & f_1(\alpha_\change{n})\\
    f_2(\alpha_1) & f_2(\alpha_2) &  \cdots & f_2(\alpha_\change{n})\\
    \vdots & \vdots & \ddots & \vdots\\
    f_k(\alpha_1) & f_k(\alpha_2) &  \cdots & f_k(\alpha_\change{n})
  \end{bmatrix}.
\end{equation}
\change{If $\alpha_1,\dots,\alpha_\change{n}\in \overline{\F}^r$ are chosen to be algebraically independent over $\F$, we say that $C_f \change{\subseteq \overline{\F}^n}$ is generic over $\F$.}\end{definition}

\begin{theorem}[(Informal) Generalized GM-MDS Theorem for Polynomial Codes]\label{thm:main-1}
  Generic polynomial codes are higher order MDS, i.e., is $\MDS(\ell)$ for all $\ell$.
  \end{theorem}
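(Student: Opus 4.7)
The plan is to reduce to proving $\GZP(\ell)$ for all $\ell$ via Theorem~\ref{thm:mds-equiv}, and then to dispatch the latter by a transversal / matroid argument whose feasibility reduces to a single geometric lemma about generic points on a linearly nondegenerate variety.

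First, by the equivalence of parts (1) and (2) of Theorem~\ref{thm:mds-equiv}, it suffices to show that the generic polynomial code $C_f$ attains every generic zero pattern $\cS = (S_1, \ldots, S_k)$. Setting $V := \Span(f_1, \ldots, f_k)$ and $V_i(\alpha) := \{g \in V : g(\alpha_s) = 0 \ \forall\, s \in S_i\}$, this becomes the task of choosing $g_i \in V_i(\alpha)$ so that $(g_1, \ldots, g_k)$ is a basis of $V$. By Rado's theorem (the matroid extension of Hall's theorem to vector spaces), such a basis exists if and only if $\dim\bigl(\sum_{i \in I} V_i(\alpha)\bigr) \ge |I|$ for every $I \subseteq [k]$. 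Passing to annihilators in $V^*$, where $V_i(\alpha)^\circ = \Span\{\Eval_{\alpha_s} : s \in S_i\}$, this becomes
$$\dim \bigcap_{i \in I} \Span\{\Eval_{\alpha_s} : s \in S_i\} \le k - |I|.$$
Letting $T_I := \cap_{i \in I} S_i$, the genericity of the zero pattern (Hall's condition) already gives $|T_I| \le k - |I|$, so it would suffice to establish the stronger structural identity
$$\bigcap_{i \in I} \Span\{\Eval_{\alpha_s} : s \in S_i\} = \Span\{\Eval_{\alpha_s} : s \in T_I\}$$
for generic $\alpha$.

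Geometrically, the $\Eval_{\alpha_s}$ are the columns of $G$, viewed in $V^* \cong \overline{\F}^k$, and all lie on the image variety $X := \phi(\overline{\F}^r)$ of the map $\phi : \alpha \mapsto \Eval_\alpha$. Linear independence of $f_1, \ldots, f_k$ is precisely the statement that $X$ is not contained in any hyperplane through the origin of $V^*$, while $X$ is irreducible because it is the image of the irreducible $\overline{\F}^r$. The remaining claim is therefore a purely algebraic-geometric assertion: for any irreducible $X \subseteq \overline{\F}^k$ that lies in no hyperplane through the origin, and any algebraically independent $p_1, \ldots, p_n \in X$, the spans $\Span\{p_s : s \in S\}$ satisfy the intersection identity above (with $p_s$ in place of $\Eval_{\alpha_s}$). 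I would prove this by induction on $\ell := |I|$: the base case $\ell = 1$ is trivial, and the inductive step bounds the dimension of the incidence subvariety of $X^n$ on which the identity fails --- if some $q$ lies in $\cap_{i \in I} \Span\{p_s : s \in S_i\}$ but not in $\Span\{p_s : s \in T_I\}$, then varying one $p_s$ generically along $X$ should knock $q$ out of $\Span\{p_s : s \in S_j\}$ for at least one $j$, using precisely that $X$ linearly spans $V^*$. The principal technical obstacle is making this dimension count rigorous and uniform across all $I$ and all patterns $\cS$; this is where the linear nondegeneracy of $X$ is essential, and where the argument genuinely goes beyond the classical Reed--Solomon setting of~\cite{lovett2018gmmds,yildiz2019gmmds}. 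Once the intersection identity is in hand, Rado's theorem returns the desired $g_1, \ldots, g_k$, yielding $\GZP(\ell)$ and hence $\MDS(\ell)$ for all $\ell$ via Theorem~\ref{thm:mds-equiv}.
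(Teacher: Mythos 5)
Your reduction to $\GZP(\ell)$ via Theorem~\ref{thm:mds-equiv}, the reformulation through Rado's theorem, and the dualization to the condition $\dim \bigcap_{i \in I} \Span\{\Eval_{\alpha_s} : s \in S_i\} \le k - |I|$ are all fine (this is essentially the classical reduction of Dau et al.\ that the paper also uses, modulo checking separately that the generic polynomial code is MDS). The gap is in the key lemma you propose to prove it with: the ``stronger structural identity'' $\bigcap_{i \in I} \Span\{\Eval_{\alpha_s} : s \in S_i\} = \Span\{\Eval_{\alpha_s} : s \in T_I\}$ is false, even for a fully generic matrix and hence for every code covered by the theorem. Take $k=3$, $n=6$, $S_1=\{1,2\}$, $S_2=\{3,4\}$, $S_3=\{5,6\}$ (a legitimate generic zero pattern) and $I=\{1,2\}$: then $T_I=\emptyset$, but two generic $2$-dimensional subspaces of $\overline{\F}^3$ meet in a line, so the left side has dimension $1$ while the right side is $0$. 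What actually holds, and what you need, is only the dimension bound $\dim(\cap_{i\in I}\cdots) \le k-|I|$, i.e.\ that these intersections attain their \emph{generic} dimensions as given by Theorem~\ref{thm:gid}; but that statement is essentially the theorem itself (compare Corollary~\ref{cor:mds5}), so your reduction has not made progress.

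Moreover, the sketched proof of the (false) identity --- bounding the incidence subvariety of $X^n$ where it fails, arguing that perturbing one $p_s$ along $X$ ``knocks $q$ out'' of some span --- is exactly where the real difficulty sits, and it does not become easier when redirected at the correct dimension bound. The points are constrained to lie on a possibly one-dimensional variety $X$, so naive genericity/dimension counts do not give that all the relevant intersection dimensions simultaneously drop to their generic values; if they did, the original GM-MDS theorem for Reed--Solomon codes would be immediate, whereas it required the delicate inductions of \cite{lovett2018gmmds,yildiz2019gmmds}. The paper confronts precisely this point: it reduces general polynomial codes (and irreducible varieties, via power series) to monomial codes by a leading-term argument, and then runs a GM-MDS-style induction in the new framework of higher order MDS codes with a basis (Sections~\ref{sec:mds-basis} and~\ref{sec:algebra}), where the ``repeated roots at $0$'' device is what makes the inductive step (your ``knock out'' step) provable via an explicit extraction of a monomial coefficient from the block determinant of Lemma~\ref{lem:mds5}. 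Your proposal would need either such an inductive mechanism or a genuinely new argument for the dimension bound; as written, it proves nothing beyond the standard reduction.
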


\begin{example} 
  \label{ex:known-gm-mds}
  We can recover all the known examples of GM-MDS theorem using Theorem~\ref{thm:main-1}.
  \begin{enumerate}
    \item Reed-Solomon codes are polynomial codes with $f(x)=(1,x,x^2,\dots,x^{k-1})$.
    \item Gabidulin codes are polynomial codes with $f(x)=(x,x^q,x^{q^2},\dots,x^{q^{k-1}})$ where $|\F|=q^m$ for some $m$.
    \item Linearized Reed-Solomon~\change{\cite{liu2015Construction}} or Skew polynomial codes~\change{\cite{martinez-penas2018Skew}} are polynomial codes with $$f(x,y)=(x,x^qy,x^{q^2}y^{1+q},\dots,x^{q^{k-1}}y^{1+q+\dots+q^{k-2}})$$ where $|\F|=q^m$ for some $m$\change{, where we pick evaluation points $(\alpha_1, \beta_1), \hdots, (\alpha_n, \beta_n) \in \F^2$. The GM-MDS theorem in~\cite{liu2023linearized} allows for some choices of the $\alpha_i$'s to be equal, but the choices of the $\beta_i$'s are distinct, so apriori it is not clear whether Theorem~\ref{thm:main-1} implies such a GM-MDS result. However, let $\gamma_i := \beta_i / \alpha_i^{q-1}$ and observe that
    \[
        f(\alpha_i, \beta_i) = \alpha_i (1, \gamma_i, \gamma_i^{q+1}, \hdots, \gamma_i^{1+q+\cdots +q^{k-2}}).
    \]
    The choice of scalar $\alpha_i$ does not affect the GM-MDS property, so the GM-MDS in this case follows by applying Theorem~\ref{thm:main-1} to the curve $g(z) = (1, z, \hdots, z^{1+q+\cdots+q^{k-2}})$. In general, linearized Reed-Solomon codes can having shared values among both the $\alpha_i$'s and the $\beta_i$'s. However, our GM-MDS theorem does not apply in such cases. 
    }
  \end{enumerate}
\end{example}

If a polynomial code $f=(f_1,\hdots,f_k)$ is composed of degree at most $d$ polynomials then it is not hard to show that any particular order $\ell$ pattern can be achieved by $[n,k]$-code over a field of size $d(\ell-1)+1$ (this does not include the condition that the code is $\MDS$). \change{This is established by Lemma~\ref{lem:mds5} stated in Section~\ref{sec:prelim} to reduce this algebraic condition to verifying that a particular symbolic determinant is nonzero for some choice of evaluation points. More precisely, the individual degree of each variable is at most $d(\ell-1)$.} As mentioned earlier, for Reed-Solomon codes an $\MDS$ code which satisfies a particular zero pattern can be constructed over a field of size $q\ge n+k-1$ (this is because one could simplify the determinant in Lemma~\ref{lem:mds5} to get better individual degree bounds). In a follow up paper~\cite{bdg2023b}, starting from a generalization of these theorems \change{we show} how to construct codes which satisfy `relaxed' generic zero patterns over a field of constant size (see end of this section).

All known examples of codes satisfying GM-MDS theorem such as Reed-Solomon, Gabidulin, Linearized Reed-Solomon are self-duality property, i.e., their dual codes are also of the same kind. For example, the dual of a Reed-Solomon code is a again a (generalized) Reed-Solomon code. However, this is not true for most polynomial codes. And since a code $C$ being $\LDMDS(\le L)$ (i.e., optimally list decodable) is equivalent to the dual code $C^\perp$ being $\MDS(L+1)$, it is important to know whether duals of generic polynomials codes are also higher order MDS and hence optimally list-decodable.
\begin{theorem}[(Informal) Generalized GM-MDS Theorem for Dual Polynomial Codes]\label{thm:main-1-dual}
  Duals of generic polynomial codes are higher order MDS, i.e., is $\MDS(\ell)$ for all $\ell$. As a result, generic polynomial codes and their duals are $\MDS(\ell)$ for all $\ell$ and $\LDMDS(\le L)$ for all $L$.
  \end{theorem}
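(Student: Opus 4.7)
The plan is to reduce the claim to a symbolic non-vanishing statement analogous to the one underlying the primal GM-MDS theorem. By Theorem~\ref{thm:mds-equiv}, the assertion that $C_f^\perp$ is $\MDS(\ell)$ is equivalent to $C_f^\perp$ being $\GZP(\ell)$: every generic order-$\le \ell$ zero pattern $\cS = (S_1, \ldots, S_{n-k})$ must be attained by some generator matrix of $C_f^\perp$. Setting $T_i := [n] \setminus S_i$, this is the existence of linearly independent dual codewords $h_1, \ldots, h_{n-k}$ with $\supp(h_i) \subseteq T_i$—equivalently, linear dependences among the primal columns of $G$ with prescribed supports that remain linearly independent when regarded as vectors in $\F^n$.

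It is tempting to deduce this from the primal being MDS via Hall--Rado, but the higher-$\ell$ subtlety appears here: Hall--Rado requires $\dim(\sum_{i \in I} W_i) \ge |I|$ for $W_i := \{c \in C_f^\perp : \supp(c) \subseteq T_i\}$, whereas primal MDS alone only upper-bounds $\dim \sum_{i\in I} W_i$ by $|\bigcup_{i \in I} T_i| - k$. Closing this gap is precisely the $\MDS(\ell)$ content for $C_f^\perp$. My plan is to express this gap-closing as a symbolic condition on the $\alpha_j$'s via Plücker duality ($q_J = \pm p_{[n] \setminus J}$, where $p_I = \det(f_i(\alpha_j))_{i \in [k],\, j \in I}$), and then recognize it as a special case of the general GM-MDS theorem for \emph{algebraic codes} promised in the abstract—codes whose generator columns lie on an irreducible variety not contained in any hyperplane through the origin. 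Concretely, one would construct an irreducible variety $V^\perp \subseteq \overline{\F}^{n-k}$ such that, after per-column rescaling (which does not affect the $\MDS(\ell)$ property), the columns of a generator matrix of $C_f^\perp$ lie on $V^\perp$, and then invoke the algebraic codes theorem. The ``As a result'' clause follows immediately by combining the main claim with Theorem~\ref{thm:main-1} via Theorem~\ref{thm:mds-equiv}(4) applied in both directions $C_f \leftrightarrow C_f^\perp$.

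The main obstacle I foresee is this realization step: unlike the primal, where the $j$-th column of $G$ depends only on $\alpha_j$, a column of the parity-check matrix depends on all of $\alpha_1, \ldots, \alpha_n$ simultaneously, so exhibiting $C_f^\perp$ as a code on a variety in the usual sense is non-obvious. One workaround is to enlarge the parameter space so that each dual column is parametrized separately via a rational map, and then verify both irreducibility of $V^\perp$ and its non-containment in a hyperplane through the origin (a nontrivial such linear identity among all possible dual columns would force a forbidden algebraic identity among the primal Plücker coordinates, contradicting genericity of the $\alpha_j$). Absent a clean such realization, the fallback is to redo the inductive symbolic-determinant argument of the primal GM-MDS proof directly on the dual side, using the Plücker-dual reformulation above; this is more technical but avoids introducing a new geometric framework.
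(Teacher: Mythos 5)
There is a genuine gap in your main route. Theorem~\ref{thm:main-var} (and its power-series incarnation) only applies when the $n$ columns of the generator matrix are \emph{independent} generic points of the variety, i.e., when the tuple of columns is a generic point of $X^n$; the non-vanishing statements it yields are generic non-vanishing on $X^n$. The columns of a parity check matrix of $C_f$ are not of this form: every dual column is a function of \emph{all} of $\alpha_1,\hdots,\alpha_n$, so the tuple of dual columns sweeps out only a low-dimensional subvariety of $(V^\perp)^n$, a highly non-generic locus. Your proposed workaround---enlarging the parameter space so that each dual column is parametrized separately---changes the code being analyzed: proving that a code with independently generic columns on $V^\perp$ is $\MDS(\ell)$ says nothing about the correlated specialization that is the actual dual code, since generic non-vanishing on $(V^\perp)^n$ does not persist under restriction to a proper subvariety. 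This is exactly the obstruction the paper flags ("the entries of $H$ depend on the values of many of the $\alpha_i$'s''), and it cannot be waved away by checking irreducibility and non-containment in a hyperplane for $V^\perp$.

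Your fallback ("redo the inductive symbolic-determinant argument on the dual side via Pl\"ucker duality'') is where the real content would have to live, but as stated it supplies no inductive mechanism: the repeated-roots-at-zero trick of the primal proof has no obvious Pl\"ucker-dual counterpart, and the identity $q_J = \pm p_{[n]\setminus J}$ by itself does not organize the induction. The paper's actual resolution is a new structural idea: it shows that if $(G^{T}, I_n)$ is $(k,n,n)$-$\MDSb(\ell)$ then $C^{\perp}$ is $\MDS(\ell)$ (Theorem~\ref{thm:gm-mdsb-dual}, proved via the MR tensor-code rank characterization in Proposition~\ref{prop:mds-dual}), and then proves $(V^{T}, I_n)$ is $\MDSb(\ell)$ for monomial codes by a GM-MDS-style induction that \emph{removes} the basis vectors (decreases the $\sigma_i$'s), with the base case trivialized by $I_n$ being the standard basis (Theorem~\ref{thm:tr-gm-mds}); the general polynomial/variety case then follows by the monomial reduction. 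Your "As a result'' clause is fine given the main claim, but the main claim itself is not established by your argument.
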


We can generalize Theorems~\ref{thm:main-1} and \ref{thm:main-1-dual} even further. Instead of specifying an explicit parameterization for $f$, we \change{specify that} the columns of our code \change{are} solutions to \change{a} set of polynomial equations (or, more formally an \emph{irreducible variety}).

\begin{theorem}[(Informal) Generalized GM-MDS for irreducible varieties]
  \label{thm:main-var}
  Let $X\subset \overline{\F}^k$ be an irreducible variety which is not contained in any hyperplane. Let $v_1,v_2,\dots,v_n$ be generic points from $X$. Then the $(n,k)$-code $C$ with generator matrix $G=[v_1,v_2,\dots,v_k]$ as well as the dual code $C^\perp$ are $\MDS(\ell)$ for all $\ell$.
\end{theorem}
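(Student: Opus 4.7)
The plan is to reduce Theorem~\ref{thm:main-var} to Theorems~\ref{thm:main-1} and~\ref{thm:main-1-dual} by abstracting away from the polynomial-ring structure in the proof of the polynomial-code case and then instantiating in the coordinate ring $\overline{\F}[X]$ of the variety itself.

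First, by Theorem~\ref{thm:mds-equiv} it suffices to verify $\GZP(\ell)$ for $C$ and $C^\perp$. By Lemma~\ref{lem:mds5}, attaining a generic zero pattern reduces to the nonvanishing of a single symbolic determinant $D$, which is a polynomial in the coordinates of $v_1, \ldots, v_n$. Since $X$ is irreducible over the algebraically closed field $\overline{\F}$, the product $X^n$ is irreducible as well, so nonvanishing of $D$ at generic $v_1, \ldots, v_n \in X^n$ is equivalent to $D$ being nonzero in the coordinate ring $\overline{\F}[X]^{\otimes n}$, with tensor product taken over $\overline{\F}$.

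Second, the coordinate functions $f_i := x_i|_X \in \overline{\F}[X]$ for $i=1,\ldots,k$ are linearly independent over $\overline{\F}$: a relation $\sum_i c_i f_i = 0$ in $\overline{\F}[X]$ says exactly that $X$ lies in the hyperplane $\sum_i c_i x_i = 0$, contradicting the hypothesis. Hence the generator matrix $G = [f_i(v_j)]_{i,j}$ looks like a polynomial code, except that the evaluation functions now live in $\overline{\F}[X]$ rather than in a polynomial ring. The main step is then to apply an abstract version of Theorems~\ref{thm:main-1} and~\ref{thm:main-1-dual} whose hypotheses are only: an $\overline{\F}$-algebra $A$ that is an integral domain with $A^{\otimes n}$ also a domain, together with a $k$-tuple $f_1, \ldots, f_k \in A$ that is linearly independent over $\overline{\F}$. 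All three hold for $A = \overline{\F}[X]$ because $X$ and $X^n$ are irreducible over $\overline{\F}$.

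The main obstacle will be to verify that the proofs of Theorems~\ref{thm:main-1} and~\ref{thm:main-1-dual} truly go through under these abstract hypotheses---in particular, that no step relies on features special to polynomial rings such as degree bounds, substitution of specific numerical values outside $X$, or division-with-remainder in a way that cannot be rewritten purely in terms of linear independence and integrality. Should such a step resist abstraction, a fallback is to use Noether normalization to exhibit $X$ as a finite cover of $\overline{\F}^d$, clear denominators in the primitive-element presentation of $\overline{\F}(X)$ over $\overline{\F}(t_1,\ldots,t_d)$, and lift the determinantal question to a birational polynomial model of $X$; this would require inductively handling the resulting hypersurface constraint but still reduces to the polynomial-code setting once the ambient ring is known to be a domain.
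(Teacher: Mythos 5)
Your opening moves match the paper's: irreducibility of $X^n$ lets you trade ``generically nonvanishing on $X^n$'' for ``nonzero element of a domain,'' and ``not contained in a hyperplane through the origin'' is precisely linear independence of the coordinate functions (cf.\ Theorem~\ref{thm:MDS2VarChar}). The problem is the step you defer: the ``abstract version of Theorems~\ref{thm:main-1} and~\ref{thm:main-1-dual}'' whose only hypotheses are that $A$ and $A^{\otimes n}$ are domains and that $f_1,\dots,f_k$ are linearly independent is not a lemma you can cite or lightly verify --- it is essentially a restatement of Theorem~\ref{thm:main-var} itself (replace $A$ by the subalgebra generated by $f_1,\dots,f_k$, i.e.\ the coordinate ring of an irreducible variety not in a hyperplane through the origin). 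And the existing proofs do \emph{not} go through under those hypotheses alone: the inductions of Sections~\ref{sec:algebra} and~\ref{sec:duality} (Case 3 in each) work by extracting the coefficient of a specific power $\alpha_{j^*}^{e}$ of one evaluation variable, and the reduction from general polynomial codes to monomial codes (Claim~\ref{cl:homogPower}) uses leading terms together with a $\beta\to 0$ degeneration; all of this needs a grading or filtration on the evaluation functions, which abstract elements of a domain do not carry. Your fallback via Noether normalization does not repair this: in a birational polynomial model the evaluation points satisfy an algebraic (hypersurface) relation, whereas Theorems~\ref{thm:main-1} and~\ref{thm:main-1-dual} require algebraically independent evaluation points, so ``handling the resulting hypersurface constraint'' is exactly the original problem, not a smaller one.

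The missing ideas are precisely what the paper supplies in Section~\ref{sec:irr-var}. First, Lemma~\ref{lem-powerS}: the local ring of $X$ at a (generically existing) smooth point injects into a power series ring $\F[[z_1,\dots,z_d]]$, so generic nonvanishing on $X^n$ becomes nonvanishing of a power series in $n$ disjoint groups of variables; this is the device that converts your abstract domain into a ring with a monomial filtration. Second, linear independence then produces leading coefficient vectors (Lemma~\ref{lem-indepColPower}), and after a change of basis the degeneration argument of Claim~\ref{cl:homogPower} reduces everything to a monomial code, then to a \emph{univariate} monomial code, where the $\MDSb$ inductions of Theorems~\ref{thm:gm-mds-primal} and~\ref{thm:gm-mds-dual} actually apply. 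Finally, your proposal is silent on why the dual statement reduces to a determinant that is polynomial in the $v_j$'s at all: since the dual of such a code is not again of the same form, the paper needs the separate duality mechanism (Proposition~\ref{prop:mds-dual} and Theorem~\ref{thm:gm-mdsb-dual}, i.e.\ that $(V^{T},I_n)$ being $\MDSb(\ell)$ forces every parity check matrix to be $\MDS(\ell)$) before any variety-side argument can be run. Without these ingredients the proposal records the correct translation of the hypotheses but leaves the substantive content of the theorem unproved.
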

Theorems~\ref{thm:main-1} and \ref{thm:main-1-dual} are \change{special cases} of this theorem where the variety is $X=\{f(x):x\in \overline{\F}^r\}$. Note that any variety which can be parametrized is irreducible and the linear \change{i}ndependence of $f_1,f_2,\dots,f_k$ implies that $X$ is not contained in any hyperplane.

In a follow up work~\cite{bdg2023b}, Theorem~\ref{thm:main-var} is crucially used to prove that randomly punctured Algebraic-Geometric codes achieve list decoding capacity over \change{constant-sized fields}.

\subsection{Technical Overview}

We now outline how each of our main results are proved. First, it is possible to reduce the GM-MDS theorem for arbitrary irreducible varieties to the case of monomial codes. A monomial code is a polynomial code as in Definition~\ref{def:polycode} where each $f_i$ is a monomial. We first describe an overview of the proof for monomial codes and then discuss the reduction.

\subsubsection{Generalized GM-MDS Theorem for Monomial Codes}

We prove Theorem~\ref{thm:main-1} by an inductive approach similar to how the original GM-MDS theorem was proved \cite{lovett2018gmmds,yildiz2019gmmds}. We first recall how the GM-MDS theorem is proved.

The work of \cite{dau2014gmmds,dau2014simple} showed that the GM-MDS theorem is equivalent to the following fact. Let $A_1, \hdots, A_k \subseteq [n]$ be sets of size $k-1$, such that for any $J \subseteq [k]$ nonempty, we have that $|\bigcap_{j \in J} A_j| \le k - |J|$. It suffices to show for any sufficiently large field $\F$, (i.e., $|\F| \ge n+k-1$), there exists distinct $\alpha_1, \hdots, \alpha_n \in \F$ such that the polynomials $g_i(x) := \prod_{j \in A_i} (x - \alpha_j)$ are linearly independent over $\F$.

The works \cite{lovett2018gmmds,yildiz2019gmmds} prove this linear independence by inductively proving a generalized statement where they allow each polynomial $g_i$ to have $0$ as a repeated root.\footnote{The natural question of allowing each $g_i$ to have multiple repeated roots is in general open, see \cite{greaves2019reed}. One important observation is that the answer to the problem depends on the characteristic of the field, so it cannot have as simple of an answer as the GM-MDS theorem.} In particular, for each $i \in [k]$, they keep track of a set $A_i \subseteq [n]$ and $\sigma_i \ge 0$ such that $\sigma_i + |A_i| = k-1$. They inductively show that the polynomials $g_{i}(x) := x^{\sigma_i} \prod_{j \in A_i} (x - \alpha_j)$ are linearly dependent over $\F$ if and only if for all $J \subseteq [k]$ nonempty, we have that \begin{align}
\min_{j \in J} \sigma_j + \left|\bigcap_{j \in J} A_j\right| \le k-|J|. \label{eq:hall-intro}
\end{align}
Outside of simple base cases, the inductive proof splits into three cases (per the convention used by \cite{yildiz2019gmmds}):

\paragraph{Case 1.} There is $J \subseteq [k]$ with $2 \le |J| \le k-1$ for which (\ref{eq:hall-intro}) is tight. Define $g_J(x) := x^{\min_{j \in J} \sigma_j} \prod_{i \in \bigcap_{j \in J} A_j} (x - \alpha_i)$. The authors use the inductive hypothesis twice: first they argue that $\Span \{g_i : i \in J\} = \Span \{x^i g_J(x) : i \in \{0, 1, \hdots, |J|-1\}\}$. They then apply a second use of the induction hypothesis is to argue that $\{x^i g_J(x) : i \in \{0, 1, \hdots, |J|-1\}\} \cup \{g_j(x) : j \in [k] \setminus J\}$ is linearly independent.

\paragraph{Case 2.} There is exactly one $i \in [\ell]$ for which $\sigma_i = 0$. Note that at least one $\sigma_i = 0$ or else (\ref{eq:hall-intro}) fails for $J = [k]$. Since every $g_j(x)$ with $j \neq i$ is a multiple of $x$, it suffices to argue that $\{g_j(x) / x : j \in [k] \setminus i\}$ are linearly independent over $\F$, which follows from the inductive hypothesis.

\paragraph{Case 3.} Any other situation. In this case the authors identify an index $j \in [n]$ such that after assigning $\alpha_j \to 0$--that is, if $j \in A_i$, remove $j$ from $A_i$ and increase $\sigma_i$ by $1$--we have that (\ref{eq:hall-intro}) still holds. After applying the inductive hypothesis, they show algebraically that undoing this operation preserves the linear independence of the $g_i$.

\paragraph{\change{From Polynomials to Vectors: Higher Order MDS Codes with a Basis.}}  \change{Our first main technical contribution is translate the techniques used by \cite{lovett2018gmmds,yildiz2019gmmds} from the language of polynomials to that of vectors. A portion of this translation appears in the \emph{higher order MDS} framework of \cite{bgm2021mds,roth2021higher,bgm2022}. In particular, Theorem~\ref{thm:mds-equiv} of \cite{bgm2022} shows that proving a variant of the GM-MDS theorem is equivalent to the following. Let $C$ be our polynomial code with generator matrix $G$ as in (\ref{eq:polycode_generator}). Consider any $A_1, \hdots, A_k \subseteq [n]$ of size $k-1$ such that for all $J \subseteq [k]$, we have that $|\bigcap_{j \in J} A_j| \le k - |J|$. We seek to show that $G_{A_1} \cap \cdots \cap G_{A_k} = 0$. In the case for which $G$ is the generator matrix of a Reed--Solomon code, the connection is that $G_{A_i}$ is the dual of the space of polynomials $f$ of degree at most $k$ which have roots at $\alpha_j$ for all $j \in A_i$.
}

\change{To generalize the GM-MDS theorem to Theorem~\ref{thm:main-1}, it suffices to understand the structure of $G_{A_1} \cap \cdots \cap G_{A_k}$, where $G$ is an arbitrary (generic) polynomial code. However, the current framework is insufficient for encoding an induction analogous to that of \cite{lovett2018gmmds,yildiz2019gmmds}. In particular, it is not clear how to encode polynomials which have $0$ as root with multiplicity $\sigma_i$ in this framework. We do this by introducing the concept of \emph{higher order MDS codes with a basis}. Here, we define a basis to be an ordered set of vectors $u_1, \hdots, u_k \in \F^k$. With this basis, we can interpret "a zero with multiplicity $\sigma_i$'' as adding to the vector space $G_{A_i}$ the first $\sigma_i$ vectors from $u_1, \hdots, u_k$. We define the concept more precisely as follows.}

\begin{definition}\label{def:MDSb}
  Consider $n,k,b \ge 1$ with $b \le k$. Let $U \in \F^{k \times b}$
  and $V \in \F^{k \times n}$ be matrices. \change{We call $U$ the \emph{basis} matrix (with the columns of $U$ the basis vectors) and $V$ the \emph{generator} matrix.} We say that $(U,V)$ is a
  $(b,n,k)$-\emph{order-$\ell$ higher order MDS code with a basis}, or
  $(b,n,k)$-$\MDSb(\ell)$, if \change{for all $\sigma_1, \hdots, \sigma_\ell \in \{0,1,\hdots, b\}$ and $A_1, \hdots, A_\ell \subseteq [n]$,} we have that
  \begin{align}
    \dim \left(\bigcap_{i=1}^\ell (U_{[\sigma_i]} + V_{A_i})\right) = \dim \left(\bigcap_{i=1}^\ell (W_{[\sigma_i]} + W'_{A_i})\right), \label{eq:mds-basis'}
  \end{align}
  where $W \in \F^{k \times b}$ and $W' \in \F^{k \times n}$ are generic matrices \change{algebraically independent of each other}.
\end{definition}

The generalized statement we prove is that that for any code $C$ with a generator matrix like (\ref{eq:polycode_generator}), there is a matrix $U \in \F^{k \times k}$ such that $(U, G)$ is \change{$(k,n,k)$-}$\MDSb(\ell)$ for all $\ell \ge 2$\change{. Note} by invoking (\ref{eq:mds-basis'}) in the case that each $\sigma_i = 0$, we get that $G$ is $\MDS(\ell)$.  Once a suitable basis is identified for our code $C$, the strategy of \cite{lovett2018gmmds,yildiz2019gmmds}, can be followed; although it takes some effort to show that conditions like (\ref{eq:mds-basis'}) behave analogous to results on repeated roots of polynomials--this groundwork is laid in Section~\ref{sec:mds-basis}.

In the case of Reed-Solomon codes, we can set $U$ to be the identity matrix; that is, the appropriate choice of basis is just the standard basis vectors $e_1, \hdots, e_k \in \F^k$. In general for polynomial codes, the choice of basis depends on the degrees of the nonzero coefficients of $f_1, \hdots, f_k$. However, for ``monomial codes'', where each $f_i(x) = x^{d_i}$, for some integers $d_1 < d_2 < \cdots < d_k$, we can again use the standard basis. In Section~\ref{sec:algebra} we give all the details of the proof for monomial codes, and in Section~\ref{sec:irr-var}, we give a reduction from polynomial codes to monomial codes (see also Section~\ref{subsub:irr-var}).

\subsubsection{Generalized GM-MDS Theorem for Duals of Monomial Codes} 

As explained earlier, except in special cases, a generic polynomial code $C$ does not necessarily have that the property that its dual code $C^{\perp}$ is a generic polynomial code. Therefore, we cannot use the (exact) same approach to prove that $C^{\perp}$ is $\MDS(\ell)$ (and thus $C$ is $\LDMDS(\le \ell-1))$ for any generic polynomial code $C$.

An approach which one might take is to directly compute a parity check matrix $H$ of $C$ in terms of the entries of the generator matrix $G$ from (\ref{eq:polycode_generator}). However, the entries of $H$ depend on the values of many of the $\alpha_i$'s and thus make it nearly impossible to directly show that $H$ is $\MDS(\ell)$, even with a suitable basis.

Instead, we show a surprising connection between higher order MDS codes with a basis and dual codes. In particular, we show that if $(G^{T}, I_n)$ is \change{$(k,n,n)$-}$\MDSb(\ell)$, then $C^{\perp}$ is $\MDS(\ell)$ (see Theorem~\ref{thm:gm-mdsb-dual}). In other words, if we think of the \emph{rows} of $G$ as a (partial) basis of $\F^n$, it suffices to show that the criteria of (\ref{eq:mds-basis'}) hold when the ``main'' part of the code is the standard basis of $\F^n$.

Like in the case of Theorem~\ref{thm:main-1}, we can assume  for the proof of Theorem~\ref{thm:main-1-dual} that the generator matrix $G$ of the code $C$ consists of monomials. In this case, we can prove that $(G^{T}, I_n)$ is \change{$(k,n,n)$-}$\MDSb(\ell)$ using a GM-MDS-style induction (see Theorem~\ref{thm:tr-gm-mds}). A crucial detail in the proof is that since the algebraic complexity of analyzing statements like (\ref{eq:mds-basis'}) comes from $G^{T}$, the inductive proof \emph{removes} ``repeated roots'' rather than adding them. In particular, the base case in which each $\sigma_i = 0$ is easy to analyze due to $I_n$ being the standard basis. Once the order of induction is properly chosen, the remainder of the proof of Theorem~\ref{thm:main-1-dual} is similar to that of Theorem~\ref{thm:main-1}.

\subsubsection{Extension to Polynomial Codes and Irreducible Varieties}\label{subsub:irr-var}

So far, we have only explained the proofs of Theorem~\ref{thm:main-1} and Theorem~\ref{thm:main-1-dual} for monomial codes, where each polynomial of (\ref{eq:polycode_generator}) has a single nonzero coefficient. To extend this result to general polynomials, we observe that the higher order MDS conditions are equivalent to certain determinants based on the columns of the generator matrix $G$ being nonzero (see Lemma~\ref{lem:mds5}). We show that once a suitable change of basis is performed to $G$, a leading monomial in the determinant expansion corresponds to the analogous determinant where each polynomial of $G$ is replaced by its leading monomial. This only works when the column of the generator matrix comes from the same polynomial family. As such, the GM-MDS theorems we proved for monomial codes also apply to general polynomial codes.

We also extend these results to general irreducible varieties. In this case, we use the fact that any irreducible variety can be (generically) parameterized by a \emph{power series}, essentially a polynomial where there are infinitely many terms. These power series can be analyzed essentially like polynomials, in particular they can be reduced to analyzing a suitable monomial code. As such, Theorem~\ref{thm:main-var} follows from the proofs of Theorem~\ref{thm:main-1} and Theorem~\ref{thm:main-1-dual}.

\change{We also note that one could directly adapt the GM-MDS theorems' proof for power series without reducing to a univariate monomial code as one only needs the property that power series are linearly independent. In this paper we present the proof of these theorems after the monomial reduction to keep the exposition of the generalized GM-MDS theorems simple so that the main ideas are clearly illustrated.}

\subsection*{Organization}

In Section~\ref{sec:prelim}, we define our notation and state some known facts about higher order MDS codes. In Section~\ref{sec:mds-basis}, we formally define a Higher Order MDS code with a basis and use this to build an inductive framework for proving GM-MDS-type theorems. In Section~\ref{sec:algebra}, we prove our GM-MDS theorem for monomial codes. In Section~\ref{sec:duality}, we prove our GM-MDS theorem for duals of monomial codes. In Section~\ref{sec:irr-var}, we show the results of Sections~\ref{sec:algebra} and~\ref{sec:duality} extend to polynomial codes and irreducible varieties.  \change{In Section~\ref{sec:openquestions}, we conclude the paper with some open questions.} In Appendix~\ref{app}, we include proofs omitted from the main paper.

\change{\subsection*{Acknowledgements}

We thank anonymous referees for numerous helpful comments which assisted in greatly improving the quality of this manuscript.}

\section{Preliminaries}\label{sec:prelim}

In this section, we go over some technical background material not explained in the introduction.

\subsection{Notation}

In this paper we let $\F$ denote an arbitrary field. Although applications to coding theory require $\F$ to be finite, our results generally apply to any field (of any characteristic). 

A \emph{linear code} is a subspace $C \subseteq \F^n$, where $n$ is number of symbols of the code. If $k$ is the dimension of $n$, then we say that $C$ is a $(n,k)$-code. We call points $x \in C$ \emph{codewords} of $C$. The \emph{Hamming weight} of a codeword $c \in C$ is the number of nonzero symbols which we denote by $\wt(c)$. A \emph{generator matrix} $G \in \F^{k \times n}$ of a linear code $C \subseteq \F^n$ is any matrix whose rows form a basis of $C$.

The \emph{dual code} of $C$ is the linear code $C^{\perp} := \{y \in \F^n \mid \forall x \in C, x_1y_1 + \cdots + x_ny_n = 0\}$. We say that a generator matrix of the dual code $C^{\perp}$ is a \emph{parity check matrix} of the original code $C$. 

We say that $C$ is \emph{maximum distance separable (MDS)} if a generator matrix $G \in \F^{k \times n}$ of $C$ has every $k$ columns linearly independent. This is equivalent to the more traditional definition of MDS that every nonzero codeword of $C$ has Hamming weight at least $n-k+1$.

Given a matrix $G \in \F^{k \times n}$, for $i \in [n] := \{1,2, \hdots, n\}$, we let $G_i$ denote the $i$th column of $G$. For a subset $A \subseteq [n]$, we let $G|_{A}$ denote the submatrix of $G$ with columns from $A$. We also let $G_A := \Span \{G_i \mid i \in A\}$. Note that some previous works like \cite{bgm2021mds} use $G|_{A}$ and $G_A$ interchangeably.  In this paper, we often use a code and its generator matrix interchangeable if the property of the matrix is independent of the choice of generator matrix. For example, the rank of $G|_{A}$ is such a property.

If we have subsets $S \subseteq [k]$ and $T \subseteq [n]$, we let $G|_{S \times T}$ denote the submatrix with entries $(i,j) \in S \times T$.

Given two matrices $U$ and $V$ with the same number of rows, we let $[U\mid V]$ denote the matrix with the columns of $U$ and $V$ concatenated. We let $I_k$ denote the $k \times k$ identity matrix. If $k$ is implicit from context, we write just $I$ (e.g., $[I \mid V]$).

Given a field $\F$ and parameters $(n,k)$, we define a \emph{generic $(n,k)$-matrix} as follows. Let $\F_{\ex}$ be an extension of $\F$ with $kn$ transcendental\footnote{That is, there is no nonzero polynomial equation in the $w_{i,j}$'s with coefficients in $\F$ which evaluates to $0$. In other words, the $w_{i,j}$'s can be thought of as ``symbolic'' variables.} variables $\{w_{i,j} : (i,j) \in [k] \times [n]\}$.  Our generic matrix is $W \in \F_{\ex}^{k \times n}$ where $W_{i,j} = w_{i,j}$ for all $(i,j) \in [k] \times [n]$. Given a polynomial code such as in (\ref{eq:polycode_generator}), we say that it is a \emph{generic polynomial code} if $\alpha_1, \hdots, \alpha_n$ are algebraically independent, transcendental variables in a suitable extension of $\F$.

\subsection{Properties of Higher Order MDS Codes}

We now state a few properties of higher order MDS codes that did not appear in the introduction. \change{First, we quote a result by \cite{bgm2022} on the intersections of vector spaces. More precisely, given a generic $k\times n$ matrix $W$ and sets $A_1, \hdots, A_\ell \subseteq [n]$, what is the dimension of $W_{A_1} \cap \cdots \cap W_{A_\ell}$? It turns out that there is an explicit combinatorial formula.}

\change{\begin{theorem}[Generic intersection dimension formula, Theorem~1.16~\cite{bgm2022}]\label{thm:gid}
Let $W \in \F^{k \times n}$ be a generic matrix. Let $A_1, \hdots, A_\ell \subseteq[n]$ be subsets of size at most $k$. We have that
\begin{align}
    \dim(W_{A_1} \cap \cdots \cap W_{A_\ell}) = \max_{P_1 \sqcup \cdots \sqcup P_s = [\ell]}\left[-(s-1)k + \sum_{i=1}^s |A_{P_i}|\right].\label{eq:gid}
\end{align}
\end{theorem}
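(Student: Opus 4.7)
The plan is to prove the formula by establishing both inequalities separately. The lower bound ($\geq$) is a direct application of Grassmann's inequality to each candidate partition, while the upper bound ($\leq$) is the combinatorial heart of the theorem and is where the genericity of $W$ enters essentially. I would attack the upper bound by induction on $\ell$.

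For the lower bound, fix any partition $[\ell] = P_1 \sqcup \cdots \sqcup P_s$ and write $A_{P_i} := \bigcap_{j \in P_i} A_j$. For each block $P_i$, the set-theoretic containment $A_{P_i} \subseteq A_j$ for every $j \in P_i$ lifts to the subspace inclusion
\[
W_{A_{P_i}} \;\subseteq\; \bigcap_{j \in P_i} W_{A_j},
\]
since any column indexed by an element of $A_{P_i}$ lies in every $W_{A_j}$ with $j \in P_i$. Genericity of $W$ gives $\dim W_{A_{P_i}} = |A_{P_i}|$. Applying the standard Grassmann bound $\dim \bigcap_{i=1}^s U_i \geq \sum_{i=1}^s \dim U_i - (s-1)k$ (valid for any $s$ subspaces of $\F^k$) to the subspaces $U_i := \bigcap_{j \in P_i} W_{A_j}$ yields
\[
\dim \bigcap_{j=1}^{\ell} W_{A_j} \;=\; \dim \bigcap_{i=1}^s U_i \;\geq\; \sum_{i=1}^s |A_{P_i}| - (s-1) k,
\]
and maximizing over partitions gives the $\geq$ direction.

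For the upper bound I would induct on $\ell$. The base case $\ell=1$ is trivial. For the inductive step, I would apply Grassmann's identity $\dim(V \cap W_{A_\ell}) = \dim V + \dim W_{A_\ell} - \dim(V + W_{A_\ell})$ with $V := \bigcap_{i=1}^{\ell-1} W_{A_i}$ and invoke the inductive hypothesis on $\dim V$. The substantive work is to evaluate $\dim(V + W_{A_\ell})$ tightly for generic $W$. I would prove, in parallel with the main statement and by the same induction, a companion max-formula expressing that this sum attains the largest dimension compatible with the forced containments arising from columns shared between $A_\ell$ and the earlier $A_i$'s. Translating the resulting bound back via Grassmann's identity should then match the right-hand side of (\ref{eq:gid}), with the optimal partition of $[\ell]$ obtained by merging or splitting $\{\ell\}$ with the optimal partition of $[\ell-1]$.

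The main obstacle is verifying tightness of the Grassmann lower bound for generic $W$: one must rule out any accidental intersection beyond what is forced by common indices or by sheer dimension counting. A potentially cleaner alternative is via duality, using $\dim \bigcap W_{A_i} = k - \dim \sum W_{A_i}^\perp$ together with the clean identity $W_A^\perp \cap W_B^\perp = W_{A \cup B}^\perp$. This transforms the problem into computing the generic dimension of a sum of dual subspaces whose pairwise intersections are governed by the unions of $A_i$'s, and the corresponding dual min-formula dualizes via $k - \min = \max$ into the right-hand side of (\ref{eq:gid}). Whichever route one takes, the nontrivial step is a careful combinatorial bookkeeping of how the $A_i$'s constrain the generic subspace configuration.
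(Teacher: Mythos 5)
Your first paragraph (the ``$\ge$'' direction) is correct, and it is in fact the only argument the present paper gives: Theorem~\ref{thm:gid} is not proved here at all, it is quoted from \cite{bgm2022} (Theorem~1.16 there), and the paper's accompanying discussion is exactly your observation that $W_{A_{P_i}} \subseteq \bigcap_{j \in P_i} W_{A_j}$ has dimension $|A_{P_i}|$ by genericity, so the codimension (Grassmann) bound gives the lower bound for every partition.

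The gap is the ``$\le$'' direction, which is the entire content of the theorem; as the paper itself says, the surprising part is that these simple bounds are tight. Your proposal does not prove this, it only announces a plan, and the plan does not close. In the induction on $\ell$ via $\dim(V \cap W_{A_\ell}) = \dim V + \dim W_{A_\ell} - \dim(V + W_{A_\ell})$ with $V = \bigcap_{i < \ell} W_{A_i}$, knowing $\dim V$ from the inductive hypothesis does not determine $\dim(V + W_{A_\ell})$: one needs to know how the subspace $V$ (an intersection of generic column spans, not itself a coordinate span) sits relative to $W_{A_\ell}$ beyond the forced containment coming from columns in $A_\ell \cap A_i$, and that is precisely the genericity statement to be proved; the ``companion max-formula'' you invoke is never stated, and it is not clear it can be carried by the same induction, especially since the optimal partition of $[\ell]$ need not arise by merging or splitting $\{\ell\}$ into an optimal partition of $[\ell-1]$. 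The duality route has the same problem in dual clothing: $W_A^\perp \cap W_B^\perp = W_{A \cup B}^\perp$ is true, but for $\ell \ge 3$ the generic dimension of $\sum_i W_{A_i}^\perp$ is not determined by any simple inclusion--exclusion bookkeeping and is equivalent in difficulty to the original question. The actual proof in \cite{bgm2022} requires genuinely more machinery, some of which this paper imports: the LP-type duality between the partition bounds and dual variables $\delta_1, \hdots, \delta_\ell$ (Theorem~\ref{thm:mds6}, which the paper explicitly describes as part of the machinery used to prove Theorem~\ref{thm:gid}), together with determinantal/rank criteria in the spirit of Lemma~\ref{lem:mds5} to certify that a generic matrix actually attains the combinatorial bound. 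Your sketch contains no analogue of these steps, so the upper bound remains unproved.
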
}

\change{To appreciate why (\ref{eq:gid}), note that if $V_1, \hdots, V_s \subseteq \F^k$ are vector spaces, then
\[
    k - \dim(V_1 \cap \cdots \cap V_s) \ge \sum_{i=1}^k (k - \dim V_i). 
\]
In particular, if we have a lower bound on the dimension of each $V_i$, we get a lower bound on the dimension of $V_1 \cap \cdots \cap V_s$. Here, we set each $V_i$ to be $\bigcap_{j \in P_i} W_{A_j}$ and note that $\dim(V_i) \ge |A_{P_i}|$, which immediately yields the ``$\ge$'' direction of (\ref{eq:gid}). The surprising contribution of Theorem~\ref{thm:gid} is that these simple bounds are tight.}

\change{However, the full identity (\ref{eq:gid}) is a bit cumbersome to use. Instead, we often only need to understand whether $W_{A_1} \cap \cdots \cap W_{A_\ell} = 0.$ Such an expression can be easily extracted from Theorem~\ref{thm:gid} by defining a combinatorial ``null intersection property.''}

\begin{definition}[\change{Inspired by \cite{bgm2021mds,bgm2022}}]
  We say that $A_1, \hdots, A_\ell \subseteq [n]$ \change{of size at most $k$} have the \emph{\change{(combinatorial)} dimension $k$ null intersection property} if for all partitions $P_1 \cup \cdots \cup P_s = [\ell]$, we have that
  \begin{align}
    \sum_{i=1}^s \left|\bigcap_{j \in P_i} A_j \right| \le (s-1)k.\label{eq:combo-null}
  \end{align}
\end{definition}

\change{With this definition, a simplified version of Theorem~\ref{thm:gid} easily follows.}

\begin{corollary}[\change{Implicit in \cite{bgm2022}}]\label{cor:mds5}
  Let $V$ be a $k \times n$ matrix. Then, $V$ is $(n,k)$-$\MDS(\ell)$
  if and only if for all $A_1, \hdots, A_\ell \subseteq [n]$ with
  $|A_i| \le k$ and $|A_1| + \cdots + |A_\ell| = (\ell-1)k$ with the dimension $k$ null intersection property, we have
  that $V_{A_1} \cap \cdots \cap V_{A_\ell} = 0$.
\end{corollary}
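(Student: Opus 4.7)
For the forward direction, I would apply Theorem~\ref{thm:gid} directly. By $\MDS(\ell)$ we have $\dim(V_{A_1} \cap \cdots \cap V_{A_\ell}) = \dim(W_{A_1} \cap \cdots \cap W_{A_\ell})$, so it suffices to show the RHS equals $0$ under the three listed conditions. The singleton partition $P = (\{1\}, \ldots, \{\ell\})$ contributes $-(\ell-1)k + \sum_i |A_i| = 0$ by the sum hypothesis, while every other partition with $s < \ell$ blocks contributes $-(s-1)k + \sum_i |A_{P_i}| \le 0$ by the null intersection property applied to that partition. Hence the maximum in (\ref{eq:gid}) is exactly $0$, and $V_{A_1} \cap \cdots \cap V_{A_\ell} = \{0\}$.

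For the reverse direction, I assume the null-intersection condition and take arbitrary $B_1, \ldots, B_\ell \subseteq [n]$; the goal is $\dim(V_{B_1} \cap \cdots \cap V_{B_\ell}) = \dim(W_{B_1} \cap \cdots \cap W_{B_\ell}) =: d_W$. A key preliminary observation is that the $\ell = 2$ instance of the hypothesis, applied to any disjoint $A_1, A_2$ with $|A_1| + |A_2| = k$, says $V_{A_1} \cap V_{A_2} = \{0\}$, which is exactly the statement that every $k$ columns of $V$ are linearly independent; so $V$ is classically MDS and $\dim V_A = \min(|A|, k)$ for every $A \subseteq [n]$. With this in hand, the ``$\ge$'' half of the $\MDS(\ell)$ equality follows from the standard inequality $\dim(U_1 \cap \cdots \cap U_s) \ge \sum_i \dim U_i - (s-1)k$ applied to each partition $P$ of $[\ell]$ with $U_j = \bigcap_{i \in P_j} V_{A_i} \supseteq V_{A_{P_j}}$, and then maximized over $P$ to match the formula in Theorem~\ref{thm:gid}.

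The ``$\le$'' half is the substantive content, which I would prove by contrapositive. Supposing $\dim(V_{B_1} \cap \cdots \cap V_{B_\ell}) > d_W$, I would construct $A_1, \ldots, A_\ell$ meeting the three corollary conditions with $V_{A_1} \cap \cdots \cap V_{A_\ell} \neq \{0\}$. Starting from a partition $P$ of $[\ell]$ realizing the maximum in Theorem~\ref{thm:gid} for the $B_i$'s, I would prune and possibly enlarge the $B_i$'s---using classical MDS of $V$ to keep $\dim V_{A_i} = |A_i|$---so that $\sum_i |A_i| = (\ell-1)k$ exactly, every partition bound on $A_1, \ldots, A_\ell$ continues to hold, and a prescribed nonzero vector from the excess in $V_{B_1} \cap \cdots \cap V_{B_\ell}$ survives into $V_{A_1} \cap \cdots \cap V_{A_\ell}$. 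The main obstacle is engineering this trimming/augmentation so that the three combinatorial conditions hold simultaneously while preserving a vector in the intersection; I expect a Hall-style matching argument driven by the optimal partition, combined with classical MDS of $V$ to guarantee that trimmed spaces retain their expected dimension, to carry it through.
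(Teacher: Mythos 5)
Your forward direction is fine and is exactly the paper's argument: by Theorem~\ref{thm:gid}, the generic intersection dimension vanishes precisely when the null intersection property holds. The trouble is in the reverse direction. Your preliminary claim --- that the disjoint-pair conditions $V_{A_1}\cap V_{A_2}=0$ for $|A_1|+|A_2|=k$ are ``exactly the statement that every $k$ columns of $V$ are linearly independent'' --- is false: triviality of the intersection of the spans of two disjoint column sets says nothing about dependencies confined to one of the sets, and in particular a zero column is invisible to every condition of the form $V_{A_1}\cap\cdots\cap V_{A_\ell}=0$. Concretely, let $V$ have its first column equal to $0$ and its remaining columns generic. For any null configuration, deleting the index $1$ from each $A_i$ preserves the null intersection property and leaves only generic columns, so by Theorem~\ref{thm:gid} every intersection demanded by the hypothesis is $0$; yet $V$ is not $\MDS(\ell)$, since $B_1=\cdots=B_\ell=\{1\}$ gives intersection dimension $0$ versus $1$ generically. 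So classical MDS cannot be deduced from the stated hypothesis at all, and both later steps of your plan (the ``$\ge$'' half, which needs $\dim V_B=\min(|B|,k)$, and the pruning in the ``$\le$'' half) rest on it. The statement has to be read as carrying an MDS-type assumption on $V$ along; this is exactly the role of the explicit $\MDSb(1)$ hypothesis in Lemma~\ref{lem:case-1} and of the MDS assumption in \cite{bgm2021mds,bgm2022}, to which the paper's very short proof of Corollary~\ref{cor:mds5} defers (it only records when the right-hand side of (\ref{eq:gid}) is zero and cites Theorem~\ref{thm:gid}). A secondary issue: for $\ell\ge 3$ the hypothesis contains no order-$2$ configurations, so the disjoint-pair instances you invoke would first have to be simulated inside order-$\ell$ null configurations (e.g.\ by padding with size-$k$ sets containing $A_1$), a step you do not address.

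Even granting that $V$ is MDS, the ``$\le$'' half is the real content of the reverse direction, and your proposal leaves it as an expectation (``a Hall-style matching argument \dots to carry it through''). What is required is the contraction/expansion argument: from a tuple $B_1,\hdots,B_\ell$ whose intersection exceeds the generic value, repeatedly delete single elements (each deletion lowers both the actual and the generic dimension by at most one) and then add single elements, guided by a dimension count in $\sum_i h_i\otimes(\cdot)$ for an $(\ell,\ell-1)$-MDS matrix $H$, until an exact-sum null configuration is reached with a nonzero vector surviving in the intersection. This is Lemma~B.5 of \cite{bgm2021mds}, reproved here in generalized form as Lemma~\ref{lem:pad-mdsb} (via Proposition~\ref{lem:mds7} and Proposition~\ref{prop:sub1}) and packaged as Lemma~\ref{lem:case-1}; it is not routine, and without it --- and without repairing the MDS step above --- your write-up does not yet prove the reverse direction.
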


\begin{proof}
The right hand size of (\ref{eq:gid}) equals zero if and only if $-k(s-1) + \sum_{i=1}^s |A_{P_i}| \le 0$ for all partitions $P_1 \sqcup \cdots \sqcup P_s = [\ell]$. This is precisely (\ref{eq:combo-null}). Thus, this result follows immediately from Theorem~\ref{thm:gid}.
\end{proof}

\change{In some of our results, we need to utilize the technical machinery used to prove Theorem~\ref{thm:gid}, including the following ``duality'' result where the sizes of the intersections of the sets $A_1, \hdots, A_\ell$ can be governed by \emph{dual variables} $\delta_1, \hdots, \delta_\ell$ (analogous to those which arise in linear programs).}

\begin{theorem}[Lemma~2.8~\cite{bgm2022}, adapted]\label{thm:mds6}
  Let $A_1, \hdots, A_\ell \subseteq [n]$ be subsets of size at most $k$. Let $d \in \{0, 1, \hdots, k\}$, the following are equivalent
  \begin{itemize}
    \item[(a)] In \change{every} $\MDS(\ell)$ code $V$, we have that $\dim(V_{A_1} \cap \cdots \cap V_{A_\ell}) \le d$.
    \item[(b)] For all partitions $P_1 \sqcup \cdots \sqcup P_s = [\ell]$, we have that
      \[
        \sum_{i=1}^s \left|\bigcap_{j \in P_i} A_j\right| \le (s-1)k + d.
      \]
    \item[(c)] There exist $\delta_1, \hdots, \delta_\ell \ge 0$ with $\delta_1 + \cdots + \delta_\ell = k-d$ such that for all $J \subseteq [n]$ nonempty, we have that
      \[
        \left|\bigcap_{j \in J} A_j\right| \le k - \sum_{j \in J} \delta_j.
      \]
  \end{itemize}
\end{theorem}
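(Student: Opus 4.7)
The plan is to prove the two equivalences (a) $\Leftrightarrow$ (b) and (b) $\Leftrightarrow$ (c) separately. The first will fall out of Theorem~\ref{thm:gid} in one line; the easy direction of the second is a single summation; and the hard direction will require LP duality applied to a submodular polytope whose submodularity I verify by a short pointwise argument.

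For (a) $\Leftrightarrow$ (b): by Definition~\ref{def:MDS-ell}, the intersection dimension in any $\MDS(\ell)$ code $V$ equals the generic value, which Theorem~\ref{thm:gid} expresses as $\max_{P_1\sqcup\cdots\sqcup P_s=[\ell]}\bigl[-(s-1)k+\sum_{i=1}^s|A_{P_i}|\bigr]$. Requiring this maximum to be at most $d$ is literally (b). The direction (c) $\Rightarrow$ (b) is a single summation: given $\delta_1,\dots,\delta_\ell$ as in (c) and any partition $P_1\sqcup\cdots\sqcup P_s=[\ell]$,
\[
\sum_{i=1}^s\Bigl|\bigcap_{j\in P_i}A_j\Bigr|\;\le\;\sum_{i=1}^s\Bigl(k-\sum_{j\in P_i}\delta_j\Bigr)\;=\;sk-\sum_{j=1}^\ell\delta_j\;=\;(s-1)k+d.
\]

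The substantive step is (b) $\Rightarrow$ (c). I would set $\alpha_J:=|\bigcap_{j\in J}A_j|$ and $\beta(J):=k-\alpha_J$ for nonempty $J\subseteq[\ell]$. A pointwise argument shows $\alpha$ is supermodular: for each $x\in[n]$ the indicator $\mathbf 1[J\subseteq S(x)]$, where $S(x):=\{j:x\in A_j\}$, is supermodular in $J$ by a direct case analysis on whether $J\subseteq S(x)$ and $K\subseteq S(x)$, and $\alpha_J$ is the sum of these indicators over $x$. Hence $\beta$ is submodular. Form the polytope $P:=\{\delta\in\mathbb{R}^\ell_{\ge 0}:\delta(J)\le\beta(J)\text{ for every nonempty }J\subseteq[\ell]\}$. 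By LP duality for submodular polytopes (Edmonds' theorem), the maximum of $\sum_j\delta_j$ over $P$ equals the minimum of $\sum_{i=1}^s\beta(P_i)$ over partitions $P_1\sqcup\cdots\sqcup P_s=[\ell]$. Rearranging (b) gives $\sum_i\beta(P_i)=sk-\sum_i\alpha_{P_i}\ge sk-((s-1)k+d)=k-d$ for every partition, so this minimum is at least $k-d$. Hence there exists $\delta\in P$ with $\sum_j\delta_j\ge k-d$; interpolating linearly between this point and $0\in P$ yields a $\delta\in P$ with $\sum_j\delta_j=k-d$, which is (c).

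The chief obstacle is justifying the min--max identity invoked in (b) $\Rightarrow$ (c): the LP dual is a priori a \emph{fractional} covering problem, and one must argue the optimum is attained by an actual partition. For submodular $\beta$ this is standard via uncrossing---if a feasible dual solution puts positive weight on two crossing sets $J,K$, submodularity of $\beta$ permits shifting weight onto $J\cap K$ and $J\cup K$ at no greater cost while preserving the coverage of each coordinate, so iterating produces a laminar and then a partition-supported optimum. Mild care is needed because the constraints in $P$ are imposed only for nonempty $J$ and $\beta$ is not assumed monotone; extending $\beta(\emptyset):=0$ resolves both issues without affecting the relevant optima.
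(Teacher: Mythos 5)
Your handling of (a)$\Leftrightarrow$(b) via Theorem~\ref{thm:gid} and the one-line computation for (c)$\Rightarrow$(b) are correct (for reference, the paper itself does not prove this statement at all but imports it from Lemma~2.8 of \cite{bgm2022}). The genuine gap is in (b)$\Rightarrow$(c), and it sits exactly at the place you flag as the ``chief obstacle.'' First, your claim that $\beta$ with $\beta(\emptyset):=0$ is submodular is false: for \emph{disjoint} $J,K$ submodularity would require $\beta(J)+\beta(K)\ge\beta(J\cup K)$, i.e.\ $|A_J\cup A_K|\le k$, which already fails for $k=1$, $A_1=\{1\}$, $A_2=\{2\}$ (there $\beta(\{1\})=\beta(\{2\})=0$ but $\beta(\{1,2\})=1$). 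Your indicator argument does prove supermodularity of $\alpha$, but only with the natural value $\alpha_\emptyset=n$, i.e.\ $\beta(\emptyset)=k-n$; on nonempty sets $\beta$ is merely \emph{intersecting} submodular (the inequality holds when $J\cap K\neq\emptyset$). Consequently ``Edmonds' theorem / LP duality for submodular polytopes'' cannot be invoked as a black box; the statement you actually need is the Dilworth-truncation min--max for intersecting submodular functions, and even that theorem concerns the polyhedron $\{x\in\R^\ell : x(J)\le\beta(J)\ \forall\, J\neq\emptyset\}$ \emph{without} the constraint $\delta\ge 0$, so a further argument is needed to reconcile nonnegativity --- for instance using that $\beta(J)=k-|A_J|$ is monotone nondecreasing (a property you explicitly decline to use), which lets one replace a free-sign optimum $x$ by $\max(x,0)$ coordinatewise without losing feasibility.

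Second, your uncrossing sketch only delivers a laminar dual support; the jump ``and then a partition-supported optimum'' is unjustified, and it is precisely where the failure of subadditivity across disjoint sets bites. Because your primal has $\delta\ge 0$, the dual coverage constraints are inequalities, so you cannot appeal to exact coverage / complementary slackness, and the obvious rounding (replace the laminar family by its maximal sets) can strictly increase cost, since $\beta(M)\le\sum_i\beta(U_i)$ for a partition of $M$ into smaller sets is exactly the disjoint-subadditivity that fails in the example above. The standard route is to dualize the free-sign primal so that coverage is an equality, uncross to a laminar family, and then decompose the exact laminar fractional cover as a convex combination of partitions by peeling off maximal sets; combined with the monotonicity patch above (or simply citing Lemma~2.8 of \cite{bgm2022}, as the paper does) this completes (b)$\Rightarrow$(c). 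As written, though, your proof asserts a false submodularity statement and omits the laminar-to-partition step, so it does not yet constitute a proof.
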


\change{Since we need to argument about the algebraic properties of various intersections of vector spaces, we} also use the following matrix identity for checking \change{vector space intersection} conditions.

\begin{lemma}[\cite{tian2019formulas,bgm2021mds}]\label{lem:mds5}
  Let $V$ be a $k \times n$ matrix. Consider
  $A_1, \hdots, A_\ell \subseteq [n]$ with $|A_i| \le k$. We have that
  \begin{align}
    \dim(V_{A_1} \cap \cdots \cap V_{A_\ell}) &= k + \sum_{i=1}^{\ell} \dim(V_{A_i}) - \rank \begin{pmatrix}
    I_k & V|_{A_1} & & & \\
    I_k & & V|_{A_2} & &\\
    \vdots & & & \ddots &\\
    I_k & & & &  V|_{A_\ell}
  \end{pmatrix}.\label{eq:mat-inter-dim}
  \end{align}
  Furthermore, if  $|A_1| + \cdots + |A_\ell| = (\ell-1)k$ and $\dim(V_{A_i}) = |A_i|$ for all $i \in [\ell]$, then we have that $V_{A_1} \cap \cdots \cap V_{A_\ell} = 0$ if and only if the block matrix in (\ref{eq:mat-inter-dim}) has nonzero determinant.
\end{lemma}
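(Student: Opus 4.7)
The plan is to realize the intersection $\bigcap_{i=1}^\ell V_{A_i}$ as (essentially) the kernel of the block matrix
\[
M := \begin{pmatrix}
    I_k & V|_{A_1} & & & \\
    I_k & & V|_{A_2} & &\\
    \vdots & & & \ddots &\\
    I_k & & & &  V|_{A_\ell}
  \end{pmatrix},
\]
and then apply rank--nullity. A vector in $\ker M$ has the form $(x, y_1, \hdots, y_\ell)$ with $x \in \F^k$ and $y_i \in \F^{|A_i|}$ satisfying $x + V|_{A_i}\,y_i = 0$ for each $i \in [\ell]$. Equivalently, $-x$ lies in the column span $V_{A_i}$ for every $i$, so the projection of $\ker M$ onto the $x$-coordinate is precisely $\bigcap_{i=1}^\ell V_{A_i}$. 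Once $x$ is fixed, the admissible $y_i$ form a coset of $\ker V|_{A_i}$, which has dimension $|A_i| - \dim V_{A_i}$.

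Combining these two observations gives
\[
\dim \ker M \;=\; \dim\!\bigl(\textstyle\bigcap_{i=1}^\ell V_{A_i}\bigr) \;+\; \sum_{i=1}^\ell \bigl(|A_i| - \dim V_{A_i}\bigr),
\]
and since $M$ has $k + \sum_i |A_i|$ columns, rank--nullity immediately yields
\[
\rank M \;=\; k + \sum_{i=1}^\ell \dim V_{A_i} \;-\; \dim\!\bigl(\textstyle\bigcap_{i=1}^\ell V_{A_i}\bigr),
\]
which is exactly (\ref{eq:mat-inter-dim}) after rearrangement.

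For the ``furthermore'' claim, the extra hypotheses $|A_1| + \cdots + |A_\ell| = (\ell-1)k$ and $\dim V_{A_i} = |A_i|$ for all $i$ force $M$ to be square of size $\ell k \times \ell k$, and the formula above specializes to $\dim\!\bigl(\bigcap_i V_{A_i}\bigr) = \ell k - \rank M$. Hence the intersection vanishes if and only if $M$ has full rank $\ell k$, which for a square matrix is equivalent to $\det M \neq 0$. The only substantive step is the kernel computation above; the squareness of $M$ and the dimension bookkeeping are purely mechanical, so I do not anticipate any real obstacle.
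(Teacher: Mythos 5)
Your proof is correct: identifying $\ker M$ with the fibered set $\{(x,y_1,\hdots,y_\ell): x = -V|_{A_i}y_i \ \forall i\}$, noting its projection to the $x$-coordinate is $\bigcap_i V_{A_i}$ with fibers of dimension $\sum_i(|A_i|-\dim V_{A_i})$, and applying rank--nullity gives exactly (\ref{eq:mat-inter-dim}), and the ``furthermore'' part follows since the matrix is then square of size $\ell k$. The paper itself does not prove this lemma but cites it from prior work, and your kernel/rank--nullity argument is essentially the standard proof given there, so there is nothing to add.
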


\section{Higher Order MDS Codes with a Basis}\label{sec:mds-basis}

In this section, we generalize the definition of higher order MDS codes to \change{to require that some vectors can only be added to subspaces in a fixed sequence, which we call a ``basis.''} As mentioned in the technical overview, this will allow us to substantially generalize the ``repeated roots'' trick in the proof of the GM-MDS theorem.

\subsection{Definition}

Consider $n,k,b \ge 1$ with $b \le k$. Let $U \in \F^{k \times b}$ \change{be a matrix}. We define an \emph{order-$\ell$ configuration} to be $\ell$ pairs $(\sigma_1, A_1), \hdots, (\sigma_\ell,  A_\ell)$, where for each $i \in [\ell]$ we have that $0 \le \sigma_i \le b$ and $A_i \subseteq [n]$. For all $J \subseteq [\ell]$ nonempty, let
\begin{align*}
  \sigma_J &:= \min_{i \in J} \sigma_i\\
  A_J &:= \bigcap_{i \in J} A_i.
\end{align*}
Let $U_{\le \sigma} = \Span \{u_i : i \in [\sigma]\}$, and $U|_{\le \sigma}$ denote the submatrix of the first $\sigma$ columns of $U$.

\begin{definition}[\change{Definition~\ref{def:MDSb} restated}]\label{def:MDSb-alt}
  Consider $n,k,b \ge 1$ with $b \le k$. Let $U \in \F^{k \times b}$
  and $V \in \F^{k \times n}$ be matrices. \change{We call $U$ the \emph{basis} matrix (with the columns of $U$ the basis vectors) and $V$ the \emph{generator} matrix.} We say that $(U,V)$ is a
  $(b,n,k)$-\emph{order-$\ell$ higher order MDS code with a basis}, or
  $(b,n,k)$-$\MDSb(\ell)$, if for all \change{order-$\ell$} configurations
  $(\sigma_1, A_1), \hdots, (\sigma_{\ell}, A_\ell)$, we have that
  \begin{align}
    \dim \left(\bigcap_{i=1}^\ell (U_{\le \sigma_i} + V_{A_i})\right) = \dim \left(\bigcap_{i=1}^\ell (W_{\le \sigma_i} + W'_{A_i})\right), \label{eq:mds-basis}
  \end{align}
  where $W \in \F^{k \times b}$ and $W' \in \F^{k \times n}$ are generic matrices \change{algebraically independent of each other}.
\end{definition}

\begin{remark}
\change{We note that $U$ can also be defined without selecting a basis nor a generator matrix. In particular, in (\ref{eq:mds-basis}) one only needs to specify $0 = U_{\le 0} \subsetneq U_{\le 1} \subsetneq \cdots \subsetneq U_{\le b} \subseteq \F^k.$ This is known in a linear algebra as a \emph{flag}.}
\end{remark}

\subsection{Comparison to \texorpdfstring{$\MDS(\ell)$}{MDS(l)} Codes}

To help the reader get intuition for Definition~\ref{def:MDSb-alt}, we compare and contrast with Definition~\ref{def:MDS-ell}.

Like in \cite{bgm2021mds}, this definition is monotone in $\ell$. That is, if $(U,V)$ is $\MDSb(\ell)$, then it is $\MDSb(\ell')$ for all $\ell' \le \ell$. This can be observed by setting some of the pairs $(\sigma_i, A_i)$ in (\ref{eq:mds-basis}) equal to each other (c.f.~\cite{bgm2021mds}).

Another key observation is the following:

\begin{proposition}\label{prop:remove-basis}
If $(U,V)$ is $\MDSb(\ell)$, then $V$ is $\MDS(\ell)$. 
\end{proposition}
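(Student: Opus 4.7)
The plan is to show this by specializing the $\MDSb(\ell)$ condition to the trivial configuration where every $\sigma_i$ is zero. Concretely, given arbitrary subsets $A_1, \hdots, A_\ell \subseteq [n]$, I would apply the defining equality (\ref{eq:mds-basis}) to the order-$\ell$ configuration $(0, A_1), \hdots, (0, A_\ell)$. Under the convention that $U_{\le 0} = \Span \{u_i : i \in \emptyset\} = \{0\}$, each summand on the left-hand side of (\ref{eq:mds-basis}) simplifies to $\{0\} + V_{A_i} = V_{A_i}$, and analogously on the right-hand side each summand collapses to $W'_{A_i}$. Thus (\ref{eq:mds-basis}) becomes
\[
  \dim \left(\bigcap_{i=1}^\ell V_{A_i}\right) = \dim \left(\bigcap_{i=1}^\ell W'_{A_i}\right).
\]

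Next, I would observe that this equation is exactly the defining condition of $V$ being $\MDS(\ell)$ according to Definition~\ref{def:MDS-ell}. Here it is important to note that the matrix $W \in \F^{k \times b}$ appearing in Definition~\ref{def:MDSb-alt} plays no role once $\sigma_i = 0$ for all $i$, so the only generic object that remains is $W' \in \F^{k \times n}$, which matches precisely the generic matrix used in Definition~\ref{def:MDS-ell}. The algebraic independence of $W$ and $W'$ stipulated in Definition~\ref{def:MDSb-alt} is immaterial here because $W$ has been eliminated.

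There is no real obstacle in this argument: the generalization of Definition~\ref{def:MDSb-alt} was designed so that the basis $U$ can be ``turned off'' by setting the multiplicities $\sigma_i$ to zero. In other words, $\MDS(\ell)$ is recovered as the $\sigma_i \equiv 0$ slice of $\MDSb(\ell)$, and the proposition follows with essentially no additional work.
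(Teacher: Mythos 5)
Your proof is correct and matches the paper's argument exactly: specializing (\ref{eq:mds-basis}) to $\sigma_i = 0$ for all $i \in [\ell]$, so that $U_{\le 0} = 0$ and the condition collapses to the defining equality of Definition~\ref{def:MDS-ell}. Nothing further is needed.
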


\begin{proof}
In (\ref{eq:mds-basis}), set $\sigma_i = 0$ for all $i \in [\ell]$, we then derive the condition in Definition~\ref{def:MDS-ell}.
\end{proof}

In particular, technique for proving that a code $V$ is $\MDS(\ell)$ is to find a suitable basis $U$ such that $(U,V)$ is $\MDSb(\ell)$. This stronger statement can lend itself to an inductive proof, see Section~\ref{subsec:induct}\change{.}

However, there are codes for which $(U,V)$ is $\MDSb(1)$ but $[U \mid V]$ is not MDS. For instance, let the ambient field be $\F_2$, have $U = I_3$, and $V$ be the matrix
\[
  V = \begin{pmatrix}
        1 & 1\\
        1 & 1\\
        0 & 1
      \end{pmatrix},
\]
then one can verify that $(U, V)$ is $\MDSb(1)$, but $[U \mid V]$ is not MDS, as the columns of $V$ and the third column vector of $U$ are linearly dependent.

As more nontrivial example, have $U = I_4$ and $V \in \F^{2 \times 4}$ be a generic Reed-Solomon code:
\[
    V = \begin{pmatrix}
        1 & 1\\
        \alpha & \beta\\
        \alpha^2 & \beta^2\\
          \alpha^3 & \beta^3
      \end{pmatrix}.
\] The \change{methods in the} original proofs of the GM-MDS theorem~\cite{lovett2018gmmds,yildiz2019gmmds} \change{can be used to} show that $(U,V)$ is $\MDSb(4)$. However, $[U \mid V]$ is not $\MDS(4)$. To see why, let $A_1 = \{1,2,5\}$, $A_2 = \{3,4,5\}$, $A_3 = \{1,3,6\}$, and $A_4 = \{2,4,6\}$, then one can show
\[
   [U \mid V]_{A_1} \cap [U \mid V]_{A_2} \cap [U \mid V]_{A_3} \cap [U \mid V]_{A_4}  \neq 0,
\]
as can be checked via Corollary~\ref{cor:mds5}, despite this identity being true for a generic code.

Although an $\MDSb(1)$ code $(U,V)$ is not necessarily MDS, every subset of the columns of $[U \mid V]$ which shows up in computing (\ref{eq:mds-basis}) \emph{does} behave like an MDS code. This is formalized in the following proposition.

\begin{proposition}\label{prop:mdsb-1}
Let $(U,V)$ be a $(b,n,k)$-$\MDSb(1)$ code. For any configuration $(\sigma, A)$, we have that
\[
  \dim(U_{\le \sigma} + V_A) = \min(\sigma + |A|, k).
\]
\end{proposition}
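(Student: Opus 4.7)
The plan is to invoke Definition~\ref{def:MDSb-alt} directly in the case $\ell = 1$, which immediately reduces the problem to computing the dimension of an analogous span for generic matrices. Concretely, applying the $\MDSb(1)$ hypothesis to the single order-$1$ configuration $(\sigma, A)$ yields
\[
  \dim(U_{\le \sigma} + V_A) \;=\; \dim(W_{\le \sigma} + W'_A),
\]
where $W \in \F^{k \times b}$ and $W' \in \F^{k \times n}$ are generic matrices algebraically independent of each other.

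The second step is to rewrite the right-hand side as the rank of the block matrix $[\,W|_{\le \sigma} \mid W'|_A\,]$, which is a $k \times (\sigma + |A|)$ matrix. Because $W$ and $W'$ are algebraically independent of each other over $\F$, the $k(\sigma + |A|)$ entries of this concatenation are themselves algebraically independent transcendentals over $\F$. Hence the concatenated block matrix is itself a generic $k \times (\sigma + |A|)$ matrix in the sense of Section~\ref{sec:prelim}.

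The final step is to conclude that a generic $k \times m$ matrix has rank $\min(m,k)$. One can see this either by exhibiting a $\min(m,k) \times \min(m,k)$ submatrix whose symbolic determinant contains the nonzero monomial coming from the diagonal entries, or by directly invoking Theorem~\ref{thm:gid} with $\ell = 1$ applied to the generic matrix $[\,W|_{\le \sigma} \mid W'|_A\,]$ and the set $A' = [\sigma + |A|]$. Either way we obtain $\dim(W_{\le \sigma} + W'_A) = \min(\sigma + |A|, k)$, proving the proposition.

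The proof is essentially a direct unpacking of the definitions and there is no serious obstacle. The only point requiring attention — which is really just a sanity check — is verifying that the concatenation of two independent generic matrices is again generic; this is immediate from the algebraic independence clause in Definition~\ref{def:MDSb-alt}. Once that is in hand, the rank computation is the standard fact that a symbolic matrix attains its maximum possible rank.
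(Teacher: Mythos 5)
Your proposal is correct and follows essentially the same route as the paper: the paper's proof is exactly the one-line observation that $\dim(U_{\le \sigma} + V_A) = \dim(W_{\le \sigma} + W'_A) = \min(\sigma + |A|, k)$ for generic $W, W'$, and you simply fill in the (correct) details that the concatenation $[\,W|_{\le \sigma} \mid W'|_A\,]$ is itself generic and hence has full rank $\min(\sigma+|A|,k)$.
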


\begin{proof}
  For generic codes, $W$ and $W'$, we must have that $\dim(U_{\le \sigma} + V_A) = \dim(W_{\le \sigma} + W'_A) = \min(\sigma+|A|, k) $.
\end{proof}

As a corollary, we have that the $\ge$ direction of (\ref{eq:mds-basis}) follows whenever $(U,V)$ is $\MDSb(1)$ (c.f., \cite{bgm2021mds} or Lemma 3.1 of \cite{bgm2022}).

\begin{corollary}\label{cor:mdsb-ge}
Let $(U,V)$ is $\MDSb(1)$, then \change{for any $\ell \ge 1$ and} for all configurations $(\sigma_1, A_1), \hdots, (\sigma_\ell, A_\ell)$, we have that
\begin{align}
  \dim \left(\bigcap_{i=1}^\ell (U_{\le \sigma_i} + V_{A_i})\right) \ge \dim \left(\bigcap_{i=1}^\ell (W_{\le \sigma_i} + W'_{A_i})\right), \label{eq:mds-basis-ge}
\end{align}
where $W$ and $W'$ are generic matrices of the same shape as $U$ and $V$, respectively.
\end{corollary}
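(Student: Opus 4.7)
The idea is to express both sides of~(\ref{eq:mds-basis-ge}) through the matrix-rank identity of Lemma~\ref{lem:mds5} applied to the single matrix $[\,U \mid V\,]$, and then exploit the standard fact that the rank of a symbolic matrix can only drop under specialization of its transcendental entries. Equivalently, intersection dimensions are generically minimal, which is precisely the content of the inequality to be proved.

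First, I would reduce to the case $\sigma_i + |A_i| \le k$ for every $i \in [\ell]$. Indeed, if $\sigma_i + |A_i| \ge k$, then Proposition~\ref{prop:mdsb-1} gives $\dim(U_{\le \sigma_i} + V_{A_i}) = k$, so that factor equals the ambient $\F^k$; the analogous statement for the generic pair $(W,W')$ holds by genericity. Dropping every such index from both intersections leaves both sides of~(\ref{eq:mds-basis-ge}) unchanged, so I may assume the size bound for the remainder.

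Next, I would repackage each subspace as a column span of a common matrix. Form $N := [\,U \mid V\,]$ with columns indexed by $[b] \sqcup (b + [n])$, and set $B_i := [\sigma_i] \,\cup\, \{b + j : j \in A_i\}$; then $N_{B_i} = U_{\le \sigma_i} + V_{A_i}$ and $|B_i| = \sigma_i + |A_i| \le k$. By Proposition~\ref{prop:mdsb-1} the submatrix $N|_{B_i}$ has full column rank $|B_i|$, and the same holds for $N' := [\,W \mid W'\,]$ by genericity. Applying Lemma~\ref{lem:mds5} to $N$ and to $N'$ with the common family $B_1,\ldots,B_\ell$ yields
\begin{align*}
\dim\Bigl(\bigcap_{i=1}^\ell (U_{\le \sigma_i} + V_{A_i})\Bigr) &= k + \sum_{i=1}^\ell |B_i| - \rank M(N), \\
\dim\Bigl(\bigcap_{i=1}^\ell (W_{\le \sigma_i} + W'_{A_i})\Bigr) &= k + \sum_{i=1}^\ell |B_i| - \rank M(N'),
\end{align*}
where $M(\cdot)$ denotes the block matrix on the right-hand side of~(\ref{eq:mat-inter-dim}) assembled from the indicated matrix and the sets $B_1,\ldots,B_\ell$.

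Finally, $M(N)$ is obtained from $M(N')$ by specializing the transcendental entries defining $W$ and $W'$ to the entries of $U$ and $V$; since specialization cannot increase rank (any nonvanishing minor of $M(N)$ is a specialization of the same nonvanishing minor of $M(N')$), we have $\rank M(N) \le \rank M(N')$. Subtracting the two displayed equalities delivers~(\ref{eq:mds-basis-ge}). The only step that is not pure bookkeeping is the trivial-factor reduction in the first paragraph, which is needed solely to fit the hypothesis $|B_i| \le k$ of Lemma~\ref{lem:mds5}; I do not anticipate any real obstacle beyond that.
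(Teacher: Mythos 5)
Your proof is correct and follows essentially the same route as the paper's: apply Lemma~\ref{lem:mds5} to both pairs, use the $\MDSb(1)$ hypothesis (via Proposition~\ref{prop:mdsb-1}) to equate the per-index dimensions on the two sides, and conclude from the fact that specializing the generic entries $(W,W')\to(U,V)$ can only decrease the rank of the block matrix. The only difference is your preliminary reduction to configurations with $\sigma_i+|A_i|\le k$, a small extra step to meet the stated size hypothesis of Lemma~\ref{lem:mds5} that the paper's proof applies without comment.
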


\begin{proof}
By Lemma~\ref{lem:mds5} and the fact that $\dim(U_{\le \sigma_i} + V_{A_i}) = \dim(W_{\le \sigma_i} + W'_{A_i})$ for all $i \in [\ell]$, we have that
\begin{align*}
&\dim \left(\bigcap_{i=1}^\ell (U_{\le \sigma_i} + V_{A_i})\right) -  \dim \left(\bigcap_{i=1}^\ell (W_{\le \sigma_i} + W'_{A_i})\right)\\
&= \rank \begin{pmatrix}
    I_k & W|_{\le \sigma_1} & W'|_{A_1} & & & & & \\
    I_k & & & W|_{\le \sigma_2} & W'|_{A_2} & & &\\
    \vdots & & & & \ddots & &\\
    I_k & & & & & W|_{\le \sigma_\ell} & W'|_{A_\ell}
  \end{pmatrix}\\&\hspace{0.3in} - \rank \begin{pmatrix}
    I_k & U|_{\le \sigma_1} & V|_{A_1} & & & & & \\
    I_k & & & U|_{\le \sigma_2} & V|_{A_2} & & &\\
    \vdots & & & & \ddots & &\\
    I_k & & & & & U|_{\le \sigma_\ell} & V|_{A_\ell}
  \end{pmatrix}.
\end{align*}
Since $(W,W')$ is generic, the substitution $(W,W') \to (U,V)$ can only decrease the rank\footnote{More formally, the condition that a matrix has rank at least $r$ is equivalent to whether there exists a $r \times r$ submatrix with nonzero determinant. Since the entries of $(W,W')$ are transcendental, any such determinant in the block matrix is necessarily nonzero for $(W,W')$ if it is nonzero for $(U,V)$.} of the block matrix. Thus, the final expression is nonnegative.
\end{proof}

\subsection{Properties of \texorpdfstring{$\MDSb(\ell)$}{MDSb(l)} Codes}

In this section, we shall prove various properties about $\MDSb(\ell)$ codes, which shall lead to an inductive framework for proving that various codes are $\MDSb(\ell)$, see Section~\ref{subsec:induct}.

We say that a configuration $(\sigma_1, A_1), \hdots, (\sigma_\ell, A_\ell)$ is \emph{proper} if $\sigma_i + |A_i| \le k$ for all $i \in [\ell]$. We observe that it suffices to check that a code if $\MDSb(\ell)$ by considering only proper configurations.

\begin{proposition}\label{prop:mdsb-proper}
If (\ref{eq:mds-basis}) \change{holds} for all proper configurations, then (\ref{eq:mds-basis}) holds for all configurations.
\end{proposition}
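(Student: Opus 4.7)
The plan is to show that any non-proper configuration can be replaced by a proper one without changing either side of (\ref{eq:mds-basis}). Concretely, I will produce, for each index $i$ with $\sigma_i + |A_i| > k$, a subset $A_i' \subseteq A_i$ such that $(\sigma_i, A_i')$ is proper and
\[
  U_{\le \sigma_i} + V_{A_i'} \;=\; U_{\le \sigma_i} + V_{A_i}, \qquad W_{\le \sigma_i} + W'_{A_i'} \;=\; W_{\le \sigma_i} + W'_{A_i}.
\]
Performing this replacement independently at every bad index $i$ yields a proper order-$\ell$ configuration for which the intersections on both sides of (\ref{eq:mds-basis}) agree with those of the original. Applying the hypothesis to this new proper configuration then gives (\ref{eq:mds-basis}) for the original configuration.

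The key step is the replacement lemma, which I would argue as follows. Fix $i$ with $\sigma_i + |A_i| > k$. Since $\sigma_i \le b \le k$, we have $k - \sigma_i \ge 0$ and $|A_i| > k - \sigma_i$, so we may pick an arbitrary $A_i' \subseteq A_i$ with $|A_i'| = k - \sigma_i$. By construction, $(\sigma_i, A_i')$ is proper. Now apply the hypothesis to the order-$1$ proper configuration $(\sigma_i, A_i')$: this gives
\[
  \dim(U_{\le \sigma_i} + V_{A_i'}) \;=\; \dim(W_{\le \sigma_i} + W'_{A_i'}) \;=\; \sigma_i + |A_i'| \;=\; k,
\]
where the middle equality uses that $[W \mid W']$ is generic and $\sigma_i + |A_i'| \le k$. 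Hence $U_{\le \sigma_i} + V_{A_i'} = \F^k$, and the inclusion $U_{\le \sigma_i} + V_{A_i'} \subseteq U_{\le \sigma_i} + V_{A_i} \subseteq \F^k$ forces equality. The same argument applied to $(W,W')$ gives the corresponding equality on the generic side.

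The only subtlety I anticipate is making sure the hypothesis really supplies the order-$1$ equality at $(\sigma_i, A_i')$; but a proper order-$1$ configuration is, tautologically, a proper configuration, so this is immediate from the assumption. Apart from this, there is no genuine obstacle: the argument is a direct reduction that exploits the fact that once a sum $U_{\le \sigma} + V_{A}$ fills the ambient space $\F^k$, enlarging $A$ does not change it. Iterating the replacement over all bad indices then completes the proof.
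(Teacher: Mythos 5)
Your proof is correct and is essentially the paper's argument: the paper likewise replaces each pair with $\sigma_i+|A_i|>k$ by a proper pair after observing that the corresponding spaces equal all of $\F^k$ on both the $(U,V)$ side and the generic side, so neither side of (\ref{eq:mds-basis}) changes. The only cosmetic difference is that you invoke the hypothesis directly on the order-$1$ proper pair $(\sigma_i,A_i')$ and sandwich $U_{\le \sigma_i}+V_{A_i}$ between it and $\F^k$ (if one reads the hypothesis as ranging only over order-$\ell$ configurations, this order-$1$ instance follows by repeating the pair $\ell$ times), whereas the paper phrases the same replacement step as an induction on $\ell$.
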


\begin{proof}
  We prove this by induction on $\ell$. The case $\ell=1$ follows from Proposition~\ref{prop:mdsb-1} and that $\dim(U_{\le \sigma} + V_A)$ is monotone in $(\sigma, A)$. For $\ell \ge 2$, note that if we have a configuration with some $\sigma_i + |A_i| > k$, then by the induction hypothesis, we have that $\dim(U_{\le {\sigma_i}} + V_{A_i}) = k$, so $U_{\le {\sigma_i}} + V_{A_i} = \F^k$. In particular, we can replace $(\sigma_i, A_i)$ with any $(\sigma'_i, A'_i)$ with $\sigma'_i + |A_i| = k$ and whether (\ref{eq:mds-basis}) holds will not change. We can repeat this procedure until we are left with a proper configuration.
\end{proof}

By adapting the generic intersection dimension formula of \cite{bgm2022}, we can get an explicit combinatorial formula for the RHS of (\ref{eq:mds-basis}) for proper configurations.

\begin{proposition}\label{prop:MDSb-inter-form}
Let $W \in \F^{k \times b}$ and $W' \in \F^{k \times n}$ be generic matrices \change{where the entries of $W$ and $W'$ are algebraically indepenent of each other}. Let $(\sigma_1, A_1), \hdots, (\sigma_\ell, A_\ell)$ be a proper configuration. Then,
\begin{align}
  \dim \left(\bigcap_{i=1}^\ell (W_{\le \sigma_i} + W'_{A_i})\right) = \max_{P_1 \sqcup \cdots \sqcup P_s = [\ell]} \left[-\change{k}(s-1) + \sum_{i=1}^s (\sigma_{P_i} + |A_{P_i}|)\right], \label{eq:MDSb-inter-form}
\end{align}
where $P_1 \sqcup \cdots \sqcup P_s = [\ell]$ denotes a partition of $[\ell]$ into $s$ nonempty pieces.
\end{proposition}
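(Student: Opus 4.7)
The plan is to reduce this formula directly to Theorem~\ref{thm:gid} of \cite{bgm2022} by concatenating $W$ and $W'$ into a single generic matrix. Since the entries of $W$ and $W'$ are algebraically independent of each other (and algebraically independent internally, as each is generic), the block matrix $\widetilde W := [W \mid W'] \in \F^{k \times (b+n)}$ has $k(b+n)$ transcendental entries and is therefore a generic $k \times (b+n)$ matrix.

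Next, I would translate each pair $(\sigma_i, A_i)$ into a single subset of column indices of $\widetilde W$. Identify the first $b$ columns of $\widetilde W$ with $[b]$ and the remaining $n$ columns with $b + [n] := \{b+1, \hdots, b+n\}$. Define
\[
    B_i := [\sigma_i] \cup (b + A_i) \subseteq [b+n].
\]
Then, by construction, $\widetilde W_{B_i} = W_{\le \sigma_i} + W'_{A_i}$, and for any nonempty $P \subseteq [\ell]$,
\[
    \bigcap_{j \in P} B_j = [\sigma_P] \cup (b + A_P), \qquad \left| \bigcap_{j \in P} B_j \right| = \sigma_P + |A_P|.
\]
The properness assumption $\sigma_i + |A_i| \le k$ translates precisely to $|B_i| \le k$, which is the hypothesis needed to apply Theorem~\ref{thm:gid} to $\widetilde W$ with sets $B_1, \hdots, B_\ell$.

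Now I would invoke Theorem~\ref{thm:gid}, which yields
\[
    \dim\left(\bigcap_{i=1}^\ell \widetilde W_{B_i}\right) = \max_{P_1 \sqcup \cdots \sqcup P_s = [\ell]} \left[ -k(s-1) + \sum_{i=1}^s |B_{P_i}| \right].
\]
Substituting $\widetilde W_{B_i} = W_{\le \sigma_i} + W'_{A_i}$ on the left and $|B_{P_i}| = \sigma_{P_i} + |A_{P_i}|$ on the right gives exactly (\ref{eq:MDSb-inter-form}).

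There is no real obstacle here beyond the translation of notation; the main thing to verify carefully is that $[W \mid W']$ qualifies as a generic matrix in the sense needed by Theorem~\ref{thm:gid}, which reduces to the stated algebraic independence of the entries of $W$ and $W'$, and that the properness hypothesis $\sigma_i + |A_i| \le k$ is exactly what is required to meet the column-size hypothesis of Theorem~\ref{thm:gid} (so that the formula applies to every $B_i$).
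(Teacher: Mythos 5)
Your proposal is correct and follows essentially the same route as the paper: the paper likewise encodes each pair $(\sigma_i, A_i)$ as a single column-subset $C_i$ of the concatenated generic matrix $[W \mid W']$, notes that $|C_J| = \sigma_J + |A_J|$ because the basis columns form nested prefixes, and then invokes Theorem~\ref{thm:gid}. Your explicit checks that $[W \mid W']$ is generic (via the stated algebraic independence) and that properness gives $|B_i| \le k$ are exactly the hypotheses the paper uses implicitly.
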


\begin{proof}
  Let $B := \{i_1, \hdots, i_b\}$ be some set of $b$ indices disjoint from $[n]$. For all $i \in [\ell]$, let 
\[
  C_i :=  A_i \cup \{i_1, \hdots, i_{\sigma_i}\}.
\]
It is easy to see that for all $J \subseteq [\ell]$ nonempty that $|C_J| = |\sigma_J| + |A_J|$. Thus, (\ref{eq:MDSb-inter-form}) is equivalent to
\[
  \dim \left(\bigcap_{i=1}^\ell [W|W']_{C_i}\right) = \max_{P_1 \sqcup \cdots \sqcup P_s = [\ell]} \left[-\change{k}(s-1) + \sum_{i=1}^s |C_{P_i}|\right],
\]
and this follows from \change{Theorem~\ref{thm:gid}.}
\end{proof}

For convenience, we let $\gid_k((\sigma_1, A_1), \hdots, (\sigma_\ell, A_\ell))$ denote the RHS of (\ref{eq:MDSb-inter-form}).

\begin{definition}\label{def:maximal}
We call a proper configuration $(\sigma_1, A_1), \hdots, (\sigma_\ell, A_\ell)$ to be \emph{maximal} if we additionally have the following properties
\begin{itemize}
\item[(a)] $\sum_{i=1}^\ell \sigma_i + |A_i| = (\ell-1)k.$
\item[(b)] For all partitions $P_1 \sqcup \cdots \sqcup P_s = [\ell]$, we have that
  \[
    \sum_{i=1}^s \sigma_{P_i} + |A_{P_i}| \le (s-1)k.
  \]
\item[(c)] There exists $\delta_1, \hdots, \delta_\ell \ge 0$ with $\sum_{i=1}^{\ell} \delta_i = k$ and for all $J \subseteq [n]$ nonempty, we have that
\[
  \sigma_J + |A_J| \le k - \sum_{i \in J} \delta_i.
\] 
\end{itemize}
\end{definition}

By adapting Theorem~\ref{thm:mds6}, we can prove that conditions (b) and (c) are equivalent.

\begin{proposition}\label{prop:dual-lp}
Let $(\sigma_1, A_1), \hdots, (\sigma_\ell, A_\ell)$ be \change{a proper configuration}. Then, conditions (b) and (c) of Definition~\ref{def:maximal} are equivalent.
\end{proposition}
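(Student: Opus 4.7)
The plan is to reduce the statement directly to Theorem~\ref{thm:mds6} by encoding the basis coordinates as auxiliary columns, mirroring the reduction already used in the proof of Proposition~\ref{prop:MDSb-inter-form}. This seems more efficient than redoing the LP duality from scratch.

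Concretely, I would introduce $b$ fresh indices $i_1, \dots, i_b$ disjoint from $[n]$ and, for each $j \in [\ell]$, set
\[
  C_j \;:=\; A_j \,\cup\, \{i_1, \dots, i_{\sigma_j}\} \;\subseteq\; [n] \cup \{i_1,\dots,i_b\}.
\]
Because the configuration is proper we have $|C_j| = \sigma_j + |A_j| \le k$, so the $C_j$ satisfy the size hypothesis of Theorem~\ref{thm:mds6}. The key observation is that for every nonempty $J \subseteq [\ell]$,
\[
  \bigcap_{j \in J} \{i_1,\dots,i_{\sigma_j}\} \;=\; \{i_1,\dots,i_{\sigma_J}\},
\]
since these prefixes are nested, and this set is disjoint from the $A_j$'s. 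Hence $\bigl|\bigcap_{j\in J} C_j\bigr| = \sigma_J + |A_J|$.

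With this translation in hand, the rest is bookkeeping. Condition (b) of Definition~\ref{def:maximal} for $(\sigma_i, A_i)$ reads $\sum_{i=1}^s (\sigma_{P_i} + |A_{P_i}|) \le (s-1)k$, which becomes exactly condition (b) of Theorem~\ref{thm:mds6} for the sets $C_1,\dots,C_\ell$ with parameter $d = 0$. Likewise, condition (c) of Definition~\ref{def:maximal} (with $\sum_i \delta_i = k$ and $\sigma_J + |A_J| \le k - \sum_{i\in J}\delta_i$) becomes condition (c) of Theorem~\ref{thm:mds6} for the same $C_j$'s with $d = 0$. Applying the equivalence (b) $\iff$ (c) of Theorem~\ref{thm:mds6} yields the desired equivalence.

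I do not anticipate a real obstacle: the only substantive point is verifying the intersection identity $|\bigcap_{j\in J} C_j| = \sigma_J + |A_J|$, which is immediate from the nestedness of the prefix sets $\{i_1,\dots,i_{\sigma_j}\}$ and the fact that the auxiliary indices were chosen disjoint from $[n]$. If desired, one could also give a direct LP-duality proof (Farkas/Hall-type) treating the $\delta_j$ as dual variables for the partition inequalities in (b), but invoking Theorem~\ref{thm:mds6} is cleaner since that machinery is already in place.
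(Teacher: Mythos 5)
Your proposal is correct and is essentially identical to the paper's own proof: it encodes the basis prefixes as auxiliary indices $i_1,\dots,i_b$ disjoint from $[n]$, sets $C_j = A_j \cup \{i_1,\dots,i_{\sigma_j}\}$ so that $|C_J| = \sigma_J + |A_J|$ by nestedness, and then invokes the (b) $\iff$ (c) equivalence of Theorem~\ref{thm:mds6} with $d=0$. No gaps; this matches the paper's argument step for step.
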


\begin{proof}
  Let $b = \max_{i \in [\ell]} \sigma_i$. Let $B = \{i_1, \hdots, i_b\}$ be a set of size $\sigma$ disjoint from each of the $A_i$'s. As in the proof of Proposition~\ref{prop:MDSb-inter-form}, for all $i \in [\ell]$, define
  \[
    C_i = A_i \cup \{i_1, \hdots, i_{\sigma_i}\}.
  \]
  It is easy to see that for all $J \subseteq [\ell]$ nonempty that $|C_J| = |\sigma_J| + |A_J|$. Thus it suffices to show that the following two statements are equivalent:
  \begin{enumerate}
  \item[(b)] For all partitions $P_1 \sqcup \cdots \sqcup P_s = [\ell]$, we have that
    \[
      \sum_{i=1}^s |C_{P_i}| \le (s-1)k.
    \]
  \item[(c)] There exists $\delta_1, \hdots, \delta_\ell \ge 0$ with $\sum_{i=1}^{\ell} \delta_i = k$ and for all $J \subseteq [n]$ nonempty, we have that
    \[
      |C_J| \le k - \sum_{i \in J} \delta_i.
    \] 
  \end{enumerate}
  This equivalence follows immediately from the (b) $\iff$ (c) equivalence of Theorem~\ref{thm:mds6} with $d=0$.
\end{proof}

By adapting a result of \cite{bgm2021mds} (c.f., their Lemma B.5), we can show that any proper configuration can be ``mutated'' into a maximal configuration. Given $(\sigma_1, A_1), \hdots, (\sigma_\ell, A_\ell)$, we say that $(\sigma'_1, A'_1), \hdots, (\sigma'_\ell, A'_\ell)$ is a \emph{contraction} \change{of $(\sigma_1, A_1), \hdots, (\sigma_\ell, A_\ell)$} if $\change{\sigma'_i \le \sigma_i}$ and $A'_i \subseteq A_i$ for all $i \in [\ell]$. We say this contraction is \emph{minimal} if $\sum_{i=1}^{\ell} (\sigma_i - \sigma'_i) + (|A_i| - |A'_i|) = 1$. Likewise, we say that $(\sigma'_1, A'_1), \hdots, (\sigma'_\ell, A'_\ell)$ is an \emph{expansion} \change{of $(\sigma_1, A_1), \hdots, (\sigma_\ell, A_\ell)$} if $\change{\sigma'_i \ge \sigma_i}$ and $A'_i \supseteq A_i$ for all $i \in [\ell]$. We say this \change{expansion} is \emph{minimal} if $\sum_{i=1}^{\ell} (\sigma'_i - \sigma_i) + (|A'_i| - |A_i|) = 1$.

\begin{lemma}\label{lem:pad-mdsb}
Let $(\sigma_1, A_1), \hdots, (\sigma_\ell, A_\ell)$ be a proper configuration with $d = \gid_k((\sigma_1, A_1), \hdots, (\sigma_\ell, A_\ell))$. Then,
\begin{itemize}
\item[(a)] If $d \ge 1$, then there is a minimal contraction $(\sigma'_1, A'_1), \hdots, (\sigma'_\ell, A'_\ell)$ \change{of $(\sigma_1, A_1), \hdots, (\sigma_\ell, A_\ell)$} with \[d-1 = \gid_k((\sigma'_1, A'_1), \hdots, (\sigma'_\ell, A'_\ell)).\]

\item[(b)] If $d = 0$ and $\sum_{i=1}^\ell \sigma_i + |A_i| < (\ell-1)k$, then there is a minimal expansion $(\sigma'_1, A'_1), \hdots, (\sigma'_\ell, A'_\ell)$ \change{of $(\sigma_1, A_1), \hdots, (\sigma_\ell, A_\ell)$} with \[0 = \gid_k((\sigma'_1, A'_1), \hdots, (\sigma'_\ell, A'_\ell)).\]
\end{itemize}
\end{lemma}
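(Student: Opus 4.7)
The plan is to prove both parts via LP duality applied to the combinatorial characterization of $\gid_k$. Combining Theorem~\ref{thm:mds6} with the auxiliary set system $C_i = A_i \cup \{i_1, \ldots, i_{\sigma_i}\}$ used in the proof of Proposition~\ref{prop:MDSb-inter-form}, we obtain dual variables $\delta_1, \ldots, \delta_\ell \ge 0$ with $\sum_i \delta_i = k - d$ certifying the inequality $\sigma_J + |A_J| + \sum_{j \in J} \delta_j \le k$ for every nonempty $J \subseteq [\ell]$. A short computation shows that $J \mapsto \sigma_J + |A_J|$ is supermodular (both $\sigma_J = \min_{j \in J} \sigma_j$ and $|A_J| = |\bigcap_{j \in J} A_j|$ are), so the collection $\cF$ of tight $J$'s is closed under union and intersection and, when nonempty, admits a unique maximum element $J^* := \bigcup_{J \in \cF} J$.

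For part (a), tightness of $J^*$ gives $\sigma_{J^*} + |A_{J^*}| = k - \sum_{j \in J^*} \delta_j \ge k - (k-d) = d \ge 1$. If $\sigma_{J^*} \ge 1$, pick $i^* \in J^*$ with $\sigma_{i^*} = \sigma_{J^*}$ and let the contraction decrease $\sigma_{i^*}$ by one; otherwise $|A_{J^*}| \ge 1$, in which case pick any $a \in A_{J^*}$ and $i^* \in J^*$ and let the contraction remove $a$ from $A_{i^*}$. The key observation is that every tight $J$ containing $i^*$ lies inside $J^*$, so the modification decreases $\sigma_J + |A_J|$ by exactly one on such $J$. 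Setting $\delta'_{i^*} := \delta_{i^*} + 1$ and $\delta'_j := \delta_j$ otherwise then yields a valid dual witness for the new configuration with $\sum_j \delta'_j = k - d + 1$: tight constraints containing $i^*$ remain tight after the simultaneous LHS decrement and compensating $\delta$-increment, those disjoint from $i^*$ are unaffected, and old slack elsewhere absorbs the $+1$. Thus the new $\gid_k$ is at most $d-1$, and since any minimal contraction affects only one block of any given partition and decreases its block value by at most one, the new $\gid_k$ is also at least $d-1$, giving equality.

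Part (b) runs a parallel argument with $\sum_i \delta_i = k$, together with the extra hypothesis $\sum_i (\sigma_i + |A_i|) < (\ell-1)k$. The hypothesis already rules out the pathological case in which $\sigma_i = b$ and $A_i = [n]$ for every $i$ (a direct computation shows such a configuration has $d = b+n \ge 2 \ne 0$), so at least one valid minimal expansion exists. When $\cF$ is empty or $J^* \subsetneq [\ell]$, pick $i^* \notin J^*$: no tight constraint contains $i^*$, so any permissible minimal expansion at $i^*$ preserves every constraint with $\delta' = \delta$, certifying new $\gid_k = 0$. The main obstacle is the degenerate case $J^* = [\ell]$, which forces $\min_i \sigma_i = 0$ and $\bigcap_i A_i = \emptyset$. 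Here the plan is to exploit the strict inequality together with the vertex structure of the dual polytope to select an alternative witness $\delta$ under which some singleton $\{i^*\}$ is slack, and then choose an expansion at $i^*$ that avoids hitting the remaining tight sets---either adding an element $a \in [n] \setminus \bigl(A_{i^*} \cup \bigcup_{J \in \cF,\, J \ni i^*,\, |J| \ge 2} \bigcap_{j \in J \setminus \{i^*\}} A_j\bigr)$, or increasing $\sigma_{i^*}$ when $i^*$ is not the unique $\sigma$-minimizer in any tight $J$. Showing that such a witness-and-expansion combination always exists, using the strict inequality to guarantee enough room in the dual polytope, is the technically most delicate step of the proof.
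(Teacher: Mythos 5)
Your proposal takes a genuinely different, purely combinatorial route (LP duality plus uncrossing) from the paper, which argues parts (a) and (b) with generic matrices; unfortunately both parts have gaps as written. In part (a), the key claim that the family $\cF$ of tight sets is closed under union is false: uncrossing only works for tight sets that intersect, and once $\sum_i \delta_i = k-d < k$ two \emph{disjoint} tight sets can have a slack union. Concretely, take $k=3$, $\ell=2$, $(\sigma_1,A_1)=(0,\{1,2\})$, $(\sigma_2,A_2)=(0,\{3,4\})$: here $d=1$, the dual witness is forced to be $\delta=(1,1)$, the sets $\{1\}$ and $\{2\}$ are tight, but $J^*=\{1,2\}$ is not, and $\sigma_{J^*}+|A_{J^*}|=0<d$. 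So neither branch of your dichotomy applies ($\sigma_{J^*}=0$ and $A_{J^*}=\emptyset$) and your recipe produces no contraction at all, even though valid ones exist (delete any element of $A_1$). The inequality $\sigma_{J^*}+|A_{J^*}|\ge d$ really does need tightness of $J^*$, which you do not have. The argument is repairable: replace $J^*$ by an inclusion-maximal tight set $M$ (tight sets meeting $M$ uncross into $M$, so every tight $J\ni i^*$ with $i^*\in M$ satisfies $J\subseteq M$), note $M$ is tight so $\sigma_M+|A_M|=k-\sum_{j\in M}\delta_j\ge d\ge 1$, and then your ``contract at $i^*$, bump $\delta_{i^*}$'' verification goes through; but as stated the step fails. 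For comparison, the paper avoids duality here entirely: if $\min_i\sigma_i\ge 1$ it decrements that minimum (every partition value drops by exactly one), and otherwise it deletes a column of $A_i$ (for an $i$ with $\sigma_i=0$) that some nonzero vector of the generic intersection genuinely uses.

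Part (b) is the more serious gap: when $d=0$ the constraint for $J=[\ell]$ reads $\sigma_{[\ell]}+|A_{[\ell]}|\le k-\sum_i\delta_i=0$, so $[\ell]$ is \emph{always} tight and $J^*=[\ell]$ always. The cases you dispatch ($\cF$ empty or $J^*\subsetneq[\ell]$) never occur, and what you call the degenerate case is the entire statement; for it you offer only a plan (``exploit the vertex structure of the dual polytope''), not a proof, and you would additionally have to respect the cap $\sigma_i\le b$ and the tight sets other than $[\ell]$ through the chosen $i^*$. This augmentation step is precisely where the paper switches to algebra: by Proposition~\ref{lem:mds7}, $\gid_k=0$ is equivalent to $\dim\sum_{i}h_i\otimes(W_{\le\sigma_i}+W'_{A_i})=\sum_i(\sigma_i+|A_i|)$ for generic $W,W'$, and since this quantity is strictly below $(\ell-1)k=\dim\sum_i h_i\otimes(W_{\le\sigma_i}+W'_{[n]})$, some $h_i\otimes W'_j$ with $j\notin A_i$ strictly increases the span, which is exactly the desired minimal expansion. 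Either adopt that argument for (b), or supply a genuine combinatorial augmentation lemma; without one, part (b) is unproved.
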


As the details of this proof do not deviate substantially from those of \cite{bgm2021mds}, we leave the proof in the Appendix~\ref{app}. We thus, have the following generalization of Proposition~\ref{prop:mdsb-proper}.

\begin{lemma}\label{lem:case-1}
Let $(U,V)$ be a $(n,k,b)$-$\MDSb(1)$ code. The following are equivalent.

\begin{itemize}
\item[(a)] $(U,V)$ is $\MDSb(\ell)$.
\item[(b)] For all maximal configurations $(\sigma_1, A_1), \hdots, (\sigma_{\ell}, A_\ell)$, we have that
  \begin{align}
    \dim \left(\bigcap_{i=1}^\ell (U_{\le \sigma_i} + V_{A_i})\right) = 0.\label{eq:inter-basis}
  \end{align}
\end{itemize}
\end{lemma}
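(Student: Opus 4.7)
The forward direction (a)$\Rightarrow$(b) is essentially by unraveling definitions. If $(\sigma_1, A_1),\dots,(\sigma_\ell, A_\ell)$ is maximal, condition (b) of Definition~\ref{def:maximal} gives $\sum_{i=1}^{s}(\sigma_{P_i}+|A_{P_i}|)\le (s-1)k$ for every partition $P_1 \sqcup \cdots \sqcup P_s = [\ell]$, so Proposition~\ref{prop:MDSb-inter-form} yields $\gid_k=0$; that is, the RHS of (\ref{eq:mds-basis}) equals $0$. The hypothesis $\MDSb(\ell)$ then forces the LHS of (\ref{eq:mds-basis}) to equal zero, which is precisely (\ref{eq:inter-basis}).

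For the reverse direction (b)$\Rightarrow$(a), by Proposition~\ref{prop:mdsb-proper} it suffices to verify (\ref{eq:mds-basis}) on every proper configuration, and by Corollary~\ref{cor:mdsb-ge} (using $\MDSb(1)$) the ``$\ge$'' inequality of (\ref{eq:mds-basis}) is already known. The task thus reduces to proving, for every proper configuration $(\sigma_1, A_1),\dots,(\sigma_\ell, A_\ell)$ with $d := \gid_k((\sigma_1, A_1),\dots,(\sigma_\ell, A_\ell))$, that the actual intersection $D := \dim \bigcap_{i=1}^\ell (U_{\le \sigma_i} + V_{A_i})$ satisfies $D \le d$.

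The plan is to reduce to the maximal case by performing $d$ minimal contractions followed by (possibly several) minimal expansions. Applying Lemma~\ref{lem:pad-mdsb}(a) repeatedly gives a chain of proper configurations, each obtained from the previous by a single minimal contraction, in which $\gid_k$ drops by exactly one at each step; after $d$ steps we reach a proper configuration with $\gid_k=0$. The crucial observation is that a minimal contraction can reduce $D$ by at most $1$: writing $V = \bigcap_i Y_i$ with $Y_i := U_{\le \sigma_i}+V_{A_i}$, such a contraction replaces one $Y_j$ by a subspace $Y'_j \subseteq Y_j$ of codimension at most $1$, whence the new intersection $V \cap Y'_j$ has codimension at most $1$ in $V$. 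Thus the intersection dimension at the contracted configuration is at least $D - d$. If that configuration is not yet maximal, Lemma~\ref{lem:pad-mdsb}(b) allows me to expand minimally (preserving $\gid_k = 0$) until the totals reach $(\ell-1)k$; since expansions only enlarge each $Y_i$, they can only increase the intersection dimension. By hypothesis (b), the intersection dimension at the resulting maximal configuration is zero, so $0 \ge D - d$, i.e.\ $D \le d$, as required.

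The main obstacle is the bookkeeping around Lemma~\ref{lem:pad-mdsb}: one must check that each intermediate configuration in the contraction--expansion chain remains proper, that minimality really does guarantee the LHS can drop by at most one per contraction step (which hinges on the codimension-$1$ argument above), and that the final expansion step lands on a genuinely maximal configuration rather than merely a proper one with $\gid_k=0$. These points are all handled by the statement of Lemma~\ref{lem:pad-mdsb} together with the duality (b)$\Leftrightarrow$(c) of Definition~\ref{def:maximal} furnished by Proposition~\ref{prop:dual-lp}, but they constitute the technical heart of the argument.
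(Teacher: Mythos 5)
Your proposal is correct and follows essentially the same route as the paper: the forward direction via Proposition~\ref{prop:MDSb-inter-form} and maximality condition (b), and the reverse direction via Corollary~\ref{cor:mdsb-ge}, repeated minimal contractions from Lemma~\ref{lem:pad-mdsb}(a) (with the codimension-$1$ observation, which is the paper's Proposition~\ref{prop:sub1}) followed by minimal expansions from Lemma~\ref{lem:pad-mdsb}(b) to reach a maximal configuration. The only difference is presentational: you argue directly that $D \le d$, whereas the paper runs the same contraction--expansion chain as a proof by contradiction.
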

\begin{proof}
\change{We first prove that (a) implies (b).} \change{First, by Proposition~\ref{prop:MDSb-inter-form} and our assumption that $(U,V)$ is $\MDSb(\ell)$, we have for any maximal configuration $(\sigma_1, A_1), \hdots, (\sigma_{\ell}, A_\ell)$ that 
\[
\dim \left(\bigcap_{i=1}^\ell (U_{\le \sigma_i} + V_{A_i})\right) =   \dim \left(\bigcap_{i=1}^\ell (W_{\le \sigma_i} + W'_{A_i})\right) = \max_{P_1 \sqcup \cdots \sqcup P_s = [\ell]} \left[-\change{k}(s-1) + \sum_{i=1}^s (\sigma_{P_i} + |A_{P_i}|).\right]
\]
Now, by condition (b) of Definition~\ref{def:maximal}, we have that $-\change{k}(s-1) + \sum_{i=1}^s (\sigma_{P_i} + |A_{P_i}|) \le 0$ for every partition $P_1 \sqcup \cdots \sqcup P_s = [\ell]$. This proves the ``$\le$'' direction of (\ref{eq:inter-basis}). The ``$\ge$'' direction follows from the fact that dimension is nonnegative.}

We now focus on showing that (b) implies (a). Assume for sake of contradiction there exists a proper configuration $(\sigma_1, A_1), \hdots, (\sigma_{\ell}, A_\ell)$ such that (\ref{eq:mds-basis}) is violated. Since $(U,V)$ is $\MDSb(1)$, by Corollary~\ref{cor:mdsb-ge}, there are two possibilities
  \begin{itemize}
    \item[1.] $\gid_k((\sigma_1, A_1), \hdots, (\sigma_{\ell}, A_\ell)) = d\ge 1$ but $\dim \left(\bigcap_{i=1}^\ell (U_{\le \sigma_i} + V_{A_i})\right) \ge d+1$.
    \item[2.] $\gid_k((\sigma_1, A_1), \hdots, (\sigma_{\ell}, A_\ell)) = 0$, but $\dim \left(\bigcap_{i=1}^\ell (U_{\le \sigma_i} + V_{A_i})\right) \ge 1$.
  \end{itemize}

  In Case 1, we apply part (a) of Lemma~\ref{lem:pad-mdsb} to find a minimal contraction $(\sigma'_1, A'_1), \hdots, (\sigma'_{\ell}, A'_\ell)$ with $\gid_k((\sigma'_1, A'_1), \hdots, (\sigma'_{\ell}, A'_\ell)) = d-1$. Since the contraction is minimal, by Proposition~\ref{prop:sub1}, we have that
  \[
    \dim \left(\bigcap_{i=1}^\ell (U_{\le \sigma'_i} + V_{A'_i})\right) \ge \dim \left(\bigcap_{i=1}^\ell (U_{\le \sigma_i} + V_{A_i})\right) - 1 \ge d,
  \]
  where $Y = (U_{\le \sigma_i} + V_{A_i})$ and $Y' = U_{\le \sigma'_i} + V_{A'_i}$ for the pair which changed and $X$ is the intersection of the remaining spaces. By repeating this argument $d$ times, we eventually reach a proper configuration $(\sigma''_1, A''_1), \hdots, (\sigma''_{\ell}, A''_\ell)$ for which $\gid_k((\sigma''_1, A''_1), \hdots, (\sigma''_{\ell}, A''_\ell)) = 0$ but $\dim \left(\bigcap_{i=1}^\ell (U_{\le \sigma''_i} + V_{A''_i})\right) \ge 1$. Thus, we may now assume we are in Case 2.

 For Case 2, by Proposition~\ref{prop:MDSb-inter-form}, we must have that $\sum_{i=1}^\ell \change{\sigma}_i + |\change{A}_i| \le (\ell-1)k$. However, if we have equality, then we violate our assumption (b). Thus, we may assume the inequality is strict. We then apply part (b) of Lemma~\ref{lem:pad-mdsb} to find a minimal expansion $(\sigma'_1, A'_1), \hdots, (\sigma'_{\ell}, A'_\ell)$ \change{of $(\sigma_1, A_1), \hdots, (\sigma_\ell, A_\ell)$} with $\gid_k((\sigma'_1, A'_1), \hdots, (\sigma'_{\ell}, A'_\ell)) = 0$. Since we performed an expansion,
  \[
    \dim \left(\bigcap_{i=1}^\ell (U_{\le \sigma'_i} + V_{A'_i})\right) \ge \dim \left(\bigcap_{i=1}^\ell (U_{\le \sigma_i} + V_{A_i})\right) \ge 1.
  \]
  Thus, we still lie in Case 2. We repeat until we reach a maximal configuration $(\sigma''_1, A''_1), \hdots, (\sigma''_{\ell}, A''_\ell)$ for which $\dim \left(\bigcap_{i=1}^\ell (U_{\le \sigma''_i} + V_{A''_i})\right) \ge 1$, a contradiction.
\end{proof}

\subsection{Inductive Framework}\label{subsec:induct}

Assume we are seeking to prove that $(U,V)$ is $\MDSb(\ell)$ for $\ell \ge 2$, and assume we have already established that $(U,V)$ is $\MDSb(1)$. By Lemma~\ref{lem:case-1} and Proposition~\ref{prop:dual-lp}, it suffices to consider $(\ell,k)$-configurations tuples $(\sigma_1, A_1, \delta_1), \hdots, (\sigma_\ell, A_\ell, \delta_\ell)$ which satisfy the following conditions: 
\begin{itemize}
\item[(a*)] For all $i \in [\ell]$, $\sigma_i, \delta_i \ge 0$
\item[(b*)] $\sum_{i=1}^{\ell} \sigma_i + |A_i| = (\ell-1)k$
\item[(c*)] $\sum_{i=1}^{\ell} \delta_i = k$
\item[(d*)] For all $J \subseteq [\ell]$ nonempty, 
\begin{align}
\sigma_J + \left|A_J\right| \le k - \sum_{i \in J} \delta_i \label{eq:hall}
\end{align}
\end{itemize}
\change{We call such $(\sigma_1, A_1, \delta_1), \hdots, (\sigma_\ell, A_\ell, \delta_\ell)$ an \emph{$(\ell, k)$-null configuration}.} Note that conditions (b*), (c*), and (d*) together imply a fifth condition:
\begin{itemize}
\item[(e*)] For all $i \in [\ell]$, that $\sigma_i + |A_i| + \delta_i = k$.
\end{itemize}

Note by (\ref{eq:hall}) with $J = [\ell]$, we have that $\sigma_i = 0$ for at least one $i$, and for all $j \in [n]$ there is at least one $i \in [\ell]$ such that $j \not\in A_i$. For each \change{$(\ell,k)$-null configuration}, we seek to show that
\begin{align}
  \dim \left(\bigcap_{i=1}^\ell (U_{\le \sigma_i} + V_{A_i})\right) = 0.\label{eq:goal}
\end{align}

By Lemma~\ref{lem:mds5}, we have that (\ref{eq:goal}) is equivalent to
\begin{align}
  \det \begin{pmatrix}
    I_k & U|_{\le \sigma_1} & V|_{A_1} & & & & & \\
    I_k & & & U|_{\le \sigma_2} & V|_{A_2} & & &\\
    \vdots & & & & & \ddots & &\\
    I_k & & & & & & U|_{\le \sigma_\ell} &  V|_{A_\ell}
  \end{pmatrix} \neq 0\label{eq:goal-det}
\end{align}

As appears in the original proofs~\cite{lovett2018gmmds,yildiz2019gmmds} of the GM-MDS Theorem, an important special case is when there exists $J \subseteq [\ell]$ with $2 \le |J| \le \ell-1$ for which (\ref{eq:hall}) is tight. It turns out these cases follow only from the fact that $(U,V)$ is $\MDSb(\ell-1)$.

\begin{lemma}\label{lem:case-1-full}
  Let $(U,V)$ be $\MDSb(\ell-1)$ for $\ell \ge 2$. Let $(\sigma_1, A_1, \delta_1), \hdots, (\sigma_\ell, A_\ell, \delta_\ell)$ be \change{an $(\ell,k)$-null configuration}. Further assume there exists $J \subseteq [\ell]$ with $2 \le |J| \le \ell-1$ for which (\ref{eq:hall}) is tight. Then, (\ref{eq:goal}) holds.
\end{lemma}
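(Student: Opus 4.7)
The plan is to follow the Case~1 template from the original GM-MDS proofs of Lovett and Yildiz--Hassibi: use the tight subset $J$ to \emph{contract} the configuration from $\ell$ pieces down to $\ell' := \ell - |J| + 1$ pieces, and then invoke the $\MDSb(\ell-1)$ hypothesis on the smaller configuration. Since $2 \le |J| \le \ell-1$, we have $2 \le \ell' \le \ell-1$, so both applications of the induction hypothesis (one on the ``inner'' family indexed by $J$ and one on the reduced family) are legal.

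Set $\tau := \sigma_J$, $B := A_J$, and $\epsilon := \sum_{i \in J} \delta_i$; by tightness of $J$ we have $\tau + |B| + \epsilon = k$. The main step is to establish the identity
\[
    \bigcap_{i \in J}(U_{\le \sigma_i} + V_{A_i}) \;=\; U_{\le \tau} + V_B.
\]
The inclusion ``$\supseteq$'' is immediate from $\tau \le \sigma_i$ and $B \subseteq A_i$ for each $i \in J$. For ``$\subseteq$'' I would compare dimensions: the right-hand side has dimension $\tau + |B|$ by Proposition~\ref{prop:mdsb-1}, and by $\MDSb(|J|)$ (which follows from $\MDSb(\ell-1)$ and monotonicity) the left-hand side has dimension $\gid_k(\{(\sigma_i, A_i) : i \in J\})$. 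That this generic dimension equals $\tau + |B|$ follows because the trivial partition $\{J\}$ already attains $\tau + |B|$, while for any refinement $P_1 \sqcup \cdots \sqcup P_s = J$, the Hall condition (d*) applied to each $P_r$ gives $\sigma_{P_r} + |A_{P_r}| \le k - \sum_{i \in P_r} \delta_i$; summing over $r$ and subtracting $(s-1)k$ yields at most $k - \epsilon = \tau + |B|$.

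With this identity the full intersection reduces to
\[
    (U_{\le \tau} + V_B) \,\cap\, \bigcap_{i \notin J}(U_{\le \sigma_i} + V_{A_i}),
\]
which is an intersection of $\ell'$ spaces corresponding to the tuple consisting of $(\tau, B, \epsilon)$ together with $\{(\sigma_i, A_i, \delta_i) : i \notin J\}$. I would then verify that this tuple is an $(\ell', k)$-null configuration. Conditions (a*)--(c*) are direct arithmetic: using (e*), the $J$-contribution to $\sum (\sigma_i + |A_i|)$ is $|J|k - \epsilon$, so adding $\tau + |B| = k - \epsilon$ and the contributions from $i \notin J$ yields $(\ell'-1)k$. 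The Hall condition (d*) for any nonempty $J' \subseteq [\ell']$ reduces to the original Hall condition applied either to $J' \subseteq [\ell] \setminus J$ (if the combined index $*$ lies outside $J'$) or to $J \cup (J' \setminus \{*\}) \subseteq [\ell]$ otherwise, in each case with $\delta_*$ identified with $\epsilon$.

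Finally, since $\ell' \le \ell-1$, applying $\MDSb(\ell-1)$ to the reduced configuration shows its intersection has dimension equal to the generic intersection dimension, and this generic dimension is $0$ by Proposition~\ref{prop:MDSb-inter-form}: the trivial partition contributes $(\ell'-1)k - (\ell'-1)k = 0$ thanks to (b*), while every refinement contributes at most $0$ thanks to (d*). The main bookkeeping obstacle is the careful verification of the Hall condition under contraction together with the dimension computation in the key identity; both are essentially algebraic once the combinatorics of null configurations is set up.
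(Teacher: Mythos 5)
Your proposal is correct and follows essentially the same route as the paper's proof: the tight set $J$ is used to collapse $\bigcap_{i\in J}(U_{\le\sigma_i}+V_{A_i})$ to $U_{\le\sigma_J}+V_{A_J}$ via a dimension count, and the resulting reduced tuple is checked to be a null configuration so that the $\MDSb(\ell-1)$ hypothesis finishes the argument. The only cosmetic difference is that the paper states the two key facts for generic matrices first and then transfers them to $(U,V)$, whereas you argue directly on $(U,V)$; the underlying computations are identical.
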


\begin{proof}
Let $J^*$ be a specific set of size between $2$ and $n-1$ for which (\ref{eq:hall}) is tight. Like in Case 1 of the proof of the GM-MDS theorem, we can verify that for generic matrices $W \in \F^{k \times n}$ and $W' \in \F^{n \times n}$, we have that
\begin{align}
  \bigcap_{i \in J^*} (W_{\le \sigma_i} + W'_{A_i}) &= W_{\le \sigma_{J^*}} + W'_{A_{J^*}}\label{eq:part-a}\\
(W_{\le \sigma_{J^*}} + W'_{A_{J^*}}) \cap \bigcap_{i \not\in J^*} (W_{\le \sigma_i} + W'_{A_i}) &= 0.\label{eq:part-b}
\end{align}
In particular, to check (\ref{eq:part-a}), the $\supseteq$ inclusion is straightforward, so we just need to check both sides have the same dimension. For any partition $P_1 \sqcup \cdots \sqcup P_s = J^*$ we have that by (\ref{eq:hall}) that
\begin{align*}
  -\change{k}(s-1) + \sum_{i=1}^s (\sigma_{P_i} + |A_{P_i}|) &\le -k(s-1) + \sum_{i=1}^s \left[\change{k} - \sum_{j \in P_i} \delta_j\right]\\
                                                    &= k - \sum_{j \in J^{*}} \delta_j\\
                                                    &= |\sigma_{J^*}| + |A_{J^*}|, &\text{ (assumption on $J^*$)}
\end{align*}
where equality is achieved if $s=1$.

To check (\ref{eq:part-b}), it suffices to check (\ref{eq:hall}) (and invoke Proposition~\ref{prop:dual-lp}) for the tuple of sets $\{(\sigma_i, A_i, \delta_i) : i \not\in J^*\} \cup \{(\sigma_{J^*}, A_{J^*}, \delta_{J^*})\}$, where $\delta_{J^*} := \sum_{i \in J^*} \delta_i$. The intersections that do not include $(\sigma_{J^*}, A_{J^*}, \delta_{J^*})$ follow by assumption on the structure of the original configuration. Otherwise, for any $J' \subseteq J^*$ (possibly empty), we need to check that
\[
   \sigma_{J' \cup J^*} + |A_{J' \cup J^*}| \le k - \sum_{i \in J'} \delta_i + \delta_{J^*} = k - \sum_{i \in J' \cup J^*} \delta_i,
\]
which again follows from assumption.

To conclude this case, by the induction hypothesis, we have that $(U, V)$ is $\MDSb(\ell')$ for all $\ell' < \ell$. Thus, we have that both (\ref{eq:part-a}) and (\ref{eq:part-b}) hold with $(W,W')$ replaced by $(U, V)$. Note that for (\ref{eq:part-a}) we are using the fact that the $\supseteq$ direction holds unconditionally.  Therefore, (\ref{eq:goal}) holds.
\end{proof}

We use this inductive template in both Section~\ref{sec:algebra} and Section~\ref{sec:duality} to prove various generalizations of GM-MDS. \change{Each inductive proofs requires a different ordering on the set of configurations in order for the proof to go through, so we carefully describe each induction in the respective section.}

\section{Algebraic GM-MDS for Monomial Codes}\label{sec:algebra}

The goal of this section is to prove the following result. In Section~\ref{sec:irr-var}, we shall show how to extend this result to irreducible varieties: Theorem~\ref{thm-mdsIrred}.

\begin{theorem}\label{thm:gm-mds-primal}
Let $e_1 < e_2 < e_3 < \cdots < e_k$ be integers. Let $\alpha_1, \hdots, \alpha_n$ be generic points in an algebraically closed field $\F$. Then, for all $\ell \ge 2$, the \change{following} matrix \change{is $\MDS(\ell).$}
\begin{align}
  V := \begin{pmatrix}
    \alpha_1^{e_1} & \alpha_2^{e_1} & \cdots & \alpha_n^{e_1}\\
    \alpha_1^{e_2} & \alpha_2^{e_2} & \cdots & \alpha_n^{e_2}\\
         \vdots & \vdots & \ddots & \vdots\\
    \alpha_1^{e_k} & \alpha_2^{e_k} & \cdots & \alpha_n^{e_k}\\
  \end{pmatrix}\change{.} \label{mat:pseudo-rs}
\end{align}
\end{theorem}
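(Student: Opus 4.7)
The plan is to prove the stronger claim that $(I_k, V)$ is a $(k,n,k)$-$\MDSb(\ell)$ code for every $\ell \ge 1$; Proposition~\ref{prop:remove-basis} then immediately gives that $V$ is $\MDS(\ell)$, which is what the theorem asks. The argument will proceed by induction on a suitable complexity measure; a convenient choice is lexicographic on $(k, n, \ell)$, since the three cases below reduce in different coordinates. The base case $\ell = 1$ is immediate: for generic $\alpha_1,\dots,\alpha_n$ and distinct exponents $e_1 < \cdots < e_k$, any $\sigma$ standard basis vectors together with $|A|$ columns of $V$ have rank $\min(\sigma + |A|, k)$, matching the generic dimension formula of Proposition~\ref{prop:MDSb-inter-form}.

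For the inductive step, Lemma~\ref{lem:case-1} together with Proposition~\ref{prop:dual-lp} lets me fix an arbitrary $(\ell,k)$-null configuration $(\sigma_1,A_1,\delta_1),\dots,(\sigma_\ell,A_\ell,\delta_\ell)$ and reduce the problem to showing that the block determinant in (\ref{eq:goal-det}) is a nonzero polynomial in $\alpha_1,\dots,\alpha_n$. I will then split into the three cases of Lovett~\cite{lovett2018gmmds} and Yildiz--Hassibi~\cite{yildiz2019gmmds}. Case~1 (some $J \subseteq [\ell]$ with $2 \le |J| \le \ell-1$ makes (\ref{eq:hall}) tight) is handled directly by Lemma~\ref{lem:case-1-full}. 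In Case~2 (exactly one $\sigma_i$, say $\sigma_1$, is zero, so $\sigma_j \ge 1$ for $j \ge 2$), I exploit that $e_1 \in U_{\le \sigma_j}$ for every $j \ge 2$: dually, the annihilators of $U_{\le \sigma_j} + V_{A_j}$ for $j \ge 2$ all lie in the hyperplane $\{c_1 = 0\} \cong (\F^{k-1})^*$. After checking (via genericity of the $\alpha_m$'s) that the annihilator of $V_{A_1}$ contributes a covector with nonzero first coordinate, the required vanishing of the intersection becomes an $(\ell, k-1)$-null $\MDSb$ condition for the monomial code on the shifted exponents $e_2, \dots, e_k$, a strictly smaller instance to which the induction hypothesis applies.

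The main work is Case~3 (neither of the above). Following the classical GM-MDS strategy, I will choose an index $j^* \in [n]$ and pass to the limit $\alpha_{j^*} \to 0$. Because each column of $V$ is a monomial, $\alpha_{j^*}^{-e_1} V_{j^*} \to e_1$ in this limit, so in the $\MDSb$ framework the limit corresponds to the new configuration obtained by removing $j^*$ from each $A_i$ that contains it and incrementing the corresponding $\sigma_i$ by one (``absorbing'' $e_1$ into $U_{\le \sigma_i+1}$). A combinatorial argument modeled on Yildiz--Hassibi shows that—precisely because we are outside Cases~1 and~2—such a $j^*$ can be chosen so that the new tuple is still an $(\ell,k)$-null configuration on $n-1$ evaluation points. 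The induction hypothesis then produces a nonzero determinant for the reduced instance, which equals, up to a nonzero scalar, the leading coefficient in $\alpha_{j^*}$ of the original determinant after extracting the appropriate factor of $\alpha_{j^*}^{e_1}$ from each occurrence of the $j^*$-th column in the block matrix.

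\textbf{Main obstacle.} The crux will be the combinatorial choice of $j^*$ in Case~3: the index must be such that, after the simultaneous shift $\sigma_i \mapsto \sigma_i + 1$ for all $i$ with $j^* \in A_i$ and $A_i \mapsto A_i \setminus \{j^*\}$, the strict Hall conditions (\ref{eq:hall}) are preserved for every $J \subseteq [\ell]$. The hypothesis that we are neither in Case~1 nor Case~2 is precisely what supplies the combinatorial slack to locate such a $j^*$, and extracting it requires a careful inspection of which subsets $J$ can become tight after the shift. The algebraic side is comparatively clean for monomials: since each $V_j$ has an unambiguous leading term $\alpha_j^{e_1} e_1$, the basis $U = I_k$ matches the code perfectly and no change of basis is needed at any step---this is why the $\MDSb$ framework of Section~\ref{sec:mds-basis}, with its explicit tracking of basis information, is the right language to carry out the induction.
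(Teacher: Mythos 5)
Your high-level plan coincides with the paper's: prove the stronger statement that $(I_k,V)$ is $(k,n,k)$-$\MDSb(\ell)$, reduce via Lemma~\ref{lem:case-1} and Proposition~\ref{prop:dual-lp} to verifying (\ref{eq:goal-det}) for null configurations, and run the Lovett/Yildiz--Hassibi three-case induction; Case~1 is identical, and your annihilator phrasing of Case~2 is just the dual of the paper's reduction (delete the first row of $V$, decrement the positive $\sigma_j$'s, pass to the exponents $e_2,\dots,e_k$ in dimension $k-1$). The genuine gap is the algebraic identity underlying Case~3. You normalize every occurrence of column $j^*$ by $\alpha_{j^*}^{e_1}$ and send $\alpha_{j^*}\to 0$, so each occurrence degenerates to the first standard basis vector, and you claim this realizes the shifted configuration by ``absorbing'' that vector into $U_{\le\sigma_i+1}$. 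But whenever $j^*\in A_i$ with $\sigma_i\ge 1$, the first standard basis vector already lies in $U_{\le\sigma_i}$, so the limiting subspace is $U_{\le\sigma_i}+V_{A_i\setminus\{j^*\}}$, not $U_{\le\sigma_i+1}+V_{A_i\setminus\{j^*\}}$; in determinant terms the degenerated column coincides with the first column of $I_k|_{\le\sigma_i}$ inside block $i$, so the coefficient of $\alpha_{j^*}^{\,e_1\cdot\#\{i:\,j^*\in A_i\}}$ that your limit extracts is identically zero. You cannot dodge this by a cleverer choice of $j^*$: the index is forced into $A_{\ell-1}\setminus A_\ell$ (where $\sigma_{\ell-1}=\sigma_\ell=0$), but it may also lie in several $A_i$ with $\sigma_i\ge 1$, and the combinatorics genuinely require incrementing those $\sigma_i$ as well.

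The fix --- and this is exactly what the paper does --- is to extract the coefficient of $\alpha_{j^*}^{e}$ with $e=\sum_{i:\,j^*\in A_i}e_{\sigma_i+1}$: in block $i$ one must take the monomial of column $j^*$ sitting in row $\sigma_i+1$, the first row not reserved by the basis columns; this is what actually corresponds to replacing $U_{\le\sigma_i}$ by $U_{\le\sigma_i+1}$. The extraction is legitimate because any expansion term that uses a row $r\le\sigma_i$ of column $j^*$ in block $i$ vanishes (the column of $I_k|_{\le\sigma_i}$ equal to the $r$-th standard basis vector then has no available row), so $e$ is the least $\alpha_{j^*}$-degree of a surviving term and its coefficient equals, up to sign, the determinant of the reduced instance, in which $\alpha_{j^*}$ no longer appears; nonvanishing of the latter then forces nonvanishing of the former. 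With this replacement, your Case~3 (together with the deferred combinatorial checks: existence of $j^*$, preservation of (\ref{eq:hall}) using ``not Case~1,'' the subcase $\delta_{\ell-1}=\delta_\ell=0$, and your lexicographic measure, which does decrease in all three cases) matches the paper's argument and goes through.
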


By Proposition~\ref{prop:remove-basis}, it suffices to show that $(I\change{_k}, V)$ is \change{$(k,n,k)$-}$\MDSb(\ell)$ for all $\ell \ge 1$. We proceed by the inductive approach of Section~\ref{subsec:induct}. 

Consider an \change{$(\ell, k)$-null configuration $(\sigma_1, A_1, \delta_1), \hdots, (\sigma_\ell, A_\ell, \delta_\ell)$.} We seek to show (\ref{eq:goal}) that
\[
  \dim \left(\bigcap_{i=1}^\ell (\change{I}_{\le \sigma_i} + \change{V}_{A_i})\right) = 0.
\]

We shall prove this by induction on $\ell$, $k$ and $\sum_{i=1}^{\ell} |A_i|$. We first consider a few base cases.

\paragraph{Base cases.} For the base case of $\ell = 1$, it suffices to prove that for all $\sigma \in \{0, 1, \hdots, k\}$, we have that
\[
\det \begin{pmatrix}
1 & 0 & \cdots & \alpha_1^{e_1} & \alpha_2^{e_1} & \cdots & \alpha_{k-\sigma}^{e_1}\\
0 & 1  &  \cdots &  \alpha_1^{e_2} & \alpha_2^{e_2} & \cdots & \alpha_{k-\sigma}^{e_2}\\
 \vdots & \vdots & \ddots   &    \vdots & \vdots & \ddots & \vdots\\
0 & 0 &  \cdots & \alpha_1^{e_k} & \alpha_2^{e_k} & \cdots & \alpha_{k-\sigma}^{e_k}\\
\end{pmatrix} \neq 0,
\]
where the first $\sigma$ columns are standard basis vectors. This follows from the fact that the coefficient of $\alpha_1^{e_{\sigma + 1}} \cdots \alpha_{k-\sigma}^{e_{k}}$ in the determinant expansion is $\pm 1$.

The base case of $k = 1$ is obvious, as the generator matrix of any $(n,1)$-code with every entry nonzero is $\MDS(1)$ and thus $\MDS(\ell)$ for all $\ell$ (c.f., \cite{bgm2021mds}).

The final base case is $\sum_{i=1}^{\ell} |A_i| = 0$, or each $A_i = \emptyset$. In this case,
\[
\dim \left(\bigcap_{i=1}^\ell \change{I}_{\le \sigma_i}\right) = \dim \change{I}_{\le \sigma_{[\change{\ell}]}} = \sigma_{[\change{\ell}]} \change{= \min_{i \in [\ell]} \sigma_i},
\]
as desired.

\paragraph{Inductive step.} \change{Fix $\ell \ge 2, k \ge 2,$ and $N \ge 1$. As our inductive hypothesis we assume that (\ref{eq:goal}) holds for any $(\ell',k')$-null configuration $(\sigma'_1, A'_1, \delta'_1), \hdots, (\sigma'_{\ell'}, A'_{\ell'}, \delta'_{\ell'})$ whenever one of the following holds.
\begin{itemize}
\item $\ell' < \ell$ (in particular, we may assume that $(I_k, V)$ is $\MDS_b(\ell-1)$),
\item $\ell' = \ell$ and $k' < k$, or
\item $\ell' = \ell, k' = k'$ and $\sum_{i=1}^{\ell} |A'_i| < N$
\end{itemize}
We now fix an $(\ell, k)$-null configuration $(\sigma_1, A_1, \delta_1), \hdots, (\sigma_\ell, A_\ell, \delta_\ell)$ and seek to show it satisfies (\ref{eq:goal}). We divide this analysis into three cases.}

\begin{enumerate}

\item There is $J \subseteq [\ell]$ with $2 \le |J| \le \ell-1$ for which (\ref{eq:hall}) is tight.

\item $\sigma_i = 0$ for exactly one $i \in [\ell]$.

\item None of the above.

\end{enumerate}

\paragraph{Case 1.} This case follows immediately from Lemma~\ref{lem:case-1-full}.

\paragraph{Case 2.} Without loss of generality we may assume that $\sigma_1, \hdots, \sigma_{\ell-1} \ge 1$ and $\sigma_\ell = 0$. Recall we seek to verify that
\[
\det \begin{pmatrix}
    I_k & I_k|_{\le \sigma_1} & V|_{A_1} & & & & & & \\
    I_k & & & I_k|_{\le \sigma_2} & V|_{A_2} & & & & \\
    \vdots & & & & & \ddots & & &\\
    I_k & & & & & & I_k|_{\le \sigma_{\ell-1}} &  V|_{A_{\ell-1}} & \\
    I_k & & & & & & & &  V|_{A_\ell}
  \end{pmatrix} \neq 0
\]
Let $V'$ be $V$ with the first row removed. \change{We partially compute the above determinant by picking a judicious choice of columns. For each $i \in [\ell-1]$, consider the column block containing $I_{k}|_{\le \sigma_i}$ and $V|_{A_i}].$ Since $\sigma_i \ge 1$, we have that the first column of $I_{k}|_{\le \sigma_i}$ has exactly one nonzero entry. By expanding along this column, we effectively delete the first row of $I_{k}|_{\le \sigma_i}$ and $V|_{A_i}$, resulting in the block becoming $[I_{k-1}|_{\le \sigma_i-1}, V'|_{A_i}].$}

\change{With these $\ell-1$ column evaluations, the column block containing $\ell$ copies of $I_k$ now consists of $\ell-1$ copies of $I_k$ with the first row removed along with one untouched column of $I_k$. Thus, if we expand along the first column of the matrix, we delete the first row $V|_{A_\ell}$. After all these operations, we have established it is equivalent to check whether}
\begin{align}
\det \begin{pmatrix}
    I_{k-1} & I_{k-1}|_{\le \sigma_1-1} & V'|_{A_1} & & & & & & \\
    I_{k-1} & & & I_{k-1}|_{\le \sigma_2-1} & V'|_{A_2} & & & & \\
    \vdots & & & & & \ddots & & &\\
    I_{k-1} & & & & & & I_{k-1}|_{\le \sigma_{\ell-1}-1} &  V'|_{A_{\ell-1}} & \\
    I_{k-1} & & & & & & & &  V'|_{A_\ell}
  \end{pmatrix} \neq 0. \label{mat:step-2}
\end{align}

Consider $(\sigma'_1, A'_1, \delta'_1), \hdots, (\sigma'_\ell, A'_\ell, \delta'_\ell)$, where for $i \in [\change{\ell-1}]$, we have that $(\sigma'_i, A'_i, \delta'_i) = (\sigma_i -1 , A_i, \delta_i)$ and $(\sigma'_\ell, A'_\ell, \delta'_\ell) = (0, A_\ell, \delta_\ell-1)$. It is straightforward to check that: (a*), (b*), (c*) hold in $k-1$ dimensional space. For (d*), note that for any $J \subseteq [\ell-1]$ nonempty, we have that
\[
  \sigma'_J + \left|A'_J\right| = \sigma_J-1 + |A_J| \le k - 1 - \sum_{i \in J} \delta_i = k - 1 - \sum_{i \in J} \delta'_i.
\]
Thus, \change{$(\sigma'_1, A'_1, \delta'_1), \hdots, (\sigma'_\ell, A'_\ell, \delta'_\ell)$ is an $(\ell,k-1)$-null configuration.} \change{By} the induction hypothesis, we can deduce that (\ref{eq:goal-det}) holds for $(I_{k-1}, V')$, which is precisely (\ref{mat:step-2}).

\paragraph{Case 3.} We may assume that $\sigma_{\ell-1} = \sigma_\ell = 0$. By (e*) of the inductive framework, we know then that $|A_{\ell-1}| = k - \delta_{\ell-1}$ and $|A_{\ell}| = k - \delta_\ell$. Further by (d*), we have that $|A_{\ell-1} \cap A_{\ell}| \le k - \delta_\ell - \delta_{\ell-1}$. In particular, we must have that $\delta_{\ell-1} = \delta_\ell = 0$ or $A_{\ell-1} \neq A_\ell$. In the former case, it is straightforward to check that $(\sigma_1, A_1, \change{\delta_1}), \hdots, (\sigma_{\ell-1}, A_{\ell-1}, \delta_{\ell-1})$ satisfy conditions (a*)-(d*), and so by the induction hypothesis
\[
\bigcap_{i=1}^\ell (I_{\le \sigma_i} + V_{A_i}) = \bigcap_{i=1}^{\ell-1} (I_{\le \sigma_i} + V_{A_i}) = 0.
\]
In the latter case, we may without loss of generality assume there is $j^{*} \in A_{\ell-1} \setminus A_\ell$. \change{We now} define $(\sigma'_1, A'_1, \delta'_1), \hdots, (\sigma'_\ell, A'_\ell, \delta'_\ell)$ as follows. For all $i \in [\ell]$, let
\[
  (\sigma'_i, A'_i, \delta'_i) = \begin{cases}
                                   (\sigma_i, A_i, \delta_i) & j^{*} \not\in A_i\\
                                   (\sigma_i + 1, A_i \setminus \{j^{*}\}, \delta_i) & \text{otherwise}. 
                                 \end{cases}
\]
It is clear this new system satisfies conditions (a*)-(c*) of the inductive framework. Further, (d*) holds if $|J| = 1$ or $|J| = \ell$. Now assume $2 \le |J| \le \ell - 1$, observe that
\[
  \sigma'_J + |A'_J| \le \sigma_J + |A_J| + 1 \le k - \sum_{i \in J} \delta_i = k - \sum_{i \in J} \delta'_i,
\]
where the middle inequality follows from us not being in Case 1. Thus, \change{$(\sigma'_1, A'_1, \delta'_1), \hdots, (\sigma'_\ell, A'_\ell, \delta'_\ell)$ is an $(\ell,k)$-null configuration.} Note that $\sum_{i=1}^{\ell} |A'_i| < \sum_{i=1}^{\ell} |A_i|\change{= N}$. Therefore, by the induction hypothesis, we have that
\begin{align}
\bigcap_{i=1}^\ell (I_k|_{\le \sigma'_i} + V_{A'_i}) &= 0\change{.} \label{eq:gm-mds-mono}
\end{align}
To finish, it suffices to prove the following proposition\change{.}
\begin{proposition}
\[
\bigcap_{i=1}^\ell (I_k|_{\le \sigma'_i} + V_{A'_i}) = 0 \implies \bigcap_{i=1}^\ell (I_k|_{\le \sigma_i} + V_{A_i}) = 0\change{.}
\]

\end{proposition}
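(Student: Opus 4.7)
The plan is to translate both intersection conditions into nonvanishing of block determinants via Lemma~\ref{lem:mds5}, and then---in analogy with the $\alpha_{j^*} \to 0$ specialization used in~\cite{lovett2018gmmds,yildiz2019gmmds}---extract $\det M'$ (for the primed configuration) as a leading coefficient in $\alpha_{j^*}$ of $\det M$ (for the unprimed one). Let $M$ and $M'$ denote the $\ell k \times \ell k$ block matrices from~(\ref{eq:goal-det}) for the unprimed and primed configurations. Both configurations satisfy $\sum_i(\sigma_i + |A_i|) = (\ell-1)k$ and the individual dimensions are maximal because $(I_k, V)$ is $\MDSb(1)$, so Lemma~\ref{lem:mds5} makes the intersection triviality equivalent to the determinants being nonzero. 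By the induction hypothesis applied to the primed configuration (an $(\ell,k)$-null configuration with strictly smaller $\sum_i |A'_i|$), $\det M'$ is a nonzero polynomial in $\{\alpha_j : j \ne j^*\}$.

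The key manipulation is a sequence of determinant-preserving column operations on $M$. For each block $i$ with $j^* \in A_i$, the column of $M$ recording $V|_{j^*}$ in block $i$ equals $(\alpha_{j^*}^{e_1}, \ldots, \alpha_{j^*}^{e_k})^T$ in block $i$ and zero elsewhere. For each $r \in [\sigma_i]$, subtract $\alpha_{j^*}^{e_r}$ times the column of $M$ carrying $e_r$ in block $i$ (drawn from $I_k|_{\le \sigma_i}$) from this $V|_{j^*}$ column. These operations zero out the first $\sigma_i$ entries of the $V|_{j^*}$ column, leaving $(0, \ldots, 0, \alpha_{j^*}^{e_{\sigma_i+1}}, \ldots, \alpha_{j^*}^{e_k})^T$. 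Factoring $\alpha_{j^*}^{e_{\sigma_i+1}}$ out of each such column yields $\det M = \alpha_{j^*}^{s} \cdot \det \widehat{M}$, where $s := \sum_{i : j^* \in A_i} e_{\sigma_i+1}$ and the corresponding column of $\widehat{M}$ becomes the standard basis vector $e_{\sigma_i+1}$ in block $i$ upon setting $\alpha_{j^*} = 0$.

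Block-by-block, $\widehat{M}|_{\alpha_{j^*}=0}$ then matches $M'$ up to a permutation of columns: for $i$ with $j^* \in A_i$, the primed block $[I_k \mid I_k|_{\le \sigma_i+1} \mid V|_{A_i\setminus\{j^*\}}]$ rearranges as $[I_k \mid I_k|_{\le \sigma_i} \mid e_{\sigma_i+1} \mid V|_{A_i\setminus\{j^*\}}]$, which coincides with the corresponding block of $\widehat{M}|_{\alpha_{j^*}=0}$; for $i$ with $j^* \notin A_i$, both blocks are identical. Hence $\det \widehat{M}|_{\alpha_{j^*} = 0} = \pm \det M' \ne 0$, so $\widehat{M}$ is not identically zero as a polynomial; consequently $\det M = \alpha_{j^*}^{s} \det \widehat{M}$ is nonzero, and the reverse direction of Lemma~\ref{lem:mds5} yields $\bigcap_i (I_k|_{\le \sigma_i} + V_{A_i}) = 0$.

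The main obstacle---and the reason the intermediate column operations are necessary---is that a naive specialization $\alpha_{j^*} = 0$ directly in $M$ either annihilates $V|_{j^*}$ entirely (when $e_1 > 0$) or, after factoring out $\alpha_{j^*}^{e_1}$, duplicates the $e_1$ column already present in $I_k|_{\le \sigma_i}$ (when $\sigma_i \ge 1$), in both cases killing the determinant at the specialization. Stripping off the lower-order terms $\alpha_{j^*}^{e_r} e_r$ for $r \le \sigma_i$ first is precisely what exposes the correct leading monomial $\alpha_{j^*}^{e_{\sigma_i+1}} e_{\sigma_i+1}$ in block $i$ that matches the primed configuration; verifying that this extraction is clean on every block of $M$ simultaneously is the technical heart of the argument.
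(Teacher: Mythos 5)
Your proposal is correct and follows essentially the same route as the paper: both pass to the block determinants via Lemma~\ref{lem:mds5} and identify $\pm\det M'$ (the primed determinant) as the coefficient of $\alpha_{j^*}^{e}$, with $e=\sum_{i:\,j^*\in A_i} e_{\sigma_i+1}$, in the unprimed determinant $\det M$. The only difference is in mechanism, not substance: you extract this coefficient by column operations, factoring out $\alpha_{j^*}^{e_{\sigma_i+1}}$, and specializing $\alpha_{j^*}=0$, whereas the paper reads it off the determinant expansion directly (using that selecting a row $r\le\sigma_i$ duplicates a basis column and that the strictly increasing exponents force $r=\sigma_i+1$ in each relevant block).
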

\begin{proof}
By Lemma~\ref{lem:mds5}, (\ref{eq:gm-mds-mono}) is equivalent to.
\begin{align}
  \det \begin{pmatrix}
    I_k & I_k|_{\le \sigma'_1} & V|_{A'_1} & & & & & & \\
    I_k & & & I_k|_{\le \sigma'_2} & V|_{A'_2} & & & & \\
    \vdots & & & & & \ddots & & &\\
    I_k & & & & & & I_k|_{\le \sigma'_{\ell-1}} &  V|_{A'_{\ell-1}} & \\
    I_k & & & & & & & &  V|_{A'_\ell}
  \end{pmatrix} \neq 0\label{mat:stuff}
\end{align}

Consider the exponent \[e = \sum_{\substack{i \in [\ell]\\j^* \in A_i}} e_{\sigma_i + 1}.\] Observe that the LHS of (\ref{mat:stuff}) is, up to a factor of $\pm 1$, the coefficient of $\alpha_{j^*}^e$ in 
\[
  \det \begin{pmatrix}
    I_k & I_k|_{\le \sigma_1} & V|_{A_1} & & & & & & \\
    I_k & & & I_k|_{\le \sigma_2} & V|_{A_2} & & & & \\
    \vdots & & & & & \ddots & & &\\
    I_k & & & & & & I_k|_{\le \sigma_{\ell-1}} &  V|_{A_{\ell-1}} & \\
    I_k & & & & & & & &  V|_{A_\ell}
  \end{pmatrix},
\]
which by Lemma~\ref{lem:mds5} implies that 
\[
\bigcap_{i=1}^\ell (I_k|_{\le \sigma_i} + V_{A_i}) = 0\change{.}\qedhere
\]
\end{proof}

This completes the proof of Theorem~\ref{thm:gm-mds-primal}.

\section{A GM-MDS Theorem for Dual Monomial Codes}\label{sec:duality}

The goal of this section is to prove the following result, which serves as an important ingredient for our more general theorem on $\LDMDS(\ell)$ codes (Theorem~\ref{thm-ldmdsIrred}). \change{Given a matrix $V \in \F^{k \times n}$ with rank $k$, we say that $Q \in \F^{(n-k) \times n}$ of rank $n-k$ is a dual matrix of of $V$ if for all $i \in [k]$ and $j \in [n-k]$, we have that $V_{i,1}Q_{j,1} + \cdots + V_{i,n}Q_{j,n} = 0$. If $V$ has rank $k$ it may have many dual matrices. We let $V^{\perp}$ denote the set of all such dual matrices. We remark that all dual matrices generate the same code (the dual to the code generated by $V$), so any properties concerning dual matrices we study are invariant of the choice of representative (cf. \cite{bgm2021mds,bgm2022}).}

\begin{theorem}\label{thm:gm-mds-dual}
Let $e_1 < e_2 < e_3 < \cdots < e_k$ be integers. Let $\alpha_1, \hdots, \alpha_n$ be generic points in an algebraically closed field $\F$. Then, for all $L \ge 2$, the matrix
\begin{align}
  V := \begin{pmatrix}
    \alpha_1^{e_1} & \alpha_2^{e_1} & \cdots & \alpha_n^{e_1}\\
    \alpha_1^{e_2} & \alpha_2^{e_2} & \cdots & \alpha_n^{e_2}\\
         \vdots & \vdots & \ddots & \vdots\\
    \alpha_1^{e_k} & \alpha_2^{e_k} & \cdots & \alpha_n^{e_k}\\
  \end{pmatrix} \label{mat:pseudo-rs-}
\end{align}
is $\LDMDS(\le L)$. In other words, \change{every dual matrix $Q \in V^{\perp}$} is $(n,n-k)$-$\MDS(L+1)$.
\end{theorem}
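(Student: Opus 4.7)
The plan is to reduce Theorem~\ref{thm:gm-mds-dual} to an $\MDSb$ statement via the duality bridge Theorem~\ref{thm:gm-mdsb-dual}, which asserts that if $(V^T, I_n)$ is $(k,n,n)$-$\MDSb(L+1)$, then every $Q \in V^\perp$ is $(n, n-k)$-$\MDS(L+1)$. So it suffices to prove that $(U, I_n)$ is $(k, n, n)$-$\MDSb(L+1)$, where $U := V^T$ has as its columns the vectors $u_i = (\alpha_1^{e_i}, \ldots, \alpha_n^{e_i})^T$ for $i = 1, \ldots, k$. First I would verify $\MDSb(1)$ for $(U, I_n)$: this amounts to checking that for every $\sigma \in \{0, \ldots, k\}$ and $A \subseteq [n]$ with $\sigma + |A| \le n$, the submatrix of $V$ with rows $[\sigma]$ and columns $[n] \setminus A$ has full row rank, which is immediate from genericity of the $\alpha_j$'s since this submatrix is a generalized Vandermonde.

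With $\MDSb(1)$ established, Lemma~\ref{lem:case-1} and Proposition~\ref{prop:dual-lp} reduce the remaining task to showing $\bigcap_{i=1}^{L+1}(U_{\le \sigma_i} + (I_n)_{A_i}) = 0$ for every $(L+1, n)$-null configuration $(\sigma_1, A_1, \delta_1), \ldots, (\sigma_{L+1}, A_{L+1}, \delta_{L+1})$. I would prove this by lexicographic induction on $(L+1, k, \sum_i \sigma_i)$, with the crucial change of perspective relative to Section~\ref{sec:algebra} being that the reduction step decreases $\sum_i \sigma_i$ by \emph{removing} a ``root at $0$'' from one of the $u$-factors rather than adding one. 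The base cases $L+1 = 1$ and $k = 1$ are straightforward, and the base case $\sum_i \sigma_i = 0$ is immediate: the null inequality with $J = [L+1]$, combined with $\sum_i \delta_i = n$, forces $A_{[L+1]} = \emptyset$, so $\bigcap_i (I_n)_{A_i} = \Span\{e_j : j \in A_{[L+1]}\} = 0$.

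The inductive step splits into three cases mirroring Section~\ref{sec:algebra}. Case~1 handles configurations admitting some $J \subseteq [L+1]$ with $2 \le |J| \le L$ for which the null inequality is tight, resolved directly by Lemma~\ref{lem:case-1-full}. Case~2 treats configurations in which exactly one $\sigma_i$ takes an extremal value; this reduces to a smaller-$k$ instance by expanding the determinant identity (\ref{eq:goal-det}) along a judiciously chosen column of the distinguished block and pruning the corresponding row from every other block, dropping the ``basis dimension'' $k$ by one. Case~3, the generic situation, selects an index $j^* \in [n]$ and replaces each pair $(\sigma_i, A_i)$ with $j^* \notin A_i$ and $\sigma_i \ge 1$ by $(\sigma_i - 1, A_i \cup \{j^*\})$, strictly decreasing $\sum_i \sigma_i$; after verifying that the modified tuple is still an $(L+1, n)$-null configuration, the inductive hypothesis yields vanishing of the reduced intersection, and the vanishing of the original intersection is then deduced by extracting the coefficient of an appropriate monomial in $\alpha_{j^*}$ from the symbolic block determinant.

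The most delicate step is Case~3: one must show that a suitable $j^*$ exists whenever the configuration is not of Case~1 or Case~2 type, that the reduced configuration remains $(L+1, n)$-null, and above all that the symbolic-determinant coefficient extraction is exact rather than merely an inequality. Because here the ``basis'' part $U$ carries the algebraic complexity—unlike Section~\ref{sec:algebra}, where the basis was the trivial $I_k$—identifying the correct leading monomial in $\alpha_{j^*}$ and matching its coefficient to the block determinant of the reduced configuration is the technical heart of the argument, and will require a careful analysis of which generalized Vandermonde minors of $U$ contribute to the leading term.
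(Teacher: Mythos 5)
Your overall route is the paper's: reduce via Theorem~\ref{thm:gm-mdsb-dual} to showing $(V^T,I_n)$ is $(k,n,n)$-$\MDSb(L+1)$, verify $\MDSb(1)$, pass to $(L+1,n)$-null configurations through Lemma~\ref{lem:case-1} and Proposition~\ref{prop:dual-lp}, and induct, with the tight-$J$ case dispatched by Lemma~\ref{lem:case-1-full} and a ``remove a root at $0$, extract a coefficient in $\alpha_{j^*}$'' step. The genuine gap is your Case~2. The mechanism you describe --- expand the block determinant along a column of the distinguished block and prune the corresponding row from every other block, dropping the basis dimension $k$ by one --- is the Case~2 move of the primal proof in Section~\ref{sec:algebra}, and it works there only because the basis is $I_k$: the relevant basis column has a single nonzero entry, so expanding along it removes a row from every block. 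Here the roles are swapped: the basis columns are the columns of $V^T$, which are dense (every entry $\alpha_j^{e_i}$ is generically nonzero), so no such column exists and no reduction in $k$ of this kind is available; the paper's induction for Theorem~\ref{thm:tr-gm-mds} deliberately never inducts on $k$, only on $\ell$ and $\sum_i \sigma_i$. Moreover your cases are assigned backwards: the configuration with exactly one $\sigma_i=0$ (your Case~2) is precisely the one the paper handles with the decrement-plus-coefficient-extraction move (your Case~3), choosing $j^*\notin A_\ell\cup A_{[\ell-1]}$, whose existence follows from (\ref{eq:hall}) applied to $J=\{i:\sigma_i\ge 1\}$ together with $\delta_\ell=n-|A_\ell|$.

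What your Case~3 is then left to cover --- at least two indices with $\sigma_i=0$ and no tight $J$ --- is handled in the paper by a much simpler observation with no analogue of your proposed $k$-reduction: since the generator part is the standard basis, $I_{A_i}\cap I_{A_{i'}}=I_{A_i\cap A_{i'}}$, so two constraints with $\sigma_i=\sigma_{i'}=0$ merge into a single one, the merged tuple still satisfies (\ref{eq:hall}) with weight $\delta_i+\delta_{i'}$, and one invokes the $\MDSb(\ell-1)$ induction hypothesis via Proposition~\ref{prop:dual-lp}. (One could alternatively push your decrement move through this multiple-zero case by choosing $j^*$ outside $A_P\cup A_Z$, where $P=\{i:\sigma_i\ge1\}$ and $Z=\{i:\sigma_i=0\}$; (\ref{eq:hall}) at $J=P$ and $J=Z$ together with $\sum_i\delta_i=n$ gives $|A_P|+|A_Z|\le n-1$, so such $j^*$ exists --- but your proposal neither does this nor supplies a working mechanism for the exactly-one-zero case.) Finally, the coefficient extraction you flag as delicate is exact for a concrete structural reason worth recording: if $j^*\in A_i$, the column of $I|_{A_i}$ indexed by $j^*$ has its unique nonzero entry in row $j^*$ of block $i$, so in every nonvanishing term of the determinant that row is matched to that column and block $i$ contributes no power of $\alpha_{j^*}$; hence the top power is $e=\sum_{i\in J^*}e_{\sigma_i}$ with $J^*=\{i:\sigma_i\ge1,\ j^*\notin A_i\}$, and its coefficient equals, up to sign, the block determinant of the reduced configuration.
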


Note that the special case where each $e_i = i-1$ corresponds to Reed-Solomon codes. This result was established in \cite{bgm2022} by combining the GM-MDS theorem (Theorem~\ref{q:gm-mds}), Theorem~\ref{thm:mds-equiv}, and the fact that the dual of a Reed-Solomon code is a Reed-Solomon code up to column scaling (c.f., Proposition~4.4 of \cite{bgm2022} or \cite{macwilliams1977theory,hall7notes}). One would hope that we could use the same proof strategy, except replacing the GM-MDS theorem with Theorem~\ref{thm-mdsIrred}. However, the last step that ``the dual of a Reed-Solomon code is a Reed-Solomon code up to column scaling'' is false for this more general family of codes. Hence, we need to find an alternative proof of the GM-MDS theorem that works with this more general family of dual codes.

\subsection{Reduction}

Our plan for proving Theorem~\ref{thm:gm-mds-dual} is to (non-obviously) reduce the statement to the following fact.

\begin{theorem}\label{thm:tr-gm-mds}
Let $V$ be as in (\ref{mat:pseudo-rs}), then $(V^{T}, I_n)$ is $\MDSb(\ell)$ for all $\ell \ge 1$.
\end{theorem}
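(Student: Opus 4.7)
The strategy is to verify the hypotheses of the inductive framework of Section~\ref{subsec:induct} with $(U,V) = (V^T, I_n)$ and the role of $k$ played by $n$. First I would handle the $\MDSb(1)$ base case: for every pair $(\sigma, A)$ with $\sigma + |A| \le n$, we need $\dim((V^T)_{\le \sigma} + (I_n)_A) = \sigma + |A|$. Since $(I_n)_A = \Span\{e_j : j \in A\}$, this reduces to showing that the generalized Vandermonde submatrix $(\alpha_j^{e_i})_{i \in [\sigma],\, j \in [n]\setminus A}$ has full column rank, which is standard for distinct exponents $e_1 < \cdots < e_k$ and generic evaluation points.

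For $\ell \ge 2$, Lemma~\ref{lem:case-1} and Proposition~\ref{prop:dual-lp} reduce the problem to showing
\[
\bigcap_{i=1}^{\ell} \bigl((V^T)_{\le \sigma_i} + (I_n)_{A_i}\bigr) = 0
\]
for every $(\ell, n)$-null configuration $(\sigma_1, A_1, \delta_1), \ldots, (\sigma_\ell, A_\ell, \delta_\ell)$. I plan to induct on $\ell$ and then on $\sum_{i=1}^{\ell} \sigma_i$, with base case $\sum \sigma_i = 0$: applying (d*) with $J = [\ell]$ forces $|\bigcap_i A_i| = 0$, whence $\bigcap_i (I_n)_{A_i} = 0$. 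In contrast to Section~\ref{sec:algebra}, the induction here \emph{removes} a root from the basis matrix $V^T$ rather than introducing one.

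The inductive step itself splits into two cases. Case~1 (some $J$ with $2 \le |J| \le \ell - 1$ makes (\ref{eq:hall}) tight) is dispatched immediately by Lemma~\ref{lem:case-1-full}. Otherwise (the analog of Case~3 in Section~\ref{sec:algebra}), since we are outside the base case some $\sigma_{i^*} \ge 1$, and properness then forces $|A_{i^*}| \le n - 1$; pick any $j^* \notin A_{i^*}$ and form the contracted configuration with $\sigma'_{i^*} = \sigma_{i^*} - 1$, $A'_{i^*} = A_{i^*} \cup \{j^*\}$, $\delta'_{i^*} = \delta_{i^*}$, and all other triples unchanged. A direct verification of (a*)--(d*)---the key check for $2 \le |J| \le \ell - 1$ uses that we are not in Case~1, exactly mirroring Section~\ref{sec:algebra}---shows this remains an $(\ell, n)$-null configuration with $\sum \sigma'_i = \sum \sigma_i - 1$, so the induction hypothesis yields the intersection vanishing for the contracted configuration.

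The remaining step---which I expect to be the main obstacle---is the algebraic lift from the contracted configuration back to the original. By Lemma~\ref{lem:mds5}, both statements are equivalent to nonvanishing of block determinants $\det M$ and $\det M'$ that differ only in block $i^*$: the $\sigma_{i^*}$-th column of $V^T|_{\le \sigma_{i^*}}$, whose row-$j^*$ entry is $\alpha_{j^*}^{e_{\sigma_{i^*}}}$, is replaced (up to column reordering) by the standard basis column $e_{j^*}$. I would then exhibit $\det M'$, up to sign, as the coefficient of a specific monomial in $\alpha_{j^*}$ inside $\det M$ viewed as a polynomial in $\alpha_{j^*}$ alone. The subtlety is that $\alpha_{j^*}$ appears in row $j^*$ of \emph{every} $V^T|_{\le \sigma_i}$ sub-block, so the extraction must be arranged so that in each row block $i \ne i^*$ with $j^* \in A_i$ the column $e_{j^*}$ from $(I_n)_{A_i}$ is selected, while in each row block $i \ne i^*$ with $j^* \notin A_i$ the appropriate column of the leftmost global $I_n$ is used, jointly absorbing all rogue occurrences of $\alpha_{j^*}$ outside block $i^*$. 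Once this bookkeeping is in place, nonvanishing of $\det M'$ from the induction hypothesis forces nonvanishing of $\det M$, completing the induction.
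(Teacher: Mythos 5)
Your setup (the $\MDSb(1)$ base case, the reduction to $(\ell,n)$-null configurations, induction on $\ell$ and $\sum_i\sigma_i$, and dispatching the tight-$J$ case via Lemma~\ref{lem:case-1-full}) matches the paper, but the contraction step has two genuine problems. First, the combinatorics: you modify a \emph{single} block $i^*$ and pick an \emph{arbitrary} $j^*\notin A_{i^*}$. This need not preserve nullness: condition (d*) at $J=[\ell]$ forces $A_{[\ell]}=\emptyset$ in the original configuration, but if $j^*$ happens to lie in every $A_i$ with $i\ne i^*$, then after adding $j^*$ to $A_{i^*}$ you get $|A'_{[\ell]}|=1$ while $\min_i\sigma'_i=0$, violating (\ref{eq:hall}) at $J=[\ell]$ (your check ``exactly mirroring Section~4'' only covers $2\le|J|\le\ell-1$). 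The paper avoids this by assuming (after handling the two-zero-$\sigma$ case separately via the $\MDSb(\ell-1)$ hypothesis) that only $\sigma_\ell=0$, and then choosing $j^*\notin A_\ell\cup A_{[\ell-1]}$, whose existence follows from (\ref{eq:hall}) at $J=[\ell-1]$; since $j^*$ is never added to $A_\ell$, the $J=[\ell]$ constraint survives. Your two-case split also silently drops the case of two zero $\sigma_i$'s, where the paper's Case-3 move is not available and one instead merges $I_{A_{\ell-1}}\cap I_{A_\ell}$ and invokes $\MDSb(\ell-1)$.

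Second, and more fundamentally, the algebraic lift as you describe it fails. The leftmost global $I_n$ block contains exactly \emph{one} column indexed by $j^*$, and in any nonzero term of the determinant it is matched to exactly one row; it therefore cannot ``jointly absorb'' the $\alpha_{j^*}$-row of several row blocks. Every block $i\ne i^*$ with $j^*\notin A_i$ and $\sigma_i\ge 1$ whose row $j^*$ does not get the global column is forced to select an entry $\alpha_{j^*}^{e_m}$ from its own $V^T|_{\le\sigma_i}$, so the coefficient of your intended single-block monomial is a sum over these unwanted contributions and is not $\pm\det M'$ for your primed configuration. The paper's fix is exactly to change \emph{all} such blocks at once: with $J^*=\{i\in[\ell-1]: j^*\notin A_i\}$, set $\sigma'_i=\sigma_i-1$, $A'_i=A_i\cup\{j^*\}$ for every $i\in J^*$, note that the column of $I|_{A_i}$ indexed by $j^*$ (when $j^*\in A_i$) and the forced assignment of the global column $j^*$ to row block $\ell$ pin down all other occurrences, and then the coefficient of $\alpha_{j^*}^{e}$ with $e=\sum_{i\in J^*}e_{\sigma_i}$ in $\det M$ is, up to sign, $\det M'$. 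So the overall architecture is right, but your Case-3 move must be the simultaneous multi-block contraction with the constrained choice of $j^*$, together with the separate two-zero-$\sigma$ case; as written, both the null-configuration verification and the coefficient-extraction step break.
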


Rather unintuitively, the basis is the rows of $V^{T}$ and the ``normal'' part is $I_n$. We shall explain how this structure arises shortly, but first we shall prove the following fact.

\begin{proposition}\label{prop:mds-dual}
Let $V$ be an $(n,k)$-MDS matrix. Then, \change{every $Q \in V^{\perp}$} is $\MDS(\ell)$ if and only if for any $A_1, \hdots, A_\ell \subseteq [n]$ with the dimension $n-k$ null intersection property, $\sum_{i=1}^{\ell}|A_i| = (\ell-1)(n-k)$, and each set has size at most $n-k$, we have that 
\begin{align}
  \begin{pmatrix}
    I|_{\oA_1} & I|_{\oA_2} & \cdots & I|_{\oA_\ell}\\
    V|_{\oA_1} & & & \\
     & V|_{\oA_2} & & \\
    & &\ddots & \\
     & & & V|_{\oA_\ell}\\
  \end{pmatrix}\label{mat:rank}
\end{align}
has rank $n+(\ell-1)k$, where $\oA_i = [n] \setminus A_i$.
\end{proposition}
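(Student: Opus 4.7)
The plan is to convert the $\MDS(\ell)$ criterion on a dual matrix $Q \in V^{\perp}$ into a rank condition on the matrix in (\ref{mat:rank}) built from $V$, via a direct analysis of the null space of that matrix. Since $V$ is MDS its dual $Q$ is MDS, so $\dim Q_{A_i} = |A_i|$ whenever $|A_i| \le n-k$; Corollary~\ref{cor:mds5} then translates the statement ``every $Q \in V^{\perp}$ is $\MDS(\ell)$'' into the family of assertions ``$Q_{A_1} \cap \cdots \cap Q_{A_\ell} = 0$'' ranging over all $A_1, \hdots, A_\ell \subseteq [n]$ satisfying the proposition's hypotheses. I would therefore reduce to showing, for each such fixed tuple of sets, that $Q_{A_1} \cap \cdots \cap Q_{A_\ell} = 0$ holds if and only if the matrix in (\ref{mat:rank}) has full column rank $n + (\ell-1)k$.

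The heart of the argument is to parametrize the kernel of (\ref{mat:rank}) directly. Indexing its columns by pairs $(i,j)$ with $j \in \oA_i$, I would take the coordinates of a putative kernel vector to be $c_{i,j}$ and extend each tuple $(c_{i,j})_{j \in \oA_i}$ to a vector $c_i \in \F^n$ by placing zeros on $A_i$. The kernel conditions then read exactly (i) $\supp(c_i) \subseteq \oA_i$, (ii) $\sum_{i=1}^\ell c_i = 0$, and (iii) $V c_i = 0$ for each $i$. Since $\ker V$ is precisely the row span of $Q$, condition (iii) is equivalent to writing $c_i = Q^T d_i$ for a uniquely determined $d_i \in \F^{n-k}$ (uniqueness follows because $Q^T$ has full column rank). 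Under this substitution condition (i) becomes $\langle Q_j, d_i \rangle = 0$ for every $j \in A_i$, i.e.\ $d_i$ lies in the orthogonal complement $Q_{A_i}^{\perp}$ of the column span $Q_{A_i} \subseteq \F^{n-k}$; and condition (ii) becomes $Q^T\bigl(\sum_i d_i\bigr) = 0$, equivalent again by full column rank of $Q^T$ to $\sum_i d_i = 0$.

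Having identified the kernel of (\ref{mat:rank}) with the kernel of the summation map
\[
\phi : Q_{A_1}^{\perp} \times \cdots \times Q_{A_\ell}^{\perp} \to \F^{n-k}, \qquad (d_1, \hdots, d_\ell) \mapsto d_1 + \cdots + d_\ell,
\]
I would close the argument by a short rank--nullity computation. The image of $\phi$ equals $\sum_i Q_{A_i}^{\perp} = \bigl(\bigcap_i Q_{A_i}\bigr)^{\perp}$ by the standard inclusion-reversing duality in $\F^{n-k}$, so using the hypothesis $\sum |A_i| = (\ell-1)(n-k)$ one obtains
\[
\dim \ker \phi = \sum_{i=1}^\ell \bigl((n-k) - |A_i|\bigr) - (n-k) + \dim(Q_{A_1} \cap \cdots \cap Q_{A_\ell}) = \dim(Q_{A_1} \cap \cdots \cap Q_{A_\ell}).
\]
The matrix in (\ref{mat:rank}) therefore has trivial kernel exactly when this intersection vanishes, giving the desired equivalence. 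The main bookkeeping step will be the kernel-to-duality translation in the second paragraph; once that dictionary between the block structure of (\ref{mat:rank}) and the orthogonality pairing against $Q_{A_i}$ is set up correctly, the concluding dimension count is routine and uses only the prescribed arithmetic identity on the $|A_i|$.
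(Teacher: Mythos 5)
Your proof is correct, but it takes a genuinely different route from the paper's. The paper proves Proposition~\ref{prop:mds-dual} by invoking the MR-tensor-code equivalence of Theorem~\ref{thm:mds-equiv}: it identifies $Q\in V^\perp$ being $\MDS(\ell)$ with $H\otimes Q$ being MR, notes that the rows of $\bigl(\begin{smallmatrix}1^T\otimes I_n\\ I_\ell\otimes V\end{smallmatrix}\bigr)$ span the parity checks of $H\otimes Q$, and uses the fact (cited to \cite{bgm2021mds}) that an erasure pattern $E=\bigcup_i\{i\}\times\oA_i$ is recoverable iff the corresponding parity-check columns are independent; restricting to those columns yields exactly (\ref{mat:rank}). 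You instead bypass the tensor-code formalism entirely: after the same reduction via Corollary~\ref{cor:mds5} to tuples with $\sum_i|A_i|=(\ell-1)(n-k)$, you parametrize $\ker$ of (\ref{mat:rank}) by tuples $(c_1,\dots,c_\ell)$ with $\supp(c_i)\subseteq\oA_i$, $Vc_i=0$, $\sum_ic_i=0$, write $c_i=Q^Td_i$ using $\ker V=\mathrm{rowspan}(Q)$, and conclude by rank--nullity together with $\bigl(\bigcap_iQ_{A_i}\bigr)^\perp=\sum_iQ_{A_i}^\perp$. Your version is more self-contained and elementary (only standard duality of the bilinear pairing plus the classical fact that the dual of an MDS code is MDS, which you need so that $\dim Q_{A_i}=|A_i|$ makes the count exact), and it actually yields the sharper per-tuple identity that the corank of (\ref{mat:rank}) equals $\dim(Q_{A_1}\cap\cdots\cap Q_{A_\ell})$, making transparent where the MDS hypothesis on $V$ is used. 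The paper's version is shorter given that Theorem~\ref{thm:mds-equiv} is already on the table and keeps the statement aligned with the MR-tensor-code viewpoint that motivates the higher order MDS framework, at the cost of importing that machinery as a black box.
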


Note that (\ref{mat:rank}) has $n+(\ell-1)k$ rows but $n + \ell k$ columns.

\begin{proof}
Let $H \in \F^{(\ell-1) \times \ell}$ be a parity check matrix. By Theorem~\ref{thm:mds-equiv}, we have that $\change{Q \in V^{\perp}}$ is $\MDS(\ell)$ if and only if $H \otimes \change{Q}$ is an $(\ell,n,1,k)$-MR tensor code. More precisely, we have that $V_{A_1} \cap \cdots \cap V_{A_\ell} = 0$ if and only if the erasure pattern $E = \bigcup_{i=1}^{\ell} \{i\} \times \oA_i$ is recoverable for $H \otimes \change{Q}$. Note that since we assume that $\sum_{i=1}^{\ell}|A_i| = (\ell-1)(n-k)$, we have that

\[
 |E| = \sum_{i=1}^{\ell} |\oA_{i}| = n\ell - (\ell-1)(n-k) = n + (\ell-1)k.
\]
Observe that the all-ones row vector $1^{T}$ is \change{a} parity check matrix of $H$ and $V$ is \change{a} parity-check matrix of $\change{Q}$. Consider the matrix 
\begin{align}
  \begin{pmatrix}
    1^{T} \otimes I_n\\
    I_\ell \otimes V
  \end{pmatrix} = \begin{pmatrix}
                    I_n & I_n & \cdots & I_n\\
                    V & & & \\
                     & V & & \\
                    & & \ddots &\\
                    & & & V
                  \end{pmatrix}\label{mat:parity-check}
\end{align}
Note that every row of (\ref{mat:parity-check}) is a parity-check condition for the code $H \otimes \change{Q}$. Further the rows of (\ref{mat:parity-check}) span the parity check conditions of $H \otimes \change{Q}$. Thus, the pattern $E$ is recoverable if and only if the columns of (\ref{mat:parity-check}) corresponding to $E$ are linearly independent (c.f., \cite{bgm2021mds}). Observe that (\ref{mat:parity-check}) restrict to the columns indexed by $E$ is precisely (\ref{mat:rank}). Thus, (\ref{mat:rank}) has rank $|E| = n + (\ell-1)k$ if and only if $V_{A_1} \cap \cdots \cap V_{A_\ell} = 0$, as desired.
\end{proof}

We now show how Theorem~\ref{thm:tr-gm-mds} and Proposition~\ref{prop:mds-dual} complete the proof of Theorem~\ref{thm:gm-mds-dual}. This follows from the following theorem, and that fact that $V$ is $\MDS$, as proved in the base case of the proof of Theorem~\ref{thm:gm-mds-primal}.

\begin{theorem}\label{thm:gm-mdsb-dual}
  Let $V$ be an $(n,k)$-MDS matrix such that $(V^{T}, I_n)$ is $\MDSb(\ell)$. Then, \change{every $Q \in V^{\perp}$} is $\MDS(\ell)$. That is, $V$ is $\LDMDS(\le \ell-1)$.
\end{theorem}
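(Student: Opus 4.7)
The plan is to use Proposition~\ref{prop:mds-dual} to reduce the claim to showing that the block matrix in (\ref{mat:rank}) has full column rank $n + (\ell-1)k$, and then translate this rank statement into a subspace intersection that is directly governed by the $\MDSb(\ell)$ hypothesis applied with the specific configuration $\sigma_i = k$ for all $i \in [\ell]$.

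First I would analyze the kernel of (\ref{mat:rank}). Writing a kernel element as blocks $(x^{(1)},\dots,x^{(\ell)})$ with $x^{(i)} \in \F^{|\oA_i|}$ and extending each $x^{(i)}$ by zero on $A_i$ to a vector $\tilde x^{(i)} \in \F^n$, the kernel is identified with tuples satisfying $\tilde x^{(i)} \in \ker V$, $\tilde x^{(i)}|_{A_i} = 0$, and $\sum_{i=1}^{\ell}\tilde x^{(i)} = 0$. This is exactly the kernel of the summation map $\Phi: \bigoplus_i K_i \to \ker V$, where $K_i := \{y \in \ker V : y|_{A_i} = 0\}$. Since $V$ is $(n,k)$-MDS and $|A_i| \le n-k$, the minimum distance argument gives $(\text{row span of }V) \cap (I_n)_{A_i} = 0$, hence $\dim K_i = n-k-|A_i|$. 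Using the hypothesis $\sum_i|A_i| = (\ell-1)(n-k)$, we get $\sum_i \dim K_i = n-k = \dim \ker V$, so $\Phi$ is injective iff it is surjective iff $\sum_i K_i = \ker V$.

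Next I would dualize. Taking orthogonal complements in $\F^n$, the condition $\sum_i K_i = \ker V$ is equivalent to $\bigcap_i K_i^\perp = V^T_{\le k}$, where (since $(\ker V)^\perp$ equals the row span $V^T_{\le k}$ and $(I_n)_{\oA_i}^\perp = (I_n)_{A_i}$) one has $K_i^\perp = V^T_{\le k} + (I_n)_{A_i}$. The inclusion $V^T_{\le k} \subseteq \bigcap_i (V^T_{\le k} + (I_n)_{A_i})$ is immediate, so it suffices to verify that the intersection has dimension exactly $k$.

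The hypothesis that $(V^T, I_n)$ is $(k,n,n)$-$\MDSb(\ell)$, applied to the configuration $(\sigma_i, A_i) = (k, A_i)$, equates $\dim \bigcap_i (V^T_{\le k} + (I_n)_{A_i})$ with the generic value, which by Proposition~\ref{prop:MDSb-inter-form} equals
\[
\max_{P_1 \sqcup \cdots \sqcup P_s = [\ell]} \Bigl[ -n(s-1) + \sum_{j=1}^{s}(k + |A_{P_j}|) \Bigr].
\]
Applying the dimension-$(n-k)$ null intersection property: for $s=1$ the bound $|\bigcap_i A_i| \le 0$ makes this term equal $k$; for $s \ge 2$ we get $-n(s-1) + sk + \sum_j |A_{P_j}| \le -n(s-1) + sk + (s-1)(n-k) = k$. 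So the generic value is exactly $k$, completing the proof. The main subtlety is recognizing that the right MDSb configuration to invoke uses $\sigma_i$ maximal ($=k$) rather than $\sigma_i = 0$—once that choice is identified, the generic dimension collapses to $k$ by a clean combinatorial computation and the argument is purely formal linear algebra.
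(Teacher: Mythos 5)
Your proposal is correct, and it follows the same overall strategy as the paper: reduce via Proposition~\ref{prop:mds-dual} to the rank condition on (\ref{mat:rank}), invoke the $\MDSb(\ell)$ hypothesis with the configuration $\sigma_i = k$ for all $i$, and compute via Proposition~\ref{prop:MDSb-inter-form} and the dimension-$(n-k)$ null intersection property that the generic intersection dimension is exactly $k$ (your inequality chain here is in fact cleaner than the paper's, which has a sign slip in the displayed bound). Where you diverge is the bridge between that intersection statement and the rank of (\ref{mat:rank}): the paper goes through Lemma~\ref{lem:mds5} to turn $\dim\bigcap_i(V^T_{\le k}+I_{A_i})=k$ into the statement that the block matrix (\ref{mat:start}) has rank $n\ell$, then eliminates rows using the $I|_{A_i}$ blocks and permutes to exhibit the transpose of (\ref{mat:rank}) as the top-left block of (\ref{mat:end}); you instead identify the kernel of (\ref{mat:rank}) with the kernel of the summation map $\bigoplus_i K_i \to \ker V$, where $K_i = \ker V \cap (I_n)_{\oA_i}$, use the MDS property of $V$ to get $\dim K_i = n-k-|A_i|$ (so injectivity is equivalent to $\sum_i K_i = \ker V$), and pass to orthogonal complements, using $K_i^\perp = V^T_{\le k} + (I_n)_{A_i}$, to land exactly on the intersection condition. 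Your route is coordinate-free, avoids the block-matrix bookkeeping, and makes explicit where the MDS hypothesis on $V$ enters (it is the analogue of the paper's appeal to $\MDSb(1)$ via Proposition~\ref{prop:mdsb-1}); the paper's route stays within the Lemma~\ref{lem:mds5} determinant/rank formalism it uses throughout, which keeps the whole paper's machinery uniform. The one point worth stating explicitly in a write-up is that the configuration $(k,A_1),\hdots,(k,A_\ell)$ is proper (i.e., $k+|A_i|\le n$, which holds since $|A_i|\le n-k$), so Proposition~\ref{prop:MDSb-inter-form} indeed applies; you use this implicitly and it is not a gap.
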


\begin{proof}[Proof of Theorem~\ref{thm:gm-mdsb-dual}.]
Consider $A_1, \hdots, A_\ell \subseteq [n]$ with the dimension $n-k$ null intersection property, $\sum_{i=1}^{\ell}|A_i| = (\ell-1)(n-k)$, and each set has size at most $n-k$. We seek to show that (\ref{mat:rank}) has rank $n+(\ell-1)k$. Consider the proper order-$\ell$ configuration $(\sigma_1, A_1), \hdots, (\sigma_\ell, A_\ell)$ where each $\sigma_i = k$. Observe that for any partition $P_1 \sqcup \cdots \sqcup P_s = [\ell]$, we have that
\begin{align*}
-n(s-1) + \sum_{i=1}^s (\sigma_{P_i} + |A_{P_i}|) &= -n(s-1) + ks + \sum_{i=1}^s |A_{P_i}|\\ &= k + \left[ - (n-k)(s-1) +\sum_{i=1}^s |A_{P_i}|\right]\\ &\ge k,
\end{align*}
where equality for some partition as $A_1, \hdots, A_\ell$ have the dimension $n-k$ null intersection property.

Therefore, by Proposition~\ref{prop:MDSb-inter-form} and Theorem~\ref{thm:tr-gm-mds}, we have that
\[
 \dim \left(\bigcap_{i=1}^\ell (V^T_{\le k} + I_{A_i})\right) = k.
\] 

Equivalently, by Lemma~\ref{lem:mds5}, we have that the matrix

\begin{align}\begin{pmatrix}
I_n & V^{T} & I|_{A_1}&  & & & &\\
I_n &  & & V^{T} & I|_{A_2} & & & \\
\vdots & & & & & \ddots & &\\
I_n &  & &  & & & V^{T} & I|_{A_\ell}\\
\end{pmatrix}\label{mat:start}\end{align}
has rank \begin{align*}
-k + n+ \sum_{i=1}^{\ell}\dim(V^{T} + I_{A_1}) &= (n-k) + (\ell-1)(n-k)+\ell k\\ &= n\ell,
\end{align*}
where for the first line we use Proposition~\ref{prop:mdsb-1} and the fact that $(V^{T}, I)$ is $\MDSb(1)$.

Each $I|_{A_i}$ lets us eliminate $|A_i|$ rows. Doing this and permutating some rows/columns, we get that (\ref{mat:start}) has the same rank as
\begin{align}
\mleft[
\begin{array}{ccccc|ccc}
I|_{\oA_1 \times [n]} & V^{T}|_{\oA_1 \times [k]} & & & & & &\\
I|_{\oA_2 \times [n]} & & V^{T}|_{\oA_2 \times [k]} & & & & &\\
\vdots & & & \ddots & & & & \\
I|_{\oA_\ell \times [n]} & & & & V^{T}|_{\oA_\ell \times [k]} & & &\\\hline
& & & & & I|_{A_1 \times A_1} & &\\
& & & & &  & \ddots &\\
& & & & &  & & I|_{A_\ell \times A_\ell}
\end{array}
\mright]\label{mat:end}
\end{align}

Since the bottom-right submatrix of (\ref{mat:end}) is an identity matrix, we have that the top-left submatrix of (\ref{mat:end}) has rank $n\ell - \sum_{i=1}^\ell |A_i| = n\ell - (n-k)(\ell-1) = n + k(\ell-1)$. Since the top-left submatrix of (\ref{mat:end}) is the transpose of (\ref{mat:rank}), we have that \change{every $Q \in V^{\perp}$} is $\MDS(\ell)$ by Proposition~\ref{prop:mds-dual}, where we use that $V$ is MDS.
\end{proof}

\subsection{Proof of Theorem~\ref{thm:tr-gm-mds}}

In this section, we prove that $(V^{T}, I_n)$ is \change{$(k,n,n)$-}$\MDSb(\ell)$. We proceed by the inductive approach of Section~\ref{subsec:induct}. However there is a key difference between this proof and the others present so far. Consider a \change{$(k,\ell)$-null} configuration $(\sigma_i, A_i, \delta_i)$ which occurs in the induction. The GM-MDS theorem (and Theorem~\ref{thm-mdsIrred}) seeks to ``shrink'' the $A_i$'s and grow the corresponding $\sigma_i$'s. In our case, we will shrink the $\sigma_i$'s and grow the $A_i$'s. In the end, when $\sigma_i = 0$ for all $i \in [\ell]$, we can directly compute the intersection dimension, as $I_n$ is just the standard basis and thus is $\MDS(\ell)$. 

Consider a \change{$(k,\ell)$-null configuration $(\sigma_1, A_1, \delta_1), \hdots, (\sigma_\ell, A_\ell, \delta_\ell)$.} We seek to show (\ref{eq:goal}) that
\[
  \dim \left(\bigcap_{i=1}^\ell (V^T_{\le \sigma_i} + I_{A_i})\right) = 0.
\]

We shall prove this by induction on $\ell$ and $\sum_{i=1}^{\ell} \sigma_i$. We first consider a \change{two} base cases.

\paragraph{Base cases.} For the base \change{case} of $\ell=1$, we prove the stronger statement that $[V^{T} \mid I_n]$ is MDS.  Pick a subset of $n$ columns of this matrix, and expand the determinant along the columns corresponding to $I_n$. Up to sign, we are left with an expression of the general form
\[
\det \begin{pmatrix}
    \alpha_1^{e_1} & \alpha_1^{e_2} & \cdots & \alpha_1^{e_\sigma}\\
    \alpha_2^{e_1} & \alpha_2^{e_2} & \cdots & \alpha_2^{e_\sigma}\\
         \vdots & \vdots & \ddots & \vdots\\
    \alpha_\sigma^{e_1} & \alpha_\sigma^{e_2} & \cdots & \alpha_\sigma^{e_\sigma}\\
  \end{pmatrix} \neq 0,
\]
which follows from the proof of Theorem~\ref{thm:gm-mds-primal}.

For the other base case of $\sum_{i=1}^{\ell} \sigma_i = 0$, we have that $\sigma_i = 0$ for all $i$. In that case, we have that

\begin{align*}
   \dim \left(\bigcap_{i=1}^\ell (V^T_{\le \sigma_i} + I_{A_i})\right) &= \dim \left(\bigcap_{i=1}^\ell I_{A_i}\right)\\
   &= \dim I_{A_{[\ell]}} = |A_{[\ell]}|\\
   &= \max_{P_1 \sqcup \cdots \sqcup P_s = [\ell]} \left[-n(s-1) + \sum_{i=1}^s (\sigma_{P_i} + |A_{P_i}|)\right],
\end{align*}
where the last step uses the partition $P_1 = [\ell]$, as desired.

\paragraph{Inductive step.} \change{Fix $\ell \ge 2$ and $N \ge 1$. As our inductive hypothesis we assume that (\ref{eq:goal}) holds for any $(\ell',k')$-null configuration $(\sigma'_1, A'_1, \delta'_1), \hdots, (\sigma'_{\ell'}, A'_{\ell'}, \delta'_{\ell'})$ whenever one of the following holds.
\begin{itemize}
\item $\ell' < \ell$ (in particular, we may assume that $(V^T, I_n)$ is $\MDS_b(\ell-1)$),
\item $\ell' = \ell$ and $\sum_{i=1}^{\ell} \sigma_i < N$
\end{itemize}
Note that we do not induct on $k'$. We now fix an $(\ell, k)$-null configuration $(\sigma_1, A_1, \delta_1), \hdots, (\sigma_\ell, A_\ell, \delta_\ell)$ and seek to show it satisfies (\ref{eq:goal}). We divide this analysis into three cases.}

\begin{enumerate}

\item There is $J$ with $2 \le |J| \le \ell-1$ for which (\ref{eq:hall}) is tight.

\item $\sigma_i = \sigma_{i'} = 0$ for some $i \neq i'$

\item None of the above.

\end{enumerate}

\paragraph{Case 1.} This case follows immediately from Lemma~\ref{lem:case-1-full}.

\paragraph{Case 2.} Without loss of generality assume that $\sigma_{\ell-1} = \sigma_\ell = 0$. Note that 
\begin{align}
\bigcap_{i=1}^\ell (V^T_{\le \sigma_i} + I_{A_i}) = \bigcap_{i=1}^{\ell-2} (V^T_{\le \sigma_i} + I_{A_i}) \cap I_{A_{\ell-1} \cap A_{\ell}}.\label{eq:case-2}
\end{align}
The RHS corresponds to the \change{\emph{nearly} $(\ell-1,k)$-null} configuration $(\sigma_1, A_1, \delta_1), \hdots, (\sigma_{\ell-2}, A_2, \delta_{\ell-2}), (0, A_{\ell-1} \cap A_{\ell}, \delta_{\ell-1} + \delta_i).$ \change{By ``nearly'' we mean that the configuration satisfies conditions (a*), (c*), (d*) of the inductive framework, but not necessarily (b*).} Thus, by Proposition~\ref{prop:dual-lp} and the induction hypothesis that $(V^T, I_n)$ is $\MDSb(\ell-1)$, we have that the RHS of (\ref{eq:case-2}) equals $0$, as desired.

\paragraph{Case 3.} Without loss of generality, we may assume that $\sigma_1, \hdots, \sigma_{\ell-1} \ge 1$ but $\sigma_\ell = 0$. From (\ref{eq:hall}) with $J = [\ell-1]$, we have that
\begin{align*}
  1 + |A_{[\ell-1]}| &\le \sigma_{[\ell-1]} + |A_{[\ell-1]}|\\
                     &\le n - \sum_{i=1}^{\ell-1} \delta_i\\
  &= \delta_\ell\\
  &= n - |A_\ell|.
\end{align*}
Thus, there exists
$j^{*} \in [n] \setminus (A_\ell \cup A_{[\ell-1]})$. For all
$i \in [\ell]$, define $(\sigma'_i, A'_i, \delta'_i)$ as follows
\[
(\sigma'_i, A'_i, \delta'_i) =
\begin{cases}
(\sigma_i, A_i, \delta_i) & j^{*} \in A_i\text{ or } i = \ell\\
(\sigma_i - 1, A_i \cup \{j^{*}\}, \delta_i) & \text{ otherwise}.
\end{cases}
\]
We seek to show that $\{(\sigma'_i, A'_i, \delta'_i) : i \in [\ell]\}$ \change{is a $(\ell,k)$-null configuration}. Clearly conditions (a*)--(c*) are satisfied. To check (d*), we split into cases based on the size of $J \subseteq [\ell]$. If $|J| = 1$, then (\ref{eq:hall}) is still satisfied as $|\sigma'_i| + |A'_i| = |\sigma_i| + |A_i|$ for all $i \in [\ell]$. If $|J| = \ell$, then the LHS of (\ref{eq:hall}) is still $0$ as $\sigma_\ell = 0$ and $j^{*} \not\in A_\ell$. Otherwise, we have that
\begin{align*}
  \sigma'_J + \left|A'_J\right| &\le \sigma_J + \left|A_J\right| + 1& \text{ (only $j^{*}$ possibly added)}\\
                                &\le  n - \sum_{i \in J} \delta_i & \text{ (not in Case 1)}\\
                                &= n - \sum_{i \in J} \delta'_i.
\end{align*}

Since $j^{*} \not\in A_{[\ell-1]}$, we must have that $\sum_{i=1}^{\ell} \sigma'_i < \sum_{i=1}^{\ell} \sigma_i \change{=N}$. Therefore, by the induction hypothesis we have that
\[
\bigcap_{i=1}^\ell (V^T_{\le \sigma'_i} + I_{A'_i}) = 0.
\]

To complete the case, we need to show algebraically this implies (\ref{eq:goal}).

\begin{proposition}
We have that
\begin{align}
\bigcap_{i=1}^\ell (V^T_{\le \sigma'_i} + I_{A'_i}) = 0 \implies \bigcap_{i=1}^\ell (V^T_{\le \sigma_i} + I_{A_i}) = 0.\label{eq:goal'}
\end{align}
\end{proposition}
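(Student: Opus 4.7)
The plan is to use Lemma~\ref{lem:mds5} to translate both sides of the implication into non-vanishing of block determinants, then to exhibit $\det M'$ as a leading coefficient of $\det M$ in the variable $\alpha_{j^*}$. Let $M$ and $M'$ denote the block matrices from Lemma~\ref{lem:mds5} associated with $\{(\sigma_i, A_i)\}_{i=1}^\ell$ and $\{(\sigma'_i, A'_i)\}_{i=1}^\ell$ respectively. By the $\ell=1$ base case established above, $(V^T, I_n)$ is $\MDSb(1)$, so each sum has full dimension $\dim(V^T_{\le \sigma_i} + I_{A_i}) = \sigma_i + |A_i|$ by Proposition~\ref{prop:mdsb-1}; combined with condition (b*) this makes both matrices square of size $\ell n \times \ell n$, and Lemma~\ref{lem:mds5} reduces the two intersection statements to $\det M \ne 0$ and $\det M' \ne 0$ respectively.

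First I would examine how $\alpha_{j^*}$ enters $M$. Because $V^T$ contains $\alpha_j$ only in its $j$-th row and $\sigma_\ell = 0$, the variable $\alpha_{j^*}$ appears in $M$ exactly in the rows $(i, j^*)$ for $i \in [\ell-1]$, and within each such row only in the $V^T|_{\le \sigma_i}$ block, with entries $\alpha_{j^*}^{e_1}, \ldots, \alpha_{j^*}^{e_{\sigma_i}}$. Therefore $\det M$, viewed as a polynomial in $\alpha_{j^*}$, has degree at most $T := \sum_{i=1}^{\ell-1} e_{\sigma_i}$, and this maximum is attained only by matching each row $(i, j^*)$ with the last column of $V^T|_{\le \sigma_i}$. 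Laplace expansion along these pairings yields $[\alpha_{j^*}^T]\det M = \pm \det M''$, where $M''$ is $M$ with the rows $(i, j^*)$ and the last columns of the blocks $V^T|_{\le \sigma_i}$ removed for all $i \in [\ell-1]$.

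Next I would reduce $\det M'$ by a matching argument. Since $A'_i = A_i \cup \{j^*\}$ with $j^* \notin A_i$ for $i \in [\ell-1]$, the $I|_{A'_i}$ block of $M'$ contains the standard basis column $e_{j^*}$, whose only nonzero entry anywhere in $M'$ is a single $1$ at row $(i, j^*)$ (the shared copy of column $j^*$ of $I_n$ lies in a different block column). Sequentially expanding $\det M'$ along these $\ell-1$ columns gives $\det M' = \pm \det \widetilde{M}$; using $V^T|_{\le \sigma'_i} = V^T|_{\le \sigma_i - 1}$, a direct block-by-block comparison shows $\widetilde{M} = M''$. Combining, $[\alpha_{j^*}^T]\det M = \pm \det M'$, so $\det M' \ne 0$ implies $\det M$ is a nonzero polynomial in the transcendental $\alpha_{j^*}$ and hence nonzero after evaluation, giving (\ref{eq:goal'}). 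The main bookkeeping obstacle is verifying carefully that the two reductions yield identical residual matrices $M'' = \widetilde{M}$; the sign factors from the different Laplace expansion orderings are irrelevant since only non-vanishing matters. This argument is dual to the coefficient extraction in Case~3 of the proof of Theorem~\ref{thm:gm-mds-primal}, with the roles of the $\sigma_i$'s and the $A_i$'s interchanged.
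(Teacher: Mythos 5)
Your overall strategy (translate both intersections into block determinants via Lemma~\ref{lem:mds5} and exhibit the primed determinant as a coefficient of a power of $\alpha_{j^*}$ in the unprimed one) is the same as the paper's, but there is a genuine gap in how you set up the coefficient extraction. You assume that $j^* \notin A_i$ for every $i \in [\ell-1]$, writing ``$A'_i = A_i \cup \{j^*\}$ with $j^* \notin A_i$ for $i \in [\ell-1]$'' and taking the exponent $T = \sum_{i=1}^{\ell-1} e_{\sigma_i}$. But $j^*$ is only chosen outside $A_\ell \cup A_{[\ell-1]}$, i.e.\ outside $A_\ell$ and outside the \emph{intersection} $\bigcap_{i \in [\ell-1]} A_i$; individual sets $A_i$ with $i \in [\ell-1]$ may well contain $j^*$, and for those indices the primed configuration is defined to be unchanged: $(\sigma'_i, A'_i) = (\sigma_i, A_i)$, not $(\sigma_i - 1, A_i \cup \{j^*\})$. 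So the set of modified blocks is $J^* = \{i \in [\ell-1] : j^* \notin A_i\}$, which can be a proper subset of $[\ell-1]$.

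This is not just bookkeeping: with your exponent $T$ the claimed identity $[\alpha_{j^*}^{T}]\det M = \pm \det M'$ is false whenever $J^* \subsetneq [\ell-1]$. If $j^* \in A_i$ for some $i \in [\ell-1]$, the column $e_{j^*}$ sitting inside $I|_{A_i}$ has its unique nonzero entry in row $(i,j^*)$, so in every nonzero permutation term that row is matched to that column and cannot also be matched to the last column of $V^T|_{\le \sigma_i}$; hence such a row block contributes no factor of $\alpha_{j^*}$ at all, the true degree of $\det M$ in $\alpha_{j^*}$ is at most $e := \sum_{i \in J^*} e_{\sigma_i} < T$, and $[\alpha_{j^*}^{T}]\det M = 0$ even though $\det M'$ may be nonzero (your residual matrices $M''$ and $\widetilde{M}$ then genuinely differ: in $M''$ the block for such an $i$ has lost a $V^T$ column and acquired an all-zero column, while in $\widetilde{M}$ it keeps all $\sigma_i$ columns of $V^T$ with row $j^*$ deleted). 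The repair is exactly the forced-matching observation above: restrict the extraction to $i \in J^*$, use the exponent $e = \sum_{i \in J^*} e_{\sigma_i}$, show that blocks with $j^* \in A_i$ or with $\sigma_i = 0$ contribute no $\alpha_{j^*}$, and then the coefficient of $\alpha_{j^*}^{e}$ in $\det M$ equals, up to sign and column swaps, $\det M'$. With that correction your argument coincides with the paper's proof.
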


\begin{proof}
By Lemma~\ref{lem:mds5}, we have that the RHS of (\ref{eq:goal'}) is equivalent to
\begin{align}
\det \begin{pmatrix}
I_n & V^{T}|_{\le \sigma_1} & I|_{A_1}&  & & & &\\
I_n &  & & V^{T}|_{\le \sigma_2} & I|_{A_2} & & & \\
\vdots & & & & & \ddots & &\\
I_n &  & &  & & & V^{T}|_{\le \sigma_\ell} & I|_{A_\ell}
\end{pmatrix} \neq 0.\label{mat:RHS}
\end{align}
and the LHS of (\ref{eq:goal'}) is equivalent to
\begin{align}
\det\begin{pmatrix}
I_n & V^{T}|_{\le \sigma'_1} & I|_{A'_1}&  & & & &\\
I_n &  & & V^{T}|_{\le \sigma'_2} & I|_{A'_2} & & & \\
\vdots & & & & & \ddots & &\\
I_n &  & &  & & & V^{T}|_{\le \sigma'_\ell} & I|_{A'_\ell}
\end{pmatrix} \neq 0.\label{mat:LHS}
\end{align}

We reminder the reader that $V$ has the structure
\[
  V = \begin{pmatrix}
    \alpha_1^{e_1} & \alpha_2^{e_1} & \cdots & \alpha_n^{e_1}\\
    \alpha_1^{e_2} & \alpha_2^{e_2} & \cdots & \alpha_n^{e_2}\\
    \vdots & \vdots & \ddots & \vdots\\
    \alpha_1^{e_k} & \alpha_2^{e_k} & \cdots & \alpha_n^{e_k}\\
  \end{pmatrix}
\]

Let $J^{*} \subseteq [\ell-1]$ be the set of indices $i$ for which $j^{*} \not\in A_i$. Define
\[
  e := \sum_{i \in J^{*}} e_{\sigma_i}.
\]
We claim that the coefficient of $\alpha_{j^*}^{e}$ in the determinant expansion of (\ref{mat:RHS}) equals the determinant expansion of (\ref{mat:LHS}), up to a factor of $\pm 1$. Note this would immediately imply that (\ref{mat:LHS}) $\implies$ (\ref{mat:RHS}).

First note that if $j \in A_i$ or $\sigma_i = 0$ (i.e., $i = [\ell]$), then the $i$th row block cannot contribute any factors of $\alpha_{j^*}$. Otherwise, $i \in J^{*}$, and the maximal power of $\alpha_{j^*}$ that can be achieved in the $i$th row block is $\alpha_{j^*}^{e_{\sigma_i}}$. Thus, $e$ is an upper on the maximal power of $\alpha_{j^*}$ in any term of the determinant expansion of (\ref{mat:RHS}). If $\alpha_{j^*}^{e_{\sigma_i}}$ is achieved, it must be obtained by selecting the $\alpha_{j^*}^{e_{\sigma_i}}$ term in the $i$th row block for each $i \in J^{*}$.

Therefore, the coefficient of $\alpha_{j^*}^{e}$ can be computed by the following procedure:
\begin{itemize}
\item For all $i \in J^{*}$ replace the column $V^{T}|_{\sigma_i}$ in the $i$th row block, with a $k$-dimensional indicator vector of the $j^{*}$th row.
\end{itemize}
By applying a suitable sequences of columns swaps, we can then obtain (\ref{mat:LHS}). Thus, up to a factor of $\pm 1$, the coefficient of $\alpha_{j^*}^{e}$ in the determinant expansion of (\ref{mat:RHS}) equals the determinant expansion of (\ref{mat:LHS}), as desired.
\end{proof}

Thus, we have completed the proof of Theorem~\ref{thm:tr-gm-mds}. This also completes the proof of Theorem~\ref{thm:gm-mds-dual}.

\section{Extension to Irreducible Varieties}\label{sec:irr-var}
\subsection{Algebraic Preliminaries}
We first recall some standard facts from algebraic geometry.

Recall, an ideal $I\subseteq \F[x_1,\hdots,x_k]$ is said to be radical if $f^r\in I$ for \change{some $r \ge 1$} implies $f\in I$.
 
For an algebraically closed $\F$, a variety $X$ over $\F^k$ is defined as the zero set of a radical ideal $I\subseteq \F[x_1,\hdots,x_k]$. The Zariski topology over $X$ is generated by open sets $U_f=\{x|f(x)\ne 0,x\in X\}$ (in other words sets where a polynomial does not vanish) for all $f\in \F[x_1,\hdots,x_k]$. 

A variety $X$ is said to be irreducible if $\F[x_1,\hdots,x_n]/I$ is an integral domain or equivalently $I$ is a prime ideal. Topologically, this means that $X$ does not contain two disjoint non-empty open sets.

Any variety $X\subseteq \F^k$ is uniquely a finite union of irreducible components $X_1,\hdots,X_m$. The uniqueness means that if $X$ can be written as a union of irreducible varieties such that none of them are contained in the other then they have to be $X_1,\hdots,X_m$. Algebraically speaking there exists finitely many minimal prime ideals $I_1,\hdots,I_m$ containing $I$. Also $I_1\cap\hdots\cap I_m=I$. Note that the irreducible components can be of different dimension (for instance take the union of a curve and a hyperplane not containing the curve).

A polynomial $f$ is said to be generically non-vanishing on a variety $X$ if the open set $U_f=\{x|f(x)\ne 0,x\in X\}$ is dense for the Zariski topology on $V$. For an irreducible $V$ that simply means $f\not\in I$. Equivalently, there is at least one point in $V$ on which $f$ does not vanish. For a general $X$, it means that $f$ does vanish on each of the irreducible components of $X$.

We need a simple fact about the product of two varieties.

\begin{lemma}[Product of varieties (See 10.4.H in \cite{raviVakilFOAG})]\label{lem-prodIrred}
If $X_1\subseteq \F^{k_1}$ and $X_2\subseteq \F^{k_2}$ are two varieties then $X_1\times X_2\subseteq \F^{k_1+k_2}$ is also a variety. Furthermore, if $X_1$ and $X_2$ are irreducible then so is $X_1\times X_2$.

In general if $X_1$ and $X_2$ have irreducible components $X_{1,1},\hdots,X_{1,m_1}$ and $X_{2,1},\hdots,X_{2,m_2}$ respectively then $X_{1,i}\times X_{2,j}, i\in [m_1],j\in [m_2]$ are the irreducible components of $X_1\times X_2$.  
\end{lemma}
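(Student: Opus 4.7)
The plan is to prove the three assertions in turn: (i) $X_1\times X_2$ is a variety, (ii) the product of two irreducibles is irreducible, and (iii) the irreducible decomposition of a product is the product of the component decompositions. Throughout I will use that $\F$ is algebraically closed so that the variety/radical ideal correspondence is available.

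For (i), let $I_1\subseteq \F[x_1,\dots,x_{k_1}]$ and $I_2\subseteq \F[y_1,\dots,y_{k_2}]$ be the radical ideals cutting out $X_1$ and $X_2$. View their generators inside the polynomial ring $R:=\F[x_1,\dots,x_{k_1},y_1,\dots,y_{k_2}]$ and let $J\subseteq R$ be the ideal they generate. It is immediate set-theoretically that $V(J)=X_1\times X_2$, so this Zariski closed set is cut out by the radical of $J$, i.e.\ by the ideal of all polynomials vanishing on $X_1\times X_2$. Hence $X_1\times X_2$ is a variety in $\F^{k_1+k_2}$.

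For (ii), the standard slice-argument is the cleanest path. Suppose $X_1,X_2$ are irreducible and write $X_1\times X_2=Y_1\cup Y_2$ with $Y_1,Y_2$ Zariski closed. For each $x\in X_1$ the slice $\{x\}\times X_2$ is homeomorphic to $X_2$ and so is irreducible; therefore it lies entirely in $Y_1$ or in $Y_2$. Define
\[
 A_i=\{x\in X_1:\{x\}\times X_2\subseteq Y_i\}=\bigcap_{y\in X_2}\{x\in X_1:(x,y)\in Y_i\},
\]
which exhibits $A_i$ as an intersection of closed subsets of $X_1$, hence closed. Since $X_1=A_1\cup A_2$ and $X_1$ is irreducible, one of the $A_i$'s equals $X_1$, whence $Y_i=X_1\times X_2$. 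This shows $X_1\times X_2$ is irreducible. The main (mild) obstacle here is verifying that $A_i$ is closed, which is where the irreducibility of the slices is really used.

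For (iii), write $X_1=\bigcup_i X_{1,i}$ and $X_2=\bigcup_j X_{2,j}$, so that $X_1\times X_2=\bigcup_{i,j} X_{1,i}\times X_{2,j}$. By (ii), each $X_{1,i}\times X_{2,j}$ is irreducible, so it remains to check that no such product is contained in another. If $X_{1,i}\times X_{2,j}\subseteq X_{1,i'}\times X_{2,j'}$ then projecting to the first $k_1$ coordinates gives $X_{1,i}\subseteq X_{1,i'}$ and projecting to the last $k_2$ coordinates gives $X_{2,j}\subseteq X_{2,j'}$; by the irredundancy of the decompositions of $X_1$ and $X_2$, this forces $i=i'$ and $j=j'$. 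Hence $\{X_{1,i}\times X_{2,j}\}_{i,j}$ is exactly the irredundant irreducible decomposition of $X_1\times X_2$, which is the statement of the lemma. The only step that requires any real care is (ii); (i) and (iii) are essentially bookkeeping once (ii) is in hand.
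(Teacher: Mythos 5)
Your proof is correct. Note, though, that the paper does not prove this lemma at all: it is quoted as a standard fact with a pointer to Vakil (10.4.H), where the irreducibility of the product is obtained from the scheme-theoretic/commutative-algebra fact that the tensor product of the two coordinate rings over an algebraically closed field is again a domain. Your route is the elementary topological one, and all three steps check out: (i) $X_1\times X_2=V(J)=V(\sqrt{J})$ with $\sqrt{J}$ radical, so it is a variety; (ii) the slice argument is the standard proof, and the only point needing care --- that $A_i=\bigcap_{y\in X_2}\{x\in X_1:(x,y)\in Y_i\}$ is closed --- is handled correctly, since each set in the intersection is the preimage of the closed set $Y_i$ under the polynomial map $x\mapsto(x,y)$ (equivalently, it is cut out by the defining polynomials of $Y_i$ with $y$ substituted in); (iii) projecting a containment $X_{1,i}\times X_{2,j}\subseteq X_{1,i'}\times X_{2,j'}$ onto the two factors (using that the components are nonempty) and invoking the irredundancy of the component decompositions, together with the uniqueness statement for irreducible decompositions quoted in the paper's preliminaries, gives exactly the claimed list of components. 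Compared with the paper's citation, your argument buys self-containedness and stays entirely within the paper's point-set definitions (zero sets of radical ideals, irreducibility phrased topologically), at the cost of a page of bookkeeping; the reference's algebraic approach is shorter to invoke but imports the nontrivial fact that $A\otimes_{\F}B$ is a domain for domains $A,B$ over an algebraically closed field, which is essentially the same slice-type argument in algebraic clothing. One cosmetic remark: in (ii) you conclude irreducibility in the form ``not a union of two proper closed subsets,'' which for the nonempty set $X_1\times X_2$ is equivalent to the paper's phrasing that there are no two disjoint nonempty open subsets; it is worth saying that equivalence in one line since the paper states the topological definition in the latter form.
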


\begin{lemma}\label{lem-powerS}
For an algebraically closed $\F$ and an irreducible variety $X\subseteq \F^k$, there exists a $d$ ($d$ will be the dimension of $X$) such that we can find functions $f_1,\hdots,f_k$ (depending on $X$) in the formal power series ring $\F[[z_1,\hdots,z_d]]$ such that for any polynomial $g\in \F[x_1,\hdots,x_k]$ over $\F^k$ it is generically non-vanishing over $X$ if and only if $g(f_1,\hdots,f_k)\in \F[[z_1,\hdots,z_d]]$ is non-zero.

We also have that for $X^n\subseteq \F^{kn}$ any polynomial $g\in \F[x_{1,1},\hdots,x_{1,k},\hdots,x_{n,1},\hdots,x_{n,k}]$ is generically non-vanishing over $X^n$ if and only if $g(f_1(\bz_1),\hdots,f_k(\bz_1),\hdots,f_1(\bz_n),\hdots,f_k(\bz_n))$ is non-zero. 
\end{lemma}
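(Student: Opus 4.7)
The plan is to produce the $f_i$ from the completed local ring at a smooth point of $X$. Since $X$ is irreducible over an algebraically closed field $\F$, the smooth locus of $X$ is a nonempty Zariski-open subset (regularity equals smoothness over a perfect field, and the singular locus is a proper closed subvariety by the Jacobian criterion). Pick any smooth closed point $p \in X$ where $d := \dim X$; by Cohen's structure theorem the completed local ring $\hat{\mathcal{O}}_{X,p}$ is isomorphic to $\F[[z_1, \ldots, z_d]]$. Composing $\F[X] \to \mathcal{O}_{X,p} \to \hat{\mathcal{O}}_{X,p} \cong \F[[z_1, \ldots, z_d]]$ and letting $f_i$ be the image of the coordinate $x_i$ gives the candidate family of power series.

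Injectivity of this composition is what drives the equivalence. The localization map $\F[X] \to \mathcal{O}_{X,p}$ is injective because $\F[X]$ is a domain (as $X$ is irreducible), and the completion map $\mathcal{O}_{X,p} \to \hat{\mathcal{O}}_{X,p}$ is injective by Krull's intersection theorem applied to the Noetherian local ring $\mathcal{O}_{X,p}$. Thus the composite $\varphi : \F[X] \hookrightarrow \F[[z_1, \ldots, z_d]]$ is an injection, so for any $g \in \F[x_1, \ldots, x_k]$ one has $g(f_1, \ldots, f_k) = 0$ in the power series ring iff $g$ vanishes in $\F[X]$ iff $g \in I(X)$ iff $g$ is not generically non-vanishing on $X$.

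For the $X^n$ statement, Lemma~\ref{lem-prodIrred} gives that $X^n$ is irreducible, and its coordinate ring is $\F[X]^{\otimes n}$ (tensor products over $\F$). Since every $\F$-module is flat, tensoring the injection $\varphi$ with itself $n$ times preserves injectivity, yielding an embedding $\F[X]^{\otimes n} \hookrightarrow \F[[\bz_1]] \otimes_{\F} \cdots \otimes_{\F} \F[[\bz_n]]$. The natural multiplication map $\F[[\bz_1]] \otimes_{\F} \cdots \otimes_{\F} \F[[\bz_n]] \to \F[[\bz_1, \ldots, \bz_n]]$ is also injective by induction on $n$: any nonzero element $\sum_i a_i \otimes b_i$ of the left-hand side (with the $a_i \in \F[[\bz_1]]$ chosen $\F$-linearly independent after row reduction) maps to an element whose coefficient at some monomial in $(\bz_2, \ldots, \bz_n)$ is a nontrivial $\F$-linear combination of the $a_i$, hence nonzero in $\F[[\bz_1]]$. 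Composing gives $\F[X^n] \hookrightarrow \F[[\bz_1, \ldots, \bz_n]]$ sending $x_{i,j} \mapsto f_i(\bz_j)$, and the claimed equivalence follows exactly as in the single-factor case.

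The only delicate step is locating the smooth point and invoking Cohen's theorem; all the subsequent homological bookkeeping is formal. Should one prefer to avoid smoothness, an alternative route uses Noether normalization to realize $\F[X]$ as a finite separable extension of some $\F[x_1, \ldots, x_d]$ and the multivariate Hensel's lemma to solve for the remaining coordinates as algebraic power series in $z_i := x_i - \alpha_i$ around a generic base point $(\alpha_1, \ldots, \alpha_d) \in \F^d$, which again yields an injection $\F[X] \hookrightarrow \F[[z_1, \ldots, z_d]]$.
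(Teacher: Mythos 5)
Your proof is correct and follows essentially the same route as the paper: pick a smooth point of the irreducible $X$ (which exists by generic smoothness), embed the coordinate ring into the local ring and then into its completion, and identify the completion at a smooth point with $\F[[z_1,\dots,z_d]]$, so that generic non-vanishing of $g$ becomes non-vanishing of $g(f_1,\dots,f_k)$. The only divergence is the $X^n$ statement, where the paper passes to the stalk at the product point of the irreducible product $X^n$, while you embed $\F[X]^{\otimes n}$ into $\F[[\bz_1,\dots,\bz_n]]$ using flatness over $\F$ and injectivity of the multiplication map on tensor products of power series rings; both arguments are valid, and yours makes explicit a step the paper treats in one line.
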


The above follows from standard facts in algebraic geometry. We give a proof and references in the appendix.

\subsection{Algebraic GM-MDS}

\change{
\begin{definition}[$\MDS(\ell)$ property for varieties]\label{def-MDSlvar}
Let $\F$ be an algebraically closed field and $V$ be $k\times (\ell-1)k$ generic matrix (that is a matrix whose every entry is an independent formal variable). 
Given a variety $X\subseteq \F^k$ let $V^X(n)$ be a $k\times n$ matrix whose columns are generic points from $X$. 

A variety $X\subseteq \F^k$ is then said to be $\MDS(\ell)$ if for every $n$ matrix $V^X(n)$ is generically a $[n,k]-\MDS(\ell)$ code. 

Concretely, this means for any $A_1,\hdots,A_\ell\subseteq [n]$ with $|A_i| \le k$ and $|A_1|+\hdots+|A_\ell|=(\ell-1)k$, we have that $V_{A_1}\cap\hdots\cap V_{A_\ell}=0$ if and only if
  \[\det \begin{pmatrix}
    I_k & V^X(n)|_{A_1} & & & \\
    I_k & & V^X(n)|_{A_2} & &\\
    \vdots & & & \ddots &\\
    I_k & & & &  V^X(n)|_{A_{\ell}}
  \end{pmatrix},\]
  is generically non-vanishing on $X^n$.
\end{definition}
Note in the notation $V^X(n)$ we will drop the $n$ for brevity, when it is clear what value of $n$ is being used.}

A $\MDS(2)$ variety we just call $\MDS$.

We also note that being $\MDS$ equivalently means that the determinant of a $k\times k$ matrix where each column is a point in $\F^k$ is generically non-vanishing over $X^k$. This allows us to give another simple characterization of $\MDS$ varieties.

\begin{theorem}[$\MDS$ varieties are not contained in hyper-planes passing through the origin]\label{thm:MDS2VarChar}
A variety $X\subseteq \F^k$ is $\MDS$ if and only if each of its irreducible components is not contained in any hyper-plane passing through the origin.
In particular, an irreducible variety is $\MDS$ if and only if it is not contained in any hyperplane passing through the origin.
\end{theorem}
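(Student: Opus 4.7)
The plan is to exploit the simple equivalent characterization stated just before the theorem: $X$ is $\MDS$ iff the determinant $D(v_1,\hdots,v_k) := \det[v_1\mid\cdots\mid v_k]$ is generically non-vanishing on $X^k$. Combining this with Lemma~\ref{lem-prodIrred}, which identifies the irreducible components of $X^k$ as products $X_{i_1}\times\cdots\times X_{i_k}$ of irreducible components of $X$, the task reduces to a purely geometric question: does $D$ vanish identically on some such product? By the definition of generically non-vanishing (a polynomial is generically non-vanishing on a variety iff it does not vanish identically on any irreducible component), this recasts both directions as statements about the individual components $X_{i_j}$.

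For the ($\Rightarrow$) direction I would argue by contrapositive. Suppose some irreducible component $X_i$ of $X$ is contained in a hyperplane $H = \{x : \langle a,x\rangle = 0\}$ through the origin. Then every point of the irreducible component $X_i^k \subseteq X^k$ consists of $k$ vectors lying in the $(k-1)$-dimensional subspace $H$, so they are linearly dependent and $D$ vanishes identically on $X_i^k$. Hence $D$ is not generically non-vanishing on $X^k$, so $X$ is not $\MDS$.

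For the ($\Leftarrow$) direction, I fix an arbitrary irreducible component $X_{i_1}\times\cdots\times X_{i_k}$ of $X^k$ and explicitly construct points $v_j\in X_{i_j}$ with $D(v_1,\hdots,v_k)\neq 0$; this suffices to show $D$ does not vanish identically on this component. I proceed by induction on $j$: assuming $v_1,\hdots,v_{j-1}$ have been chosen linearly independent, their span $W$ has dimension $j-1 \le k-1$ and is thus contained in some hyperplane $H$ through the origin. By hypothesis $X_{i_j}\not\subseteq H$, so I can pick $v_j\in X_{i_j}\setminus H \subseteq X_{i_j}\setminus W$, preserving linear independence. (For $j=1$ the span is $\{0\}$, which is contained in every hyperplane through the origin, and the hypothesis forces $X_{i_1}\neq\{0\}$, yielding a nonzero $v_1$.) At $j=k$ we obtain linearly independent $v_1,\hdots,v_k$ with $D(v_1,\hdots,v_k)\neq 0$. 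The ``in particular'' statement is then immediate, since an irreducible variety is its own only irreducible component.

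The argument is essentially bookkeeping over irreducible components of $X^k$, so I do not anticipate a serious obstacle; the only nontrivial input is Lemma~\ref{lem-prodIrred}, which has already been proved.
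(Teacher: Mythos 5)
Your proposal is correct, and the easy direction (a component inside a hyperplane through the origin forces the $k\times k$ determinant to vanish identically on the corresponding product component of $X^k$) coincides with the paper's argument. For the harder direction, however, you take a genuinely different route. The paper argues contrapositively: assuming the determinant $f$ vanishes identically on some component $Y=X_{j_1}\times\cdots\times X_{j_k}$ of $X^k$, it considers a maximal $t\times t$ sub-determinant of the column matrix that does not vanish on $Y$, expands a $(t+1)\times(t+1)$ minor containing it along the extra column, and specializes the first $t$ points to exhibit an explicit nonzero linear form $g$ with $X_{j_{t+1}}\subseteq\{g=0\}$; this extracts a hyperplane through the origin containing one of the components. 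You instead prove the implication directly: on each component $X_{i_1}\times\cdots\times X_{i_k}$ of $X^k$ (identified via Lemma~\ref{lem-prodIrred}, as in the paper) you greedily build a witness point, using at step $j$ that $\Span(v_1,\hdots,v_{j-1})$ sits inside some hyperplane $H$ through the origin and that $X_{i_j}\not\subseteq H$, so a point of $X_{i_j}\setminus H$ extends the linearly independent set; the resulting nonzero determinant shows $f$ is not identically zero on that component, hence generically non-vanishing on $X^k$. Both proofs rest on the same two inputs (the reduction of the $\MDS$ property to generic non-vanishing of the $k\times k$ determinant on $X^k$, and the description of the irreducible components of $X^k$), but your constructive linear-algebra argument is more elementary and avoids the paper's bookkeeping with maximal non-vanishing minors and cofactor expansions, while the paper's version has the mild virtue of producing the offending hyperplane explicitly as a cofactor linear form.
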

\begin{proof}
Let $f$ be the determinant of a $k\times k$ matrix $M$ where the $i$th column is a point $(x_{1,i},\hdots,x_{k,i})$ in $\F^k$ and $X$ have irreducible components $X_1,\hdots,X_m$. The irreducible components of $X^k$ are simply $X_{j_1}\times\hdots\times X_{j_k}$ with $j_1,\hdots,j_k\in [m]$. 

Say some irreducible component $X_i$ of $X$ lies in a hyperplane passing through the origin. Then $f$ vanishes on $(X_i)^k$ which means $X$ is not $\MDS$ (because $M$ has a kernel corresponding to the hyperplane).

Say $X$ is not $\MDS$ which means $f$ vanishes on some $Y=X_{j_1}\times\hdots\times X_{j_k}$. We now consider the determinants of the sub-matrices of $M$. Our goal will be to show some $X_{j_i}$ lies in a hyperplane passing through the origin which would complete the proof.

If any of the co-ordinate functions (say $x_{1,1}$) (corresponding to size $1$ sub-matrices) vanishes on $Y$ then $X_{j_1}$ lies in the co-ordinate hyperplane $x_{1,1}=0$. Say the largest-size sub-determinant which does not vanish on $Y$ has size $t>0$. This means the determinant of some $t\times t$ sub-matrix $M_1$ of $M$ does not vanish on $Y$ and all the determinants of $(t+1)\times (t+1)$ sub-matrices of $M$ vanish on $Y$. Pick a $t+1\times t+1$ sub-matrix $M_2$ of $M$ containing $M_1$. Without loss of generality we can assume the columns of $M_1$ come from $X_{j_1},\hdots,X_{j_t}$ and the extra column in $M_2$ comes from $X_{j_{t+1}}$ (the choice of rows is not important). We know that $\det(M_2)$ vanishes on $Y$. Let $R$ be the index of rows in $M_2$. Expanding the determinant of $M_2$ along the $t+1$th column will give us 
$$\det(M_2)= \sum_{i\in R} x_{i,t+1}\det(M_2'(i)),$$
where $M_2'(i)$ is the sub-matrix of $M_2$ obtained by removing the $i$th row and $t+1$the column. Note, $M_2'(t+1)=M_1$ which does not vanish on $Y$. This means it is possible to find points in $y_1\in X_{j_1},\hdots,y_t\in X_{j_t}$ such that $g=\det(M_2)(x_1=y_1,\hdots,x_t=y_2)$ is a non-zero polynomial. But as $\det(M_2)$ vanishes on $Y$ so does $g$. But that implies $X_{j_{t+1}}$ is contained in $g=0$ which is a hyperplane passing through the origin.
\end{proof}

We also generalize the definition of $\LDMDS(\le L)$ to the setting of varieties.

\begin{definition}[$\LDMDS(\le \ell)$ property for varieties]\label{def-LDMDSlvar}
\change{Let $\F$ be an algebraically closed field and $V$ be $k\times (\ell-1)k$ generic matrix (that is a matrix whose every entry is an independent formal variable).}
We say a variety $X\subseteq \F^k$ satisfies the $\LDMDS(\le \ell)$ property if and only if for ever $k\times n$ matrix \change{$V^X(n)$} \change{(as defined in Definition~\ref{def-MDSlvar})}, the dual of $V^X(n)$ is generically a $[n,n-k]$-$\MDS(\ell+1)$ code.

Concretely this means that for any $A_1,\hdots,A_{\ell+1}\subseteq [n]$ with $|A_i| \le n-k$ and $|A_1|+\hdots+|A_{\ell+1}|=\ell(n-k)$, we have that $V_{A_1}\cap\hdots\cap V_{A_{\ell+1}}=0$ if and only if \change{for every $Q\in (V^X)^\perp$
 \[\det \begin{pmatrix}
    I_{n-k} & Q|_{A_1} & & & \\
    I_{n-k} & & Q|_{A_2} & &\\
    \vdots & & & \ddots &\\
    I_{n-k} & & & &  Q|_{A_{\ell+1}}
  \end{pmatrix},\]}
is generically non-vanishing on \change{$X^{n}$.}
\end{definition}

Our main theorems for varieties are the following.

\begin{theorem}\label{thm-mdsIrred}
For \change{an} algebraic closed $\F$ if an irreducible $X$ is $\MDS$ then it is $\MDS(\ell)$ for all $\ell>2$.			
\end{theorem}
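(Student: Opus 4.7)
My plan is to reduce Theorem~\ref{thm-mdsIrred} to the monomial case established in Theorem~\ref{thm:gm-mds-primal} by parameterizing $X$ via formal power series and then running a leading-monomial argument on the block determinant coming from Lemma~\ref{lem:mds5}. First I would unpack Definition~\ref{def-MDSlvar}: to show $X$ is $\MDS(\ell)$, it suffices (by the combinatorial reduction in Corollary~\ref{cor:mds5}, applied pointwise on $X^n$) to fix any $A_1,\dots,A_\ell\subseteq[n]$ with the dimension-$k$ null intersection property and $\sum|A_i|=(\ell-1)k$, and to prove that the block determinant
\[
  D \;:=\; \det\begin{pmatrix} I_k & V^X|_{A_1} & & \\ I_k & & \ddots & \\ I_k & & & V^X|_{A_\ell}\end{pmatrix}
\]
is generically non-vanishing on $X^n$.

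Next I would invoke Lemma~\ref{lem-powerS} to obtain power series $f_1,\dots,f_k\in\F[[z_1,\dots,z_d]]$ such that generic non-vanishing on $X^n$ is equivalent to the substituted expression
$D\bigl((f_1(\bz_j),\dots,f_k(\bz_j))_{j=1}^n\bigr)$ being a nonzero element of $\F[[\bz_1,\dots,\bz_n]]$. Since $X$ is irreducible and, by Theorem~\ref{thm:MDS2VarChar}, not contained in any hyperplane through the origin, the $f_i$ are $\F$-linearly independent in the power series ring. Fix any monomial order on $\F[[z_1,\dots,z_d]]$; by Gaussian elimination I replace $f_1,\dots,f_k$ with a basis $g_1,\dots,g_k$ of their $\F$-span whose leading monomials $m_1\prec m_2\prec\cdots\prec m_k$ are distinct. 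This basis change amounts to left-multiplying $V^X$ by a nonzero scalar matrix, so it multiplies $D$ by a nonzero constant and leaves the non-vanishing question unchanged.

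I would then equip $\F[[\bz_1,\dots,\bz_n]]$ with the monomial order obtained by taking the product (block-by-block, one block per $\bz_j$) of the chosen order on $\F[[z_1,\dots,z_d]]$. Expanding $D$ as a Leibniz sum and each entry $g_\iota(\bz_j)$ as a power series, every term is a product of monomials, one per column. The leading term picks out the leading monomial $m_\iota(\bz_j)$ of $g_\iota$ in every column entry, so the leading monomial of $D$ in this order equals (up to sign) the same block determinant but with each $g_\iota(\bz_j)$ replaced by the pure monomial $m_\iota(\bz_j)$. In other words, the leading term of $D$ is the determinant of the corresponding multivariate monomial code.

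Finally, I would dispatch this monomial determinant using Theorem~\ref{thm:gm-mds-primal}. That theorem is stated for univariate monomial codes $x^{e_i}$, but the multivariate version reduces to it by specializing $z_s\mapsto t^{w_s}$ for generic (say algebraically independent) positive weights $w_1,\dots,w_d$: this sends distinct multivariate monomials $m_\iota$ to distinct univariate monomials $t^{E_\iota}$ with $E_1<\cdots<E_k$, and the substitution commutes with taking the block determinant. Applying Theorem~\ref{thm:gm-mds-primal} to the exponents $E_1<\cdots<E_k$ shows that the weighted univariate determinant is nonzero, hence the multivariate monomial determinant is nonzero, hence the leading coefficient of $D$ is nonzero, hence $D\neq 0$, as required.

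The main obstacle I anticipate is Step three — making the leading-term argument fully rigorous in the power series setting. In particular, one must verify that the product monomial order on $\F[[\bz_1,\dots,\bz_n]]$ really is a well-order on the monomials appearing in the Leibniz expansion (so that a unique leading term exists), and one must check that the leading coefficient of $D$ cannot be killed by cross-cancellation across different permutations in the Leibniz sum — this is where the distinctness of the $m_\iota$ and the monomial-order structure are essential. Once this is in place, the reduction to Theorem~\ref{thm:gm-mds-primal} is routine.
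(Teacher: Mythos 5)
Your reduction skeleton (variety $\to$ power series via Lemma~\ref{lem-powerS}, echelonize the $f_i$ to get distinct leading monomials as in Lemma~\ref{lem-indepColPower}, reduce multivariate monomials to univariate ones, invoke Theorem~\ref{thm:gm-mds-primal}) matches the paper's, but the step carrying all the weight --- passing from the power-series determinant $D$ to the monomial determinant --- has a genuine gap. Taking leading terms with respect to a total monomial order does not commute with determinants: the ``leading term of $D$'' is a single monomial with a scalar coefficient, whereas the monomial-code determinant $D_{\mathrm{mono}}$ is a full polynomial in the $\bz_j$'s (the sets $A_i$ overlap, so each $\bz_j$ appears in several columns and $D_{\mathrm{mono}}$ has many terms), so the two cannot be equal as you assert. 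What is actually true is only this: if $\mu^*$ is the minimum, over permutations $\pi$ in the Leibniz expansion, of the product of the leading monomials of the entries selected by $\pi$, then the coefficient of $\mu^*$ in $D$ equals the coefficient of $\mu^*$ in $D_{\mathrm{mono}}$. To conclude $D\neq 0$ you therefore need that this \emph{particular} coefficient of $D_{\mathrm{mono}}$ is nonzero --- a no-cancellation statement at the minimal monomial --- which is strictly stronger than $D_{\mathrm{mono}}\neq 0$, and $D_{\mathrm{mono}}\neq 0$ is all Theorem~\ref{thm:gm-mds-primal} gives you as a black box. Distinctness of $m_1\prec\cdots\prec m_k$ does not rescue this: because the $A_i$ overlap, different permutations can realize the same minimal exponent vector (e.g.\ by exchanging row assignments between two blocks sharing two evaluation points, possibly rerouting the identity columns), and nothing in your setup controls their signs. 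And if the coefficient of $\mu^*$ does cancel in $D_{\mathrm{mono}}$, its leading monomial moves up to some $\mu'\succ\mu^*$, where the coefficients of $D$ and $D_{\mathrm{mono}}$ need no longer agree (non-leading monomials of the $g_\iota$ can now contribute), so iterating the argument does not close the gap.

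The paper avoids this by extracting an entire initial form rather than a single coefficient: after normalizing so that the lowest term of $f_i$ is exactly $\bz^{\bj^*_i}$, Claim~\ref{cl:homogPower} substitutes $\bz_j\mapsto\beta\bz_j$, rescales rows and columns by powers of the single parameter $\beta$, and sets $\beta=0$; the specialized matrix is exactly the monomial-code block matrix, so the lowest $\beta$-graded piece of $D$ is the \emph{whole polynomial} $D_{\mathrm{mono}}$, and then $D_{\mathrm{mono}}\neq 0$ (via the $N$-ary substitution $z_t\mapsto\alpha^{N^{t-1}}$ reducing to Theorem~\ref{thm:gm-mds-primal} --- preferable to your real-weight substitution, which produces non-integer exponents) finishes the proof. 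Your argument becomes correct if you replace the total monomial order by such a one-parameter grading (any single weight under which the initial form of each $g_\iota$ is the single monomial $m_\iota$) and keep the full initial form of $D$ instead of its leading coefficient; as written, the leading-coefficient step would fail.
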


\begin{theorem}\label{thm-ldmdsIrred}
For \change{an} algebraic closed $\F$ if an irreducible $X$ is $\MDS$ then it is $\LDMDS(\le\ell)$ for all $\ell>2$.			
\end{theorem}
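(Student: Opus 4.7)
The plan is to reduce Theorem~\ref{thm-ldmdsIrred} to the monomial case via Theorem~\ref{thm:gm-mdsb-dual} and Theorem~\ref{thm:tr-gm-mds}, in parallel to how the expected proof of Theorem~\ref{thm-mdsIrred} reduces to the primal monomial Theorem~\ref{thm:gm-mds-primal}. First, since the irreducible variety $X$ is $\MDS$, the generic generator matrix $V = V^X(n)$ is $\MDS$, so by Theorem~\ref{thm:gm-mdsb-dual} it suffices to prove that $(V^T, I_n)$ is generically $\MDSb(\ell+1)$ on $X^n$. By Lemma~\ref{lem:case-1} together with the inductive framework of Section~\ref{subsec:induct} (as already deployed in Section~\ref{sec:duality}), this reduces to verifying that a specific list of block-matrix determinants of the form (\ref{mat:RHS}) is nonzero as polynomials in the coordinates of the evaluation points.

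Next, apply Lemma~\ref{lem-powerS} to parameterize $X$ generically by power series $f_1, \hdots, f_k \in \F[[z_1, \hdots, z_d]]$, so that the entries of $V$ may be taken as $V_{j,i} = f_j(\balpha_i)$ with $\balpha_1, \hdots, \balpha_n \in \F^d$ algebraically independent; by that lemma, a polynomial in the entries of $V$ is generically nonzero on $X^n$ iff it is nonzero after this substitution. Since $X$ is not contained in any hyperplane through the origin (Theorem~\ref{thm:MDS2VarChar}), the $f_j$'s are $\F$-linearly independent. We are free to replace $V$ by $MV$ for any invertible $M \in \F^{k \times k}$, since this amounts to a change of generator matrix for the same code and preserves the statement of the theorem. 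Choose such an $M$ and a weight vector $w \in \N^d$ with strictly positive entries so that, after applying $M$, the $f_j$'s have pairwise distinct leading monomials $\bz^{\bi_j}$ under the monomial order induced by $w$ (with a lex tiebreaker), and so that the weighted degrees $e_j := \langle w, \bi_j \rangle$ satisfy $e_1 < e_2 < \cdots < e_k$. Such $M$ and $w$ exist because the $f_j$'s are linearly independent and only finitely many exponents need to be separated.

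Now consider any determinant $\Delta(\balpha_1, \hdots, \balpha_n)$ arising in the $\MDSb(\ell+1)$ criterion on $(V^T, I_n)$, and endow the polynomial ring in the coordinates of the $\balpha_i$'s with the grading induced by $w$ extended coordinatewise across the $n$ points. The top $w$-degree part of $\Delta$ picks out, from each factor $f_j(\balpha_i)$ appearing in the expansion, only the leading monomial $\bz^{\bi_j}$ evaluated at $\balpha_i$. Under the specialization $\bz \mapsto (z^{w_1}, z^{w_2}, \hdots, z^{w_d})$, which sends $\bz^{\bi_j}$ to $z^{e_j}$, this top-degree part equals, up to a nonzero scalar, the analogous determinant for the univariate monomial matrix $V_{\mathrm{mon}}$ with entries $\alpha_i^{e_j}$. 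By Theorem~\ref{thm:tr-gm-mds}, $(V_{\mathrm{mon}}^T, I_n)$ is $\MDSb(\ell+1)$, so this specialized determinant is nonzero; hence $\Delta \ne 0$. This establishes that $(V^T, I_n)$ is $\MDSb(\ell+1)$, which finishes the proof via Theorem~\ref{thm:gm-mdsb-dual}.

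The main obstacle will be rigorously justifying the leading-term identification, which has two subtleties. First, one must ensure that no unexpected cancellations occur among permutations contributing to the same top $w$-degree; this is handled by refining $w$ with a lex tiebreaker on the $\bi_j$'s to obtain a unique leading monomial in the determinant expansion, so that ``leading term of the determinant'' equals ``determinant of leading terms.'' Second, the identity blocks $I_n$ and $I|_{A_i}$ in (\ref{mat:RHS}) contribute no factors of $\balpha$ and therefore commute cleanly with the leading-term extraction: they simply serve to restrict which rows and columns of $V^T$ participate in the residual monomial minor. Once these points are verified, the reduction to the univariate monomial case of Theorem~\ref{thm:tr-gm-mds} is complete, and Theorem~\ref{thm-ldmdsIrred} follows.
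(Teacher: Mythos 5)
Your overall strategy is sound and, at bottom, uses the same ingredients as the paper: the power-series parameterization of Lemma~\ref{lem-powerS}, the leading-monomial normalization coming from Lemma~\ref{lem-indepColPower}, and the univariate monomial machinery of Theorems~\ref{thm:tr-gm-mds} and~\ref{thm:gm-mdsb-dual}. The organization, however, is genuinely different. The paper first reduces the variety statement to a power-series statement (Theorem~\ref{thm-LDmdsPower}), reduces power series to multivariate monomials by the $\beta$-homogenization of Claim~\ref{cl:homogPower}, then to univariate monomials by the $N$-ary substitution, and only then invokes Theorem~\ref{thm:gm-mds-dual} (itself proved via Theorems~\ref{thm:tr-gm-mds} and~\ref{thm:gm-mdsb-dual}). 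You instead apply Theorem~\ref{thm:gm-mdsb-dual} directly at the generic point of $X^n$, which turns the dual statement into the assertion that $(V^T,I_n)$ is $\MDSb(\ell+1)$, i.e., into nonvanishing of determinants that are polynomial in the coordinates of the evaluation points; this nicely sidesteps the paper's terse ``the reduction for the LD-MDS theorems follows similarly,'' since you never manipulate determinants built from a dual matrix $Q$. You also merge the two monomial-reduction steps into a single weighted specialization $\bz\mapsto(z^{w_1},\dots,z^{w_d})$. Both routes are legitimate: the paper's two-step reduction avoids having to choose a degree-separating weight at the power-series stage, while yours shortens the chain of intermediate definitions and makes the dual case concrete.

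Two points in your leading-term step need repair. First, ``the top $w$-degree part of $\Delta$'' is not defined: the $f_j$ are power series, so there is no top degree, and the distinguished monomials $\bz^{\bi_j}$ produced by Lemma~\ref{lem-indepColPower} are the \emph{smallest} ones; you must extract the lowest $w$-degree (initial) form. With that correction, the structure of (\ref{mat:RHS}) does the work for you: the leading $w$-degree of an entry of a $V^T$-block depends only on its column, so every permutation with nonzero contribution sits at the same minimal degree $\sum_i\sum_{t\le\sigma_i}e_t$, and hence $\Delta$ equals the monomial determinant plus strictly higher-degree terms. In particular no tiebreaker is needed, and a lex tiebreaker cannot help anyway, because the specialization $\bz\mapsto(z^{w_1},\dots,z^{w_d})$ sees only $w$-degrees and would reintroduce exactly the collisions the tiebreaker was meant to prevent; what you actually need is strict separation by $w$-degree. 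Second, the existence of such a $w$ is not merely a matter of separating ``finitely many exponents'': each $f_j$ has infinitely many non-leading monomials, and all of them must have strictly larger $w$-degree than $\bz^{\bi_j}$, in addition to $e_1<\dots<e_k$ being distinct. This is true but requires an argument, e.g., run Lemma~\ref{lem-indepColPower} with a fixed weighted order and then replace $w$ by $Mw+u$ with $M$ large (which handles, uniformly, all monomials of strictly larger $w$-degree) and $u$ chosen to break the finitely many remaining ties; alternatively, follow the paper and use the $\beta$-scaling of Claim~\ref{cl:homogPower} followed by the $N$-ary substitution, which avoids the issue. With these repairs your proof is complete.
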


\subsection{Reduction to Power Series}
We want to reduce the problem to a problem about power series.

\begin{definition}[$\MDS(\ell)$-power series]
Let $V$ be $k\times (\ell-1)k$ generic matrix (that is a matrix whose every entry is an independent formal variable). Given a tuple $F=(f_1,\hdots,f_k)\in (\F[[z_1,\hdots,z_d]])^k$, let $V^F=[F(\bz_1),\hdots,F(\bz_{(\ell-1)k})]$ be a $k\times (\ell-1)k$ matrix where $\bz_1,\hdots,\bz_i=(z_{i,1},\hdots,z_{i,d}),\hdots,\bz_{(\ell-1)k}$ are distinct $d$ tuple of variables. 

$F$ is said to $\MDS(\ell)$ if and only if for any $A_1,\hdots,A_\ell\subseteq [(\ell-1)k]$ with $|A_i| \le k$ and $|A_1|+\hdots+|A_\ell|=(\ell-1)k$, we have that $V_{A_1}\cap\hdots\cap V_{A_\ell}=0$ if and only if
 \[\det \begin{pmatrix}
    I_k & V^F|_{A_1} & & & \\
    I_k & & V^F|_{A_2} & &\\
    \vdots & & & \ddots &\\
    I_k & & & &  V^F|_{A_{\ell}}
  \end{pmatrix} \neq 0,\]
\end{definition}

A $\MDS(2)$ power series we just call $\MDS$.

It is easy to show that $F=(f_1,\hdots,f_k)\in (\F[[z_1,\hdots,z_d]])^k$ is $\MDS$ if and only if

$$\Det
  \begin{pmatrix}
    f_1(\bz_1) & f_1(\bz_2) & \cdots & f_1(\bz_n)\\
    f_2(\bz_1) & f_2(\bz_2) & \cdots & f_2(\bz_n)\\
    f_3(\bz_1) & f_3(\bz_2) & \cdots & f_3(\bz_n)\\
    \vdots & \vdots & \ddots & \vdots\\
    f_k(\bz_1) & f_k(\bz_2) & \cdots & f_k(\bz_n)
  \end{pmatrix}\ne 0,$$
where $\bz_1,\hdots,\bz_i=(z_{i,1},\hdots,z_{i,d}),\hdots,\bz_k$ are distinct $d$ tuples of variables.

\begin{definition}[$\LDMDS(\le\ell)$-power series]
Let $V$ be $(n-k)\times \ell(n-k)$ generic matrix (that is a matrix whose every entry is an independent formal variable). Given a tuple $F=(f_1,\hdots,f_k)\in (\F[[z_1,\hdots,z_d]])^k$, let $V^F=[F(\bz_1),\hdots,F(\bz_{(\ell-1)k})]$ be a $k\times (\ell-1)k$ matrix where $\bz_1,\hdots,\bz_i=(z_{i,1},\hdots,z_{i,d}),\hdots,\bz_{(\ell-1)k}$ are distinct $d$ tuple of variables. 

$F$ is said to $\LDMDS(\le\ell)$ if and only if for any $A_1,\hdots,A_{\ell+1}\subseteq [\ell(n-k)]$ with $|A_i| \le n-k$ and $|A_1|+\hdots+|A_{\ell+1}|=\ell(n-k)$, we have that $V_{A_1}\cap\hdots\cap V_{A_{\ell+1}}=0$ if and only if \change{for all $Q\in (V^F)^\perp$
 \[\det \begin{pmatrix}
    I_{n-k} & Q|_{A_1} & & & \\
    I_{n-k} & & Q|_{A_2} & &\\
    \vdots & & & \ddots &\\
    I_{n-k} & & & &  Q|_{A_{\ell+1}}
  \end{pmatrix} \neq 0,\]}
\end{definition}

The main theorems about power series is the following.

\begin{theorem}\label{thm-mdsPower}
For a tuple $F=(f_1,\hdots,f_k)\in (\F[[z_1,\hdots,z_d]])^k$, $F$ is $\MDS$ if and only if $F$ is $\MDS(\ell)$ for all $\ell>2$.
\end{theorem}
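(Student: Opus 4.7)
Proof plan:

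One direction is immediate: if $F$ is $\MDS(\ell)$ for some $\ell \ge 2$, then by setting pairs equal in the configuration (as in Definition~\ref{def:MDSb-alt}), the defining condition reduces to the $\ell = 2$ case, which is just $\MDS$. So the content lies in the converse: assuming $F = (f_1,\dots,f_k)$ is $\MDS$, we must show $F$ is $\MDS(\ell)$ for every $\ell$. The plan is to reduce to the univariate monomial result Theorem~\ref{thm:gm-mds-primal} via a leading-monomial argument, following the sketch at the end of Section~\ref{subsub:irr-var}.

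Fix a weight function $w(z_1^{a_1}\cdots z_d^{a_d}) := a_1 + a_2 N + \cdots + a_d N^{d-1}$ for a large integer $N$ (so $w$ refines lexicographic order), and extend it to the ambient evaluation variables by $w(z_{j,l}) := N^{l-1}$, independent of $j$. The hypothesis that $F$ is $\MDS$ forces $f_1,\dots,f_k$ to be $\F$-linearly independent. By Gaussian elimination on the tuple $(f_1,\dots,f_k)$ --- equivalently, a left-multiplication of the generator matrix by an invertible $k\times k$ matrix, which preserves the $\MDS(\ell)$ property for every $\ell$ since it preserves every column span $V_A$ up to an invertible map --- we may assume the $w$-minimum monomials $\bz^{\bm{\alpha}_1},\dots,\bz^{\bm{\alpha}_k}$ of $f_1,\dots,f_k$ are all distinct, with nonzero leading coefficients $c_1,\dots,c_k$.

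Now fix any configuration $A_1,\dots,A_\ell$ of the type required for $\MDS(\ell)$; by Lemma~\ref{lem:mds5}, it suffices to show the corresponding block determinant $D$, viewed as a power series in the variables $\{z_{j,l}\}$, is nonzero. Every entry $f_s(\bz_j)$ has a unique $w$-minimum term $c_s \bz_j^{\bm{\alpha}_s}$, and the $I_k$ entries are weight-$0$ constants, so expanding $D$ by the Leibniz formula, each term has weight equal to the sum of the weights of its contributing entries. Consequently the $w$-lowest-weight part of $D$ coincides with the $w$-lowest-weight part of the analogous determinant $D_{\mathrm{mono}}$ in which every $f_s(\bz_j)$ is replaced by its leading monomial $c_s \bz_j^{\bm{\alpha}_s}$; in particular $D \neq 0$ whenever $D_{\mathrm{mono}} \neq 0$. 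To handle the latter, pick integers $M_l = N'^{\,l-1}$ with $N'$ large enough that the values $e_s := \sum_l M_l \alpha_{s,l}$ are all distinct (these depend only on the $\bm{\alpha}_s$'s), and specialize $z_{j,l} \mapsto \alpha_j^{M_l}$. Under this specialization, $c_s \bz_j^{\bm{\alpha}_s}$ becomes $c_s \alpha_j^{e_s}$, so the monomial generator matrix is a row-rescaling of the univariate matrix of Theorem~\ref{thm:gm-mds-primal}. That theorem says this matrix is $\MDS(\ell)$ for generic $\alpha_j$, so the specialized $D_{\mathrm{mono}}$ is nonzero, and therefore $D_{\mathrm{mono}}$ and $D$ are nonzero.

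The most subtle point is the leading-monomial step: one must check that the $w$-minimum piece of the block determinant $D$ is \emph{exactly} the $w$-minimum piece of $D_{\mathrm{mono}}$, with no cancellations spoiling the correspondence. This is precisely why we perform the change of basis first --- once each nonconstant entry has a \emph{unique} lowest-weight monomial, the Leibniz expansion respects the weight filtration termwise, so the minimum weights in $D$ and $D_{\mathrm{mono}}$ are achieved by the same permutations contributing the same monomials. Beyond this, the only thing to verify is that the two large parameters ($N$ for the intrinsic weight function and $N'$ for the univariate reduction) can be chosen in the right order: $N$ depends only on the $f_i$'s, and $N'$ only on the resulting $\bm{\alpha}_s$'s, both chosen independently of the generic evaluation points.
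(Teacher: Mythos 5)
Your proposal follows essentially the same route as the paper's proof: normalize $F$ so that each $f_s$ is its leading monomial plus higher-order terms, argue that the block determinant of Lemma~\ref{lem:mds5} is nonzero because its lowest-order part is the corresponding monomial determinant, and then reduce multivariate monomials to univariate ones by the substitution $z_{j,l}\mapsto\alpha_j^{N'^{\,l-1}}$ and invoke Theorem~\ref{thm:gm-mds-primal}. The paper implements the first step with the lexicographic order (Lemma~\ref{lem-indepColPower}) and an auxiliary homogenizing variable $\beta$ (Claim~\ref{cl:homogPower}); you implement it with a single weight grading. The plan is sound, but two of your justifications need repair.

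First, the leading-term step. No fixed $N$ makes $w$ refine lexicographic order on all of $\Z_{\ge 0}^d$ (the series have unboundedly many exponents), and Gaussian elimination by itself cannot give each individual $f_s$ a \emph{unique} $w$-minimal monomial: row operations mix different $f_s$'s but cannot break a tie inside a single series. The correct order of quantifiers is: first eliminate with respect to an actual monomial well-order compatible with your weights (the order comparing the exponent of $z_d$ first, then $z_{d-1}$, etc.\ --- note this is \emph{not} the standard lex order your parenthetical invokes, since your weights make the last variable dominant), obtaining distinct leading exponents $\bm{\alpha}_1,\dots,\bm{\alpha}_k$ exactly as in Lemma~\ref{lem-indepColPower}; then choose $N$ larger than, say, $1+\sum_l\alpha_{s,l}$ for every $s$. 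One must then check that for every one of the infinitely many other exponents $\bm{\beta}$ in the support of $f_s$ one has $w(\bm{\beta})>w(\bm{\alpha}_s)$; this holds because such $\bm{\beta}$ exceeds $\bm{\alpha}_s$ in that order and the excess at the top coordinate dominates the lower-order weights once $N$ passes the stated finite threshold, which depends only on $\bm{\alpha}_s$. This finite-threshold argument is the missing content behind "each entry has a unique lowest-weight monomial." Second, the "no cancellation" claim: knowing that the lowest-weight piece of $D$ equals the lowest-weight piece of $D_{\mathrm{mono}}$ only yields $D\neq 0$ if the bottom graded piece of $D_{\mathrm{mono}}$ is all of $D_{\mathrm{mono}}$; otherwise $D_{\mathrm{mono}}$ could be nonzero while its bottom piece cancels. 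This is true but needs a sentence: in the block matrix of Lemma~\ref{lem:mds5} the $t$-th identity column covers exactly one row of index $t$, so every surviving Leibniz term uses monomial entries in exactly $\ell-1$ rows of each index $t$, hence every term of $D_{\mathrm{mono}}$ has the same weight $(\ell-1)\sum_t w(\bm{\alpha}_t)$, i.e.\ $D_{\mathrm{mono}}$ is $w$-homogeneous. This homogeneity is precisely what the paper's rescaling of rows and identity columns by powers of $\beta$ makes explicit before setting $\beta=0$. With these two repairs your argument coincides with the paper's.
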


\begin{theorem}\label{thm-LDmdsPower}
For \change{an} algebraic closed $\F$ if an irreducible $X$ is $\MDS$ then it is $\LDMDS(\le\ell)$ for all $\ell>2$.			
\end{theorem}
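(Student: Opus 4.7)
The plan is to mirror the two-step strategy developed in Section~\ref{sec:duality} for monomial codes, now carried out for power series. The first step will prove a power series analog of Theorem~\ref{thm:tr-gm-mds}: if $F=(f_1,\ldots,f_k) \in (\F[[z_1,\ldots,z_d]])^k$ is $\MDS$, then $(V^{F,T}, I_n)$ is $\MDSb(\ell)$ for all $\ell \ge 1$, where $V^F$ is the $k \times n$ matrix whose $j$-th column is $F(\bz_j)$. The second step will apply Theorem~\ref{thm:gm-mdsb-dual} to conclude that any dual matrix of $V^F$ is generically $\MDS(\ell+1)$, which is precisely the $\LDMDS(\le \ell)$ property. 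Crucially, the proofs of Proposition~\ref{prop:mds-dual} and Theorem~\ref{thm:gm-mdsb-dual} use only Lemma~\ref{lem:mds5} and generic properties of matrices, so they transfer verbatim to the power series setting once the $\MDSb$ statement is established.

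For the first step I would reduce to the monomial case exactly as in the proof of Theorem~\ref{thm-mdsPower}. The $\MDS$ hypothesis on $F$ guarantees, after a row change of basis on $V^F$ (equivalently a column change of basis on $V^{F,T}$), that the $f_i$ have distinct leading monomials $z^{e_1},\ldots,z^{e_k}$ in some fixed monomial order, with $f_i = c_i z^{e_i} + (\text{higher-order terms})$ and $c_i \neq 0$. Such an invertible change of basis preserves the column span of $V^{F,T}$ and hence preserves the $\MDSb$ property. So it suffices to prove $(V^{F,T}, I_n)$ is $\MDSb(\ell)$ when $F$ has this triangular leading-monomial form.

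The main technical step — and the place I expect the only real obstacle — is to transfer the determinantal non-vanishing of Theorem~\ref{thm:tr-gm-mds} from monomials to power series. For any $(\ell,k)$-null configuration the $\MDSb$ condition is equivalent via Lemma~\ref{lem:mds5} to the non-vanishing of a block determinant whose entries are either entries of $I_n$ (independent of the $\bz_j$'s) or evaluations $f_i(\bz_j)$, which we expand as formal power series in the $z_{i,j}$'s. I would isolate the coefficient of the monomial obtained by replacing each $f_i(\bz_j)$ contribution by its leading term $c_i z_j^{e_i}$: this coefficient equals (up to a nonzero scalar) the analogous determinant in the monomial-code setting, which is nonzero by Theorem~\ref{thm:tr-gm-mds}. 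The obstacle is ruling out unexpected cancellation in the full expansion. This is handled by the block-diagonal structure of the matrix in (\ref{mat:RHS}) together with the total order on the monomials: for each fixed partition of the variables $\bz_1,\ldots,\bz_n$ among row blocks, the minimal monomial in each $\bz_j$ is uniquely achieved, so no higher-order term from any $f_i(\bz_j)$ can produce the same monomial. Consequently the targeted coefficient genuinely survives, the block determinant is a nonzero power series, and $(V^{F,T}, I_n)$ is $\MDSb(\ell)$.

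Combining the two steps, $(V^{F,T}, I_n)$ is $\MDSb(\ell+1)$ for every $\ell \ge 2$, and the power series analog of Theorem~\ref{thm:gm-mdsb-dual} then yields that every dual matrix of $V^F$ is generically $\MDS(\ell+1)$, which is the desired $\LDMDS(\le \ell)$ property. The irreducible variety statement then follows by applying Lemma~\ref{lem-powerS} to parameterize $X$ by a power series tuple, using that genericity on $X^n$ is equivalent to non-vanishing of the associated power series.
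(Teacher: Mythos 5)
Your overall architecture is essentially the paper's: reduce the power-series statement to the monomial results of Section~\ref{sec:duality} and invoke Lemma~\ref{lem-powerS} for the variety statement. Placing the reduction at the level of Theorem~\ref{thm:tr-gm-mds} and then re-running the purely linear-algebraic Theorem~\ref{thm:gm-mdsb-dual} over the fraction field of the power-series ring is a sensible reorganization (it avoids having to reduce determinants that involve a dual matrix $Q$, which the paper dispatches with ``follows similarly''). The problem is the core technical step, transferring non-vanishing from the monomial determinant to the power-series determinant: as written, your argument does not establish it.

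Concretely, ``the coefficient of the monomial obtained by replacing each $f_i(\bz_j)$ contribution by its leading term'' is not a well-defined single monomial: which $f_t(\bz_j)$ entries appear is permutation-dependent, so the ``all-leading-terms'' monomial varies with the permutation term, and the object you compare it to (the monomial-code determinant) is itself a polynomial in the $\bz_j$'s, not a scalar coefficient. What you actually need is that the lowest-order part of the whole determinant, in a grading under which each $f_t$'s leading monomial is its strictly smallest term, equals the monomial determinant. Your no-cancellation reasoning (``for each fixed partition of the variables among row blocks, the minimal monomial in each $\bz_j$ is uniquely achieved'') only excludes collisions inside a single permutation term; the real danger is cross-permutation collisions: if the smallest monomials of the monomial determinant cancel among themselves, then higher-order terms of the $f_t$'s coming from other permutations can contribute to exactly the monomial whose coefficient you extract, and nothing in your sketch rules this out. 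The paper's Claim~\ref{cl:homogPower} avoids this by a rescaling argument: substitute $\bz_j \to \beta \bz_j$, rescale the rows/columns carrying $f_t$ by $\beta^{-|\bj^*_t|}$, and note that setting $\beta = 0$ recovers the entire monomial determinant, so its non-vanishing (here supplied by Theorem~\ref{thm:tr-gm-mds}, after the multivariate-to-univariate substitution $z_t \mapsto \alpha^{N^{t-1}}$, a step you also skip since Theorem~\ref{thm:tr-gm-mds} is stated for univariate monomials) forces non-vanishing of the power-series determinant. You need this initial-form/homogenization argument, with a weight making each leading monomial strictly minimal, in place of single-coefficient extraction. A further minor slip: an arbitrary invertible row change of basis of $V^{F}$ does not ``preserve the $\MDSb$ property'' of $(V^{F,T}, I_n)$, since $\MDSb$ depends on the flag of partial spans of the ordered basis columns, not merely their span; this happens to be harmless here because the dual code, and hence the $\LDMDS$ conclusion, is unchanged by such a change of basis, but the justification should be stated that way.
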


\begin{proof}[Proof of Theorem~\ref{thm-mdsIrred} assuming Theorem~\ref{thm-mdsPower}]
Recall, $\F$ is algebraically closed. By Lemma~\ref{lem-powerS} given an irreducible variety $X\subseteq \F^k$ of dimension $d$ we have elements $f_1,\hdots,f_k$ in the power series ring $\F[[z_1,\hdots,z_d]]$ such that a polynomial $g$ over $\F^{kn}$ is generically non-vanishing over $X^n$ if and only if  $$g(f_1(\bz_1),\hdots,f_k(\bz_1),\hdots,f_1(\bz_n),\hdots,f_k(\bz_n))\ne 0.$$
\end{proof}

A similar argument holds for the $\LDMDS$ theorem.

\subsection{Reduction of Power Series to Univariate Monomials}

The goal will be to reduce Theorems~\ref{thm-mdsIrred} and \ref{thm-ldmdsIrred} to equivalent theorems about curves of the form $x\rightarrow (x^{d_1},\hdots,x^{d_k}),d_1<\hdots<d_k$.

For $\mathbf{i}\in \Z_{\ge 0}^d$ we let $\bz^{\mathbf{j}}=\prod_{k=1}^d z_k^{j_k}$. If we treat $F=(f_1,\hdots,f_k)\in \F[[z_1,\hdots,z_d]]^k$ as a column vector of power series it is clear that,

$$F(\bz)=\sum_{\mathbf{j}\in \Z_{\ge 0}^d} a_{\mathbf{j}}\bz^{\mathbf{j}},$$

where $a_\mathbf{j}$ is a column vector in $\F^k$. We also define a lexicographic order on elements in $\Z_{\ge 0}^r$ and similarly a lexicographic order on monomials $\bz^{\bi},i\in \Z^{r}_{\ge 0}$.

\change{\begin{lemma}\label{lem-indepColPower}
Given a $k$-tuple of power series $F=(f_1,\hdots,f_k)\in \F[[z_1,\hdots,z_d]]^k$ if $f_1,\hdots,f_k$ are linearly independent then for
$$F(\bz)=\sum_{\mathbf{j}\in \Z_{\ge 0}^n} a_{\mathbf{j}}\bz^{\mathbf{j}}$$
satisfies two properties,
\begin{enumerate}
\item There exist $\bj^*_1<\hdots<\bj^*_k$ such that $a_{\bj^*_1},\hdots,a_{\bj^*_k}$ are linearly independent. 
\item We also have, that $a_{\bj}$ is in the $\Span(a_{\bj^*_1},\hdots,a_{\bj^*_{\beta-1}})$ for all $\bj^*_{\beta-1}<\bj<\bj^*_\beta$.
\end{enumerate}
\end{lemma}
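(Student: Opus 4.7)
The plan is to translate the hypothesis into a statement about the coefficient vectors spanning $\F^k$, and then construct the sequence $\bj_1^* < \cdots < \bj_k^*$ by a greedy lex-order argument. First I would observe that $f_1, \hdots, f_k$ being linearly independent over $\F$ is equivalent to saying that the only $\lambda = (\lambda_1, \hdots, \lambda_k) \in \F^k$ satisfying $\lambda^T a_{\bj} = 0$ for every $\bj \in \Z_{\ge 0}^d$ is $\lambda = 0$. By duality in $\F^k$, this is equivalent to $\Span\{a_{\bj} : \bj \in \Z_{\ge 0}^d\} = \F^k$.

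Next I would build the sequence inductively. Take $\bj_1^*$ to be the lexicographically smallest $\bj$ with $a_\bj \neq 0$, which exists since the span is nonzero. Having chosen $\bj_1^* < \cdots < \bj_\beta^*$ so that the vectors $a_{\bj_1^*}, \hdots, a_{\bj_\beta^*}$ are linearly independent, let $V_\beta := \Span(a_{\bj_1^*}, \hdots, a_{\bj_\beta^*})$ and define $\bj_{\beta+1}^*$ to be the lex-smallest $\bj > \bj_\beta^*$ with $a_{\bj} \notin V_\beta$ (adopting the convention $V_0 = \{0\}$, which handles the base case uniformly). By construction the selected vectors are linearly independent, which gives property $(1)$. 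Property $(2)$ is then immediate from the minimality of each $\bj_\beta^*$: any $\bj$ with $\bj_{\beta-1}^* < \bj < \bj_\beta^*$ was not chosen at step $\beta$, so $a_{\bj} \in V_{\beta-1} = \Span(a_{\bj_1^*}, \hdots, a_{\bj_{\beta-1}^*})$.

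The only thing left to verify is that the process terminates after exactly $k$ steps. It terminates in at most $k$ steps because $\dim V_\beta = \beta$ and $V_\beta \subseteq \F^k$. To see that it does not halt earlier, suppose for contradiction that it halts at some step $\beta < k$, meaning no $\bj > \bj_\beta^*$ has $a_\bj \notin V_\beta$. Combining this with property $(2)$ (which shows every $\bj$ strictly between consecutive chosen indices, and every $\bj < \bj_1^*$, has $a_\bj \in V_\beta$) and the fact that each $a_{\bj_i^*} \in V_\beta$, we would conclude $a_\bj \in V_\beta$ for all $\bj$. This contradicts the spanning property $\Span\{a_\bj\} = \F^k$ since $\dim V_\beta = \beta < k$. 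Thus the procedure produces exactly $k$ indices, completing the proof. The argument is essentially a Gaussian-elimination-in-echelon-form applied to the (infinite) matrix whose columns are the $a_\bj$ indexed in lex order; no real obstacle arises beyond confirming termination.
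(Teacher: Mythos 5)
Your proof is correct and follows essentially the same greedy construction as the paper: pick the lexicographically smallest index whose coefficient vector $a_{\bj}$ lies outside the span of the previously chosen ones, which gives property (1) by construction and property (2) by minimality. The only minor difference is how termination at exactly $k$ steps is justified: you reformulate linear independence of $f_1,\hdots,f_k$ as $\Span\{a_{\bj}\}=\F^k$ via duality and contradict that directly, whereas the paper reaches the same contradiction by a change of basis making the chosen $a_{\bj^*_i}$ elementary basis vectors and concluding that the remaining $f_i$ would vanish.
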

\begin{proof}
We have

$$F(\bz)=\sum_{\mathbf{j}\in \Z_{\ge 0}^d} a_{\mathbf{j}}\bz^{\mathbf{j}},$$

with $a_{\bj}\in \F^k, \bj\in \Z_{\ge 0}^d$.

Let $\bj^*_1$ be the smallest index such that $a_{\bj^*_1}$ is non-zero. Now inductively we define $\bj^*_i$ as the smallest index such that $a_{\bj^*_i}$ does not lie in $\Span(a_{\bj^*_1},\hdots,a_{\bj^*_{i-1}})$. 

We claim that it is possible to define $\bj^*_1,\hdots,\bj^*_k$ as $f_1,\hdots,f_k$ are linearly independent. If the procedure fails then for some $i<k$, $a_{\bj}\in\Span(a_{\bj^*_1},\hdots,a_{\bj^*_{i}})$ for all $\bj\not\in \{\bj^*_1,\hdots,\bj^*_{i-1}\}$ where $a_{\bj^*_1},\hdots,a_{\bj^*_{i}}$ are linearly independent. 
If $a_{\bj^*_1},\hdots,a_{\bj^*_i}$ are column vectors then wlog let us assume the first $i$ rows are linearly independent. After an invertible linear transformation over $f_1,\hdots,f_k$ we can assume $a_{\bj^*_1},\hdots,a_{\bj^*_i}$ are elementary basis vectors but that would now imply $f_{i+1}=\hdots=f_k=0$ which gives us a contradiction. 
\end{proof}}

\change{We now give a characterization of $\MDS$ power series.}

\begin{lemma}\label{lem-powerIndep}
A $k$-tuple $F=(f_1,\hdots,f_k)\in \F[[z_1,\hdots,z_d]]^k$ is $k$-$\MDS$ if and only in $f_1,\hdots,f_k$ are linearly independent as power series over $\F$.
\end{lemma}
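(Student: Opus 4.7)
The plan is to prove the two directions separately. The ``only if'' direction should be immediate: if there is a nontrivial linear relation $\sum_i c_i f_i = 0$ over $\F$, then for each $\bz_j$ we get $\sum_i c_i f_i(\bz_j) = 0$, so the rows of the evaluation matrix $[F(\bz_1)\mid \cdots \mid F(\bz_k)]$ are linearly dependent with constant coefficients, forcing its determinant to vanish. Hence $F$ is not $\MDS$.

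For the ``if'' direction, my approach is to exhibit an explicit nonzero coefficient in the power-series expansion of the determinant. Writing $F(\bz) = \sum_{\bj\in\Z^d_{\ge 0}} a_{\bj}\, \bz^{\bj}$ with column vectors $a_{\bj}\in\F^k$, multilinearity of the determinant in the columns gives
\[
\det[F(\bz_1)\mid \cdots \mid F(\bz_k)] = \sum_{\bj_1,\ldots,\bj_k} \det[a_{\bj_1}\mid \cdots \mid a_{\bj_k}]\, \bz_1^{\bj_1}\cdots \bz_k^{\bj_k}.
\]
Since $\bz_1,\ldots,\bz_k$ are disjoint $d$-tuples of formal variables, the monomials $\bz_1^{\bj_1}\cdots \bz_k^{\bj_k}$ indexed by distinct tuples $(\bj_1,\ldots,\bj_k)$ are pairwise distinct, so no cancellation occurs between different terms of the sum. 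Therefore it suffices to produce a single tuple $(\bj_1,\ldots,\bj_k)$ for which $\det[a_{\bj_1}\mid \cdots \mid a_{\bj_k}]\ne 0$.

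To produce such a tuple, I will invoke Lemma~\ref{lem-indepColPower}, which, under the standing assumption that $f_1,\ldots,f_k$ are linearly independent over $\F$, supplies indices $\bj^*_1 < \cdots < \bj^*_k$ such that $a_{\bj^*_1},\ldots,a_{\bj^*_k}$ are linearly independent in $\F^k$. Choosing $(\bj_1,\ldots,\bj_k) = (\bj^*_1,\ldots,\bj^*_k)$ then gives a nonzero coefficient, completing the proof. I do not anticipate any real obstacle: the separation-of-variables observation makes multilinearity yield the coefficients of the expansion directly, and the only nontrivial input is the structural Lemma~\ref{lem-indepColPower}, which has already been established.
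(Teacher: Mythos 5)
Your proposal is correct and follows essentially the same route as the paper: both directions are handled identically, and the ``if'' direction uses the multilinear expansion of $\det[F(\bz_1)\mid\cdots\mid F(\bz_k)]$ together with Lemma~\ref{lem-indepColPower} to exhibit the nonzero coefficient $\det[a_{\bj^*_1}\mid\cdots\mid a_{\bj^*_k}]$ of the monomial $\prod_w \bz_w^{\bj^*_w}$, exactly as in the paper (which additionally remarks this is the lexicographically smallest nonzero term, but notes that observation is not needed).
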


\begin{proof}
If $f_1,\hdots,f_k$ are linearly dependent then it is clear that $f_1,\hdots,f_k$ are not $k-\MDS$.

Let us now consider the case where $f_1,\hdots,f_k$ are linearly independent. 

By Lemma~\ref{lem-indepColPower} there exist $\bj^*_1<\hdots<\bj^*_k$ such that $a_{\bj^*_1},\hdots,a_{\bj^*_k}$ are linearly independent. Also, we have $a_{\bj}\in \Span(a_{\bj^*_1},\hdots,a_{\bj^*_{\beta-1}})$ when $\bj<\bj^*_\beta$.

To prove $F=(f_1,\hdots,f_k)$ is $\MDS$ we want to show that,
$$\Det(F(\bz_1),\hdots,F(\bz_k))=\Det
  \begin{pmatrix}
    f_1(\bz_1) & f_1(\bz_2) & \cdots & f_1(\bz_n)\\
    f_2(\bz_1) & f_2(\bz_2) & \cdots & f_2(\bz_n)\\
    f_3(\bz_1) & f_3(\bz_2) & \cdots & f_3(\bz_n)\\
    \vdots & \vdots & \ddots & \vdots\\
    f_k(\bz_1) & f_k(\bz_2) & \cdots & f_k(\bz_n)
  \end{pmatrix}\ne 0,$$
where $\bz_1,\hdots,\bz_i=(z_{i,1},\hdots,z_{i,d}),\hdots,\bz_k$ are distinct $d$ tuples of variables.

By the multilinearity of the determinant we have,

\begin{equation}\label{eq-detE1}
\Det(F(\bz_1),\hdots,F(\bz_k))= \sum_{\bj_1,\hdots,\bj_k\in \Z_{\ge 0}^d}  \Det(a_{\bj_1},\hdots,a_{\bj_k})\prod\limits_{w=1}^k \bz_w^{\bi_w}.
\end{equation}

Consider the lexicographic order over $(\bj_1,\hdots,\bj_k)\in (\Z_{\ge 0}^r)^k$. We note that the coefficient of $\prod\limits_{w=1}^k \bz_w^{\bj^*_w}$ is $\Det(a_{\bj^*_1},\hdots,a_{\bj^*_k})$ which is non-zero. We also have that this is the smallest non-zero term as well but that is not important for this proof.
\end{proof}

\begin{proof}[Proof of Theorems~\ref{thm-mdsPower} and \ref{thm-LDmdsPower} assuming Theorems~\ref{thm:gm-mds-primal} and \ref{thm:gm-mds-dual}]
We will now proceed with the reduction in two steps. First we show that it suffices to prove Theorems~\ref{thm-mdsPower} and \ref{thm-LDmdsPower} for $F=(f_1,\hdots,f_k)$ such that the $f_i$ are monomials. The second step shows that the case of multivariate monomial can be reduced to univariate ones.

\begin{claim}\label{cl:homogPower}
Theorems~\ref{thm-mdsPower} and \ref{thm-LDmdsPower} are respectively implied by Theorems~\ref{thm-mdsPower} and \ref{thm-LDmdsPower} for $F=(f_1,\hdots,f_k)$ where $f_i$ are monomials.
\end{claim}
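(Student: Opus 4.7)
The plan is to reduce the general linearly-independent power series tuple $F$ to the multivariate monomial tuple $F^{\mathrm{mono}} = (\bz^{\bj^*_1}, \ldots, \bz^{\bj^*_k})$ via a linear change of basis followed by a leading-monomial argument on the determinants appearing in the definitions of $\MDS(\ell)$ and $\LDMDS(\le \ell)$. Both properties are captured by non-vanishing of appropriate polynomials/power series in the variables $\bz_j$, so a single strategy handles both implications.

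First I would invoke Lemma~\ref{lem-indepColPower} to extract $\bj^*_1 < \cdots < \bj^*_k$ such that $a_{\bj^*_1}, \ldots, a_{\bj^*_k}$ are linearly independent and $a_{\bj} \in \Span(a_{\bj^*_1}, \ldots, a_{\bj^*_{\beta-1}})$ whenever $\bj^*_{\beta-1} < \bj < \bj^*_\beta$. Letting $N$ be the invertible $k \times k$ matrix whose $\alpha$-th column is $a_{\bj^*_\alpha}$, I would replace $F$ by $N^{-1} F$; this preserves the $\MDS(\ell)$ and $\LDMDS(\le \ell)$ properties because the relevant block-determinants only scale by a nonzero power of $\det(N)$ under row-block multiplication by $N$ together with first-column-block scaling by $N^{-1}$. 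After this substitution, $f_\beta(\bz) = \bz^{\bj^*_\beta} + \sum_{\bj > \bj^*_\beta} c_{\bj, \beta} \bz^{\bj}$, so the lex-leading monomial of $f_\beta$ is exactly $\bz^{\bj^*_\beta}$.

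Next I would expand the $\MDS(\ell)$ determinant $P_F(\bz)$ --- and, for $\LDMDS(\le \ell)$, the equivalent rank-criterion determinant from Proposition~\ref{prop:mds-dual}, which depends polynomially on $V^F$ and hence on the $\bz_j$'s --- by multilinearity in the columns of each $V^F|_{A_i}$, writing each column $F(\bz_j)$ as $\sum_\bj a_\bj \bz_j^\bj$. This expresses $P_F$ as a sum over assignments $\mathbf{T} = (\bj_{i,j})$ (one multi-index per column) of scalar determinants $\det(M^{\mathbf{T}})$ (involving $a_{\bj_{i,j}}$'s and columns of $I_k$) times monomials $\prod_j \bz_j^{\sum_{i:\,j \in A_i} \bj_{i,j}}$. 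The counterpart $P_{F^{\mathrm{mono}}}$ arises as the restriction of this sum to assignments with each $\bj_{i,j} \in \{\bj^*_1, \ldots, \bj^*_k\}$.

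The heart of the proof is a leading-monomial claim: under a total order on monomials in $\bz_1, \ldots, \bz_n$ extending the lex order on individual multi-indices, the minimum nonzero monomial of $P_F$ coincides with that of $P_{F^{\mathrm{mono}}}$. The key observation is that any ``non-$\bj^*$'' choice $\bj_{i,j}$ satisfies $\bj^*_{\beta-1} < \bj_{i,j} < \bj^*_\beta$ for some $\beta$, with $a_{\bj_{i,j}} \in \Span(a_{\bj^*_1}, \ldots, a_{\bj^*_{\beta-1}})$; expanding such a column as a combination of the earlier basis vectors forces strictly lex-larger powers $\bz_j^{\bj_{i,j}}$ in place of the $\bz_j^{\bj^*_\alpha}$ (for $\alpha < \beta$) that span it. Consequently, the minimum monomial in $P_F$ arises from $\bj^*$-only assignments and matches that of $P_{F^{\mathrm{mono}}}$ with the same coefficient. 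Since the target theorem is assumed for $F^{\mathrm{mono}}$ we have $P_{F^{\mathrm{mono}}} \ne 0$, so $P_F \ne 0$ as well, and the desired property transfers to $F$. The main obstacle will be formalizing this leading-monomial argument rigorously, in particular choosing the right extension of lex to monomials in multiple $\bz_j$'s so that contributions from ``non-$\bj^*$'' assignments cannot cancel against the leading term of $P_{F^{\mathrm{mono}}}$ in $P_F$.
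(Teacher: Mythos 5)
Your opening steps coincide with the paper's: invoke Lemma~\ref{lem-indepColPower}, change basis so that $a_{\bj^*_1},\hdots,a_{\bj^*_k}$ become the elementary basis vectors, and conclude that each $f_\beta$ has lex-minimal term $\bz^{\bj^*_\beta}$. The gap is in your core step, the claim that the minimal monomial of $P_F$ coincides with that of $P_{F^{\mathrm{mono}}}$ with the same coefficient; this is not a formalization detail that a cleverer ``extension of lex'' will fix. Your observation only shows that a term coming from a mixed assignment is strictly larger than the term of the pure ($\bj^*$-only) assignment obtained by replacing the non-$\bj^*$ index by one of the spanning indices $\bj^*_\alpha$ with $\alpha<\beta$. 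That pure term may itself be strictly smaller than the minimal monomial $\mu_0$ of $P_{F^{\mathrm{mono}}}$ (such terms exist exactly when pure assignments cancel among themselves), so a mixed assignment can land precisely on $\mu_0$ and alter or cancel its coefficient. This is possible because a column $j$ may belong to several of the sets $A_i$: the exponent of $\bz_j$ in a term is the \emph{sum} of the multi-indices chosen in those blocks, and the same sum can be produced by pure and by mixed choices (already for $d=1$: with $\bj^*$-exponents $0,2,3$ and a non-$\bj^*$ exponent $1$, one has $1+3=2+2$). In effect you are using ``the trailing term of a determinant is the determinant of the trailing terms,'' which is false in general; it is legitimate in the single-block Lemma~\ref{lem-powerIndep} only because there each column carries its own disjoint tuple of variables, so the candidate minimal monomial determines the assignment uniquely and no cancellation can touch it. (A smaller omission: multi-indices $\bj>\bj^*_k$ satisfy no spanning relation, so they fall outside your dichotomy, though that is easy to patch.) The same unproven step is what your $\LDMDS$ argument via the rank criterion of Proposition~\ref{prop:mds-dual} rests on.

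The paper avoids this issue entirely by a one-parameter degeneration rather than a coefficient extraction: after the same change of basis, multiply the $t$th column of the identity column block by $\beta^{|\bj^*_t|}$, substitute $\bz_i\mapsto\beta\bz_i$, and divide the $t$th row of every row block by $\beta^{|\bj^*_t|}$. These operations are row/column scalings together with a variable substitution, so they do not affect whether the determinant vanishes, and specializing $\beta=0$ in the resulting matrix \eqref{eq:clmhomo4} yields exactly the monomial determinant \eqref{eq:clmhomo1}, whose nonvanishing is the assumed monomial case; hence the determinant for $F$ is nonzero, with no analysis of which assignments contribute to which monomial. If you want to keep a term-order argument, you would need to extract the coefficient of a monomial chosen so that only a controlled set of assignments can produce it (for instance an extremal power of a single variable, as in Case~3 of the proof of Theorem~\ref{thm:gm-mds-primal}), or else switch to a degeneration of this kind; the global minimal monomial of $P_{F^{\mathrm{mono}}}$ does not have that property.
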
 
\begin{proof}
Given a $k$-tuple of power series $F=(f_1,\hdots,f_k)\in \F[[z_1,\hdots,z_d]]^k$ which is $\MDS$, we apply Lemma~\ref{lem-indepColPower} to find $\bj^*_1<\hdots<\bj^*_k,\bj^*_i\in \Z_{\ge 0}^d$ which satisfy the two properties in the statement of the lemma. For $\bi=(i_1,\hdots,i_t)$ in $\Z_{\ge 0}^d$ we let $|\bi|=\sum_{t=1}^d i_t$.  

Without loss of generality we can perform a change of basis over $\F^k$ and assume $a_{\bj^*_1},\hdots,a_{\bj^*_k}$ are the elementary basis vectors $e_1,\hdots,e_k$. The second property in Lemma~\ref{lem-indepColPower} then implies that the leading coefficient in $f_i(\bz)$ is $\bz^{\bj^*_i}$ for $i\in [k]$.

We see that $F'=(\bz^{\bj^*_1},\hdots,\bz^{\bj^*_k})$ is $\MDS$ and by our assumption is also $\MDS(\ell)$ and $\LDMDS(\le \ell)$ for all $\ell >0$. 

Given any $A_1,\hdots,A_\ell\subseteq [(\ell-1)k]$ with the dimension $k$ null intersection property. 
As $F'$ is $\MDS(\ell)$ we know that  

\begin{align}\label{eq:clmhomo1}
\det \begin{pmatrix}
    I_k & V^{F'}|_{A_1} & & & \\
    I_k & & V^{F'}|_{A_2} & &\\
    \vdots & & & \ddots &\\
    I_k & & & &  V^{F'}|_{A_{\ell}}
  \end{pmatrix}
\end{align}
 is non-zero.
 
We want to show that the determinant of
\begin{equation}\label{eq:clmhomo2}
\begin{pmatrix}
    I_k & V^{F}|_{A_1} & & & \\
    I_k & & V^{F}|_{A_2} & &\\
    \vdots & & & \ddots &\\
    I_k & & & &  V^{F}|_{A_{\ell}}
  \end{pmatrix}
 \end{equation}
is non-zero. The matrix in \eqref{eq:clmhomo2} has $\ell$ blocks of rows with $k$ rows in each block. Intuitively speaking we will show that the `lowest degree' part of the determinant of \eqref{eq:clmhomo2} is \eqref{eq:clmhomo1} which will complete the proof.

In \eqref{eq:clmhomo2} we multiply the $t$th column with $\beta^{|\bj^*_t|}$ for $t\in [k]$ to get the matrix.

\begin{equation}\label{eq:clmhomo3}
\begin{pmatrix}
    \text{diag}(\beta^{|\bj^*_1|},\hdots,\beta^{|\bj^*_k|}) & V^{F}|_{A_1} & & & \\
    \text{diag}(\beta^{|\bj^*_1|},\hdots,\beta^{|\bj^*_k|}) & & V^{F}|_{A_2} & &\\
    \vdots & & & \ddots &\\
    \text{diag}(\beta^{|\bj^*_1|},\hdots,\beta^{|\bj^*_k|}) & & & &  V^{F}|_{A_{\ell}}
  \end{pmatrix}
\end{equation}

In \eqref{eq:clmhomo3} we replace the variables $\bz_i=(z_{i,1},\hdots,z_{i,d}),i\in [(\ell-1)k]$ by $(\beta z_{i,1},\hdots,\beta z_{i,d})$ where $\beta$ is a extra variable. Next we divide the $t$th row in each block by $\beta^{|\bj^*_t|}$. The new matrix we get is

\begin{equation}\label{eq:clmhomo4}
\begin{pmatrix}
    I_k & V^{F''}|_{A_1} & & & \\
    I_k & & V^{F''}|_{A_2} & &\\
    \vdots & & & \ddots &\\
    I_k & & & &  V^{F''}|_{A_{\ell}}
  \end{pmatrix}
\end{equation}

where $F''(\bz)=(f_1(\beta \bz))\beta^{-|\bj^*_1|},\hdots,f_k(\beta \bz))\beta^{-|\bj^*_k|})$. We note $f_i(\beta \bz))\beta^{-|\bj^*_i|}$ is a polynomial in $\beta$ and $\bz$ for all $i$. As we only performed row and column scaling operations. The determinant of \eqref{eq:clmhomo4} is non-zero if and only if the determinant of \eqref{eq:clmhomo2} is non-zero. If we set $\beta=0$ in \eqref{eq:clmhomo4} we get the matrix in \eqref{eq:clmhomo1} which completes the proof.

The reduction for the \change{LD-MDS} theorems follows similarly.
\end{proof}

We note that the above argument would not work if the code was generated with different power series in each column. 

\begin{claim}
Theorems~\ref{thm-mdsPower} and \ref{thm-LDmdsPower} for $F=(f_1,\hdots,f_k)$ where $f_i$ are multivariate monomials are respectively implied by Theorems~\ref{thm-mdsPower} and \ref{thm-LDmdsPower} for $F=(f_1,\hdots,f_k)$ where $f_i$ are univariate monomials.
\end{claim}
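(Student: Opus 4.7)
The idea is to reduce multivariate monomial families to univariate ones via a ring homomorphism that specializes each $d$-tuple $\bz_i$ to appropriate powers of a single variable $x_i$, exploiting the fact that finitely many distinct multi-indices can be separated by the integers obtained from a base-$M$ expansion for sufficiently large $M$.

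\emph{Construction of the substitution.} Let $F=(f_1,\hdots,f_k)$ with $f_s(\bz)=\bz^{\bj_s^*}$ and the $\bj_s^* \in \Z_{\ge 0}^d$ distinct. Pick $M$ strictly greater than $\max_{s,t} j_{s,t}^*$, and define the integers $e_s := \sum_{t=1}^d j_{s,t}^*\, M^{d-t}$. By uniqueness of base-$M$ representations, the $e_s$'s are distinct. Permuting the $f_s$'s (which only permutes the rows of the generator matrix and preserves all MDS/LDMDS properties), we may assume $e_1<e_2<\cdots<e_k$. Let $x_1,\hdots,x_n$ be fresh variables, and define the ring homomorphism $\varphi$ by $z_{i,t}\mapsto x_i^{M^{d-t}}$. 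Then $\varphi(f_s(\bz_i))=x_i^{e_s}$, so $\varphi$ sends the generator matrix $V^F$ of the multivariate monomial code to the generator matrix of the univariate monomial code with exponents $e_1<\cdots<e_k$ evaluated at the points $x_1,\hdots,x_n$.

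\emph{MDS reduction.} By Lemma~\ref{lem:mds5}, the $\MDS(\ell)$ property for $V^F$ is equivalent to the non-vanishing of explicit determinants whose entries are the entries of $V^F$. Applying $\varphi$ sends each such multivariate determinant to its univariate counterpart. Since $\varphi$ is a ring homomorphism, a non-zero image forces a non-zero source. Invoking Theorem~\ref{thm:gm-mds-primal} (the univariate case of Theorem~\ref{thm-mdsPower}), the univariate determinants are non-zero, so the multivariate determinants are non-zero as well, giving the desired $\MDS(\ell)$ property.

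\emph{LDMDS reduction.} Here we cannot reason directly with a dual matrix, because the dual is not well-behaved under $\varphi$. Instead we use Proposition~\ref{prop:mds-dual}, which characterizes the LDMDS property purely via the rank of the matrix~(\ref{mat:rank}), built from the primal matrix $V$ and identity blocks. The homomorphism $\varphi$ specializes the multivariate instance of~(\ref{mat:rank}) to the univariate one. Since rank can only decrease under $\varphi$ (a full-rank minor downstairs lifts to a non-vanishing minor upstairs), Theorem~\ref{thm:gm-mds-dual} applied to the univariate exponents $e_1<\cdots<e_k$ guarantees full rank $n+(\ell-1)k$ downstairs, hence full rank upstairs. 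Proposition~\ref{prop:mds-dual} then yields the multivariate $\LDMDS(\le \ell-1)$ conclusion.

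\emph{Main obstacle.} The only real subtlety is that LDMDS a priori depends on choosing a dual matrix, and dual matrices do not behave functorially under $\varphi$. Proposition~\ref{prop:mds-dual} circumvents this by rephrasing LDMDS as a rank condition on a matrix built only from the primal generator matrix, which specializes cleanly under $\varphi$. With this reformulation in hand, both reductions follow by the general principle that non-vanishing of a polynomial (resp.\ full rank of a matrix over a polynomial ring) is preserved by going from a specialization back to the original, which is exactly what we need.
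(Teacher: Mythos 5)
Your proof is correct and follows essentially the same route as the paper: a Kronecker-style substitution $z_{i,t}\mapsto x_i^{M^{d-t}}$ with $M$ exceeding every individual degree turns the multivariate monomial family into a univariate one with distinct exponents, and since this substitution is a ring homomorphism, non-vanishing of the specialized determinants (guaranteed by the univariate theorems) forces non-vanishing of the original ones. Your extra care in the LD-MDS case---rephrasing the condition via the rank characterization of Proposition~\ref{prop:mds-dual} so that only the primal matrix needs to specialize---addresses a point the paper's proof leaves implicit, but the underlying argument is the same.
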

\begin{proof}
Let $F=(\bz^{\bi_1},\hdots,\bz^{\bi_k}),\bz=(z_1,\hdots,z_k)$ be $\MDS$. Without loss of generality we can assume $\bi_1<\hdots<\bi_k$.   

The proof uses a standard reduction which encodes the degrees of multiple variables in the $N$-ary expansion of a number where $N$ is large enough in comparison to the individual degrees of monomials appearing in $F$. 

Take $N>|\bi_k|$.  Let $|\bi|_N=\sum_{t=1}^d i_tN^{t-1}$ for $\bi \in \Z^d$. We now substitute $z_t$ with $\alpha^{N^{t-1}}$ for $t\in [d]$ in $F$, to get $F'=(\alpha^{|\bi_1|_N},\hdots,\alpha^{|\bi_k|_N})$. By our choice of $N$ we have $|\bi_1|_N<\hdots<|\bi_k|_N$ so $F'$ is $\MDS$. By our assumption it is also $\MDS(\ell)$ and $\LDMDS(\le \ell)$.

We are done as the determinants which need to be non-vanishing for $F$ to be $\MDS(\ell)$ and $\LDMDS(\ell)$ on substituting $\bz_i=(z_{i,1},\hdots,z_{i,d})$ with $(\alpha_i^{N^0},\hdots,\alpha_i^{N^{d-1}})$ gives us the corresponding determinants for $F'$ which are non-zero by our assumption.
\end{proof}

\end{proof}

\section{\change{Conclusion}}\label{sec:openquestions}

\change{In this paper, we showed that many generic families of codes defined by polynomials or irreducible varieties are in fact higher order MDS codes. These results are established by generalizing the inductive frameworks used for showing that Reed--Solomon codes are higher order MDS codes \cite{lovett2018gmmds,yildiz2019gmmds,bgm2022} using a new concept we call `higher order MDS with a basis'. As previously mentioned, these results are used by \cite{bdg2023b} to prove that randomly punctured Algebraic-Geometric codes achieve list decoding capacity over constant-sized fields.}

We conclude the paper with some open questions.

\begin{itemize}

\item Can we generalize Theorems~\ref{thm:main-1} and \ref{thm:main-1-dual} to the setting where different columns are evaluations of different polynomial maps? More generally, can we generalize~\ref{thm:main-var} to the setting where different columns are generic points from different irreducible varieties?

\item Can we generalize our GM-MDS theorems to the setting where the columns are evaluations of $f$ at points which share some of the variables across columns. For example, linearized Reed-Solomon codes are such an example of MDS codes. See Example~\ref{ex:known-gm-mds}.

\item One very broad generalization of the GM-MDS theorem which immediately implies all known GM-MDS type theorems (including ours) is as follows. 
\begin{conjecture}[Ultimate GM-MDS conjecture]
  \label{conj:ultimategmmds}
  Let $C_0$ be any $(N,k)$-code which is MDS. Let $C$ be an $(n,k)$-code obtained by a random puncturing of $C_0$ (i.e., its generator is formed by randomly selecting $n$ columns from the generator of $C_0$). Then as with probability $1-o_N(1)$, $C$ is $\MDS(\ell)$ for all $\ell$. 
\end{conjecture}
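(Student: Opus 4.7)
The plan is to translate $\MDS(\ell)$ into a finite system of polynomial nonvanishing conditions on the columns of the generator matrix $V$ of $C$, and then show by a union bound over configurations that each such condition holds with high probability over the random choice of puncturing. By Corollary~\ref{cor:mds5} and Lemma~\ref{lem:mds5}, $C$ is $\MDS(\ell)$ if and only if for every choice of $A_1, \hdots, A_\ell \subseteq [n]$ satisfying the dimension $k$ null intersection property with $|A_i| \le k$ and $\sum_i |A_i| = (\ell-1)k$, the determinant of the block matrix in (\ref{eq:mat-inter-dim}) built from $V|_{A_1}, \hdots, V|_{A_\ell}$ is nonzero. Since there are at most $\poly(n)$ such configurations for each $\ell$ (and one may restrict to $\ell \le n/k + 1$, since otherwise the null intersection property forces the conclusion), it suffices to bound each individual bad event by $o_N(1)/\poly(n)$.

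For a fixed configuration, the event ``the determinant vanishes'' pulls back, under the map $S \mapsto G_0|_S$, to a subset of $\binom{[N]}{n}$, and the task is to show this subset has size $o(\binom{N}{n})$. A natural route is an exchange argument: starting from a bad $S$, swap out a uniformly random column and swap in a uniformly random column of $[N]\setminus S$, and show that the bad event is destroyed with probability $1 - o_N(1)$. Here the MDS property of $C_0$ should enter by ensuring that, restricted to any $k$-subset of columns, the relevant polynomial identities are nondegenerate. A complementary strategy is inductive, leveraging the higher order MDS codes with a basis framework of Section~\ref{sec:mds-basis} to reduce a bad configuration of order $\ell$ on $C$ to a bad configuration of similar type on a larger random subset of $C_0$, until one reaches either a configuration of small order (tractable from MDS-ness) or an order-$\ell$ configuration on nearly all of $C_0$ (which one argues has probability $o_N(1)$ of occurring via a direct counting). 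The approach would closely mirror the spirit of \cite{guo2023randomly,alrabiah2023randomly}, which handle the analogous question for random puncturings of Reed--Solomon codes.

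The main obstacle is that the MDS property of $C_0$ only guarantees that $k \times k$ minors of its generator are nonzero, whereas higher order MDS requires nonvanishing of a much richer family of block determinants, including ones in which many columns repeat across the blocks. Unlike in Theorem~\ref{thm:main-var}, where we leveraged that generic points of an irreducible variety automatically satisfy all such determinantal conditions on algebraic grounds, no structural hypothesis on $C_0$ beyond MDS is available here. One must show that randomness of the puncturing compensates for the absence of algebraic structure: that for \emph{any} MDS $C_0$, the ``defect'' in higher-order structure is washed out by the averaging over $S$. Establishing a quantitative anticoncentration statement for the block determinants in (\ref{eq:mat-inter-dim}) as the puncturing varies — and doing so uniformly over all MDS codes $C_0$, with a rate depending only on $N$ — is, in our view, the principal obstacle to a proof of Conjecture~\ref{conj:ultimategmmds}.
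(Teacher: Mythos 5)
There is a genuine gap here, and it is worth being explicit about its nature: the statement you are addressing is stated in the paper as an \emph{open conjecture} (it appears in the concluding open-questions section, with no proof offered), and your write-up is not a proof of it either --- it is a proof \emph{plan} whose decisive step you yourself flag as unresolved. The reduction in your first paragraph (via Corollary~\ref{cor:mds5} and Lemma~\ref{lem:mds5}) to finitely many block-determinant nonvanishing conditions is fine, but everything after that is aspirational. The ``exchange argument'' and the ``quantitative anticoncentration statement for the block determinants, uniformly over all MDS codes $C_0$'' are precisely where the content of the conjecture lives: asserting that a random $n$-subset of columns avoids each bad determinantal event with probability $1-o_N(1)$, for an arbitrary MDS $C_0$ with no further structure, is essentially a restatement of the conjecture itself. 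The analogy to \cite{guo2023randomly,alrabiah2023randomly} does not carry the load you want it to, because those arguments exploit the specific algebraic structure of Reed--Solomon codes (and the follow-up \cite{bdg2023b} exploits AG-code structure via Theorem~\ref{thm:main-var}); MDS-ness alone only controls $k\times k$ minors, as you note, and no mechanism in your sketch converts that into control of the much larger family of block determinants with repeated columns.

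Two further technical points would need repair even if the core step were supplied. First, the number of configurations $(A_1,\hdots,A_\ell)$ for a given $\ell$ is of order $n^{\Theta(k\ell)}$, not $\poly(n)$ in general (unless $k$ and $\ell$ are treated as constants), so the union bound requires per-event failure probabilities that beat this count; since $n$ may be comparable to $N$, a bare $o_N(1)$ per event does not suffice. Second, the quantifier structure matters: the conjectured success probability $1-o_N(1)$ must be uniform over all MDS codes $C_0$ of length $N$, so any exchange or anticoncentration lemma has to come with a rate depending only on $N$ (and the code parameters), which is exactly the uniformity you identify as the obstacle. In short, your proposal correctly maps the terrain and identifies the right bottleneck, but it does not close it, and the paper offers no proof to compare against because none is known.
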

If Conjecture~\ref{conj:ultimategmmds} is true, it says that GM-MDS is not really a special property of polynomial codes, but rather a generic property of MDS codes.

\item The study of higher order MDS codes was motivated by the study of MR Tensor Codes. Two MDS codes $C_1 \subseteq \F^m$ and $C_2 \subseteq \F^n$ form an MR tensor code if for every subset $E \subseteq [m] \times [n]$, $(C_1 \otimes C_2)|_{E}$ achieves the dimension it should if $C_1$ and $C_2$ are generically chosen. The appropriate generalization of Theorem~\ref{thm:main-1} in this setting is to show that the tensor of two generic polynomial codes is MR, although this question is open just in the case that $C_1$ and $C_2$ are generic Reed-Solomon codes.

\end{itemize}

\bibliographystyle{alpha}
\bibliography{references}

\appendix

\section{Omitted Proofs}\label{app}

\subsection{Proof of Lemma~\ref{lem:pad-mdsb}}

We shall need the following facts implicit in \cite{bgm2021mds}.

 \begin{proposition}[\cite{bgm2021mds}]\label{lem:mds7}
  Let $V$ be a $k \times n$ matrix. Let $A_1, \hdots, A_\ell \subseteq [n]$ be subsets such that $\dim(V_{A_i}) = |A_i|$. Let $H$ be a $(\ell,\ell-1)$ parity-check code. Then, the following are equivalent
  \begin{itemize}
    \item[(a)] $\dim(V_{A_1} \cap \cdots \cap V_{A_\ell}) = d$.
    \item[(b)] $\dim(\sum_{i=1}^{\ell} H_i \otimes V_{A_i}) = -d + \sum_{i=1}^{\ell} |A_i|$.
  \end{itemize}
\end{proposition}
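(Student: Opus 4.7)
\textbf{Proof plan for Proposition~\ref{lem:mds7}.} The plan is to prove both (a) $\Leftrightarrow$ (b) by constructing a single linear map whose image realizes the sum in (b) and whose kernel realizes the intersection in (a), and then applying rank--nullity.

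First, I would fix the following setup. Think of $H$ as an $(\ell-1) \times \ell$ generator matrix of the $(\ell,\ell-1)$ parity-check code, with columns $H_1, \hdots, H_\ell \in \F^{\ell-1}$. The key structural fact we need is that the only linear dependence among $H_1, \hdots, H_\ell$ is (a scalar multiple of) $H_1 + H_2 + \cdots + H_\ell = 0$. Equivalently, viewing $H$ as the linear map $\F^\ell \to \F^{\ell-1}$ given by $(x_1, \hdots, x_\ell) \mapsto \sum_i x_i H_i$, its kernel is one-dimensional and is spanned by the all-ones vector (since the dual of the parity-check code is the repetition code).

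Second, I would define the linear map
\[
\phi : \bigoplus_{i=1}^\ell V_{A_i} \longrightarrow \F^{(\ell-1)k}, \qquad (v_1, \hdots, v_\ell) \longmapsto \sum_{i=1}^{\ell} H_i \otimes v_i,
\]
where we identify $\F^{(\ell-1)k}$ with $\F^{\ell-1} \otimes \F^k$. By definition $\im(\phi) = \sum_{i=1}^{\ell} H_i \otimes V_{A_i}$, so $\dim\im(\phi)$ is the quantity appearing in (b).

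Third, I would compute $\ker(\phi)$. An element $\sum_i H_i \otimes v_i \in \F^{\ell-1} \otimes \F^k$ is zero if and only if, writing it as an $(\ell-1) \times k$ matrix, every column vanishes. The $j$-th column is $\sum_i (v_i)_j H_i \in \F^{\ell-1}$, which vanishes if and only if the coefficient vector $((v_1)_j, \hdots, (v_\ell)_j)$ lies in the kernel of $H$, i.e., is a scalar multiple of the all-ones vector. Varying $j$, this forces $v_1 = v_2 = \cdots = v_\ell$, and this common vector must lie in every $V_{A_i}$. Hence
\[
\ker(\phi) = \bigl\{(v,v,\hdots,v) : v \in V_{A_1} \cap \cdots \cap V_{A_\ell}\bigr\},
\]
so $\dim\ker(\phi) = \dim(V_{A_1} \cap \cdots \cap V_{A_\ell})$.

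Finally, by rank--nullity and the hypothesis $\dim(V_{A_i}) = |A_i|$,
\[
\dim\left(\sum_{i=1}^{\ell} H_i \otimes V_{A_i}\right) = \sum_{i=1}^{\ell} |A_i| - \dim(V_{A_1} \cap \cdots \cap V_{A_\ell}),
\]
which gives the desired equivalence of (a) and (b). There is no real obstacle here; the only point requiring care is the characterization of $\ker(\phi)$, and for this one must be explicit that the columns of a generator matrix of the $(\ell, \ell-1)$ parity-check code have the all-ones vector as the unique (up to scaling) linear dependence.
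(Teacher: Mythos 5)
Your proof is correct. The paper does not actually prove Proposition~\ref{lem:mds7} --- it is quoted as a fact implicit in \cite{bgm2021mds} --- so there is no internal argument to compare against; your rank--nullity argument for the map $(v_1,\hdots,v_\ell)\mapsto \sum_{i=1}^{\ell} H_i\otimes v_i$, whose image is $\sum_i H_i\otimes V_{A_i}$ and whose kernel is the diagonal copy of $V_{A_1}\cap\cdots\cap V_{A_\ell}$ (using that the only dependence among the columns $H_1,\hdots,H_\ell$ is the all-ones combination, and that $\dim(V_{A_i})=|A_i|$ gives the domain dimension $\sum_i|A_i|$), is exactly the standard argument underlying the cited fact and is complete as written.
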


\begin{proposition}\label{prop:sub1}
  Let $X, Y, Y', Y''$ be linear spaces such that $Y' \subseteq Y \subseteq Y''$ and each inclusion has codimension $1$. Then,
  \begin{align*}
    \dim(X \cap Y) &\ge \dim(X \cap Y') \ge \dim (X \cap Y) - 1\\
    \dim(X \cap Y) &\le \dim(X \cap Y'') \le \dim (X \cap Y) + 1
  \end{align*}
\end{proposition}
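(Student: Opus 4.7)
The proposition is a standard linear algebra fact about how intersection dimensions change when one of the operands is perturbed by codimension $1$. My plan is to prove both displayed inequalities by the same quotient-space argument, noting they are dual to each other.

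For the first line, the inclusion $X \cap Y' \subseteq X \cap Y$ follows from $Y' \subseteq Y$ and immediately yields $\dim(X \cap Y') \le \dim(X \cap Y)$. For the opposite direction, consider the linear map $\pi \colon Y \to Y/Y'$. Since $Y'$ has codimension $1$ in $Y$, the quotient $Y/Y'$ is one-dimensional. Restricting $\pi$ to $X \cap Y$ gives a linear map with kernel exactly $X \cap Y'$, so by rank-nullity
\[
\dim(X \cap Y) - \dim(X \cap Y') = \dim \pi(X \cap Y) \le \dim(Y/Y') = 1,
\]
which yields $\dim(X \cap Y') \ge \dim(X \cap Y) - 1$.

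For the second line, the argument is symmetric. The inclusion $X \cap Y \subseteq X \cap Y''$ gives the lower bound $\dim(X \cap Y) \le \dim(X \cap Y'')$ directly. For the upper bound, apply the quotient map $\pi' \colon Y'' \to Y''/Y$, which again has one-dimensional codomain since $Y$ has codimension $1$ in $Y''$. The kernel of $\pi'$ restricted to $X \cap Y''$ is $X \cap Y$, so rank-nullity gives $\dim(X \cap Y'') - \dim(X \cap Y) \le 1$, as required.

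There is no real obstacle here; the only thing to be careful about is to apply rank-nullity to the right restriction and identify the kernel correctly as $X \cap Y'$ (respectively $X \cap Y$). Both statements can in fact be deduced from a single lemma: if $A \subseteq B$ are subspaces with $\dim B - \dim A = 1$, then for any subspace $X$ one has $\dim(X \cap B) - \dim(X \cap A) \in \{0,1\}$, which is precisely the argument above.
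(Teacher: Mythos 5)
Your proof is correct. It takes a mildly different route from the paper: the paper's argument writes $\dim(X \cap Y') = \dim((X\cap Y)\cap Y') = \dim(X \cap Y) + \dim Y' - \dim\bigl(Y' + (X \cap Y)\bigr)$ and then bounds $\dim\bigl(Y' + (X\cap Y)\bigr) \le \dim Y$, i.e.\ it runs everything through the Grassmann dimension formula for sums and intersections, and it gets the second displayed line by substituting $(Y, Y'')$ for $(Y', Y)$ in the first --- exactly the symmetry you also exploit. You instead restrict the quotient map $Y \to Y/Y'$ (respectively $Y'' \to Y''/Y$) to $X \cap Y$ (respectively $X \cap Y''$), identify the kernel correctly as $X \cap Y'$ (respectively $X \cap Y$), and apply rank-nullity, with the codimension-$1$ hypothesis entering as the statement that the image sits inside a one-dimensional space. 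The two arguments are equivalent in content (rank-nullity and the dimension formula are two faces of the same fact), so neither buys anything substantive over the other; if anything, your packaging via the single lemma ``if $A \subseteq B$ with $\dim B - \dim A = 1$ then $\dim(X\cap B) - \dim(X\cap A) \in \{0,1\}$'' makes the role of the codimension-$1$ assumption slightly more transparent, while the paper's version avoids quotient spaces entirely.
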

\begin{proof}
  For the first line of inequalities. Note that the first one is trivial. For the second, observe that
  \begin{align*}
    \dim(X \cap Y') &= \dim((X \cap Y) \cap Y')\\
                    &= \dim(X \cap Y) + \dim Y' - \dim (Y' + (X \cap Y))\\
                    &\ge \dim(X \cap Y) + \dim Y' - \dim Y\\
                    &= \dim(X \cap Y) - 1.
  \end{align*}
  The second inequality follows by similar logic. Or, we can use the first inequality by substituting $(Y,Y'') for (Y',Y)$).
\end{proof}

\begin{proof}[Proof of Lemma~\ref{lem:pad-mdsb}]
  We begin with part (a). Assume without loss of generality that $\sigma_1 \ge \cdots \ge \sigma_\ell$. If $\sigma_\ell \ge 1$, consider the minimal contraction $(\sigma'_1, A'_1), \hdots, (\sigma'_{\ell}, A'_\ell)$ \change{of $(\sigma_1, A_1), \hdots, (\sigma_\ell, A_\ell)$} with $\sigma'_i = \sigma_i$ and $A'_i = A_i$ for all $i \in [\ell]$ except $\sigma'_\ell = \sigma_\ell - 1$. Now for any partition $P_1 \sqcup \cdots \sqcup P_s = [\ell]$ with $\ell \in P_s$, then $\sigma'_{P_s} = \sigma'_{\ell} = \sigma_\ell-1 = \sigma_{P_s} - 1$. Thus, we have that
\begin{align*}
  \left[-n(s-1) + \sum_{i=1}^s (\sigma_{P_i} + |A_{P_i}|)\right] &= \left[-n(s-1) + \sum_{i=1}^{s} (\sigma'_{P_i} + |A'_{P_i}|)\right] - 1.\end{align*}
Therefore, by taking the maximum over all such partitions and invoking Proposition~\ref{prop:MDSb-inter-form}, we have that $d-1 = \gid_k((\sigma'_1, A'_1), \hdots, (\sigma'_\ell, A'_\ell)).$

Now assume that $\sigma_\ell = 0$ and consider $W \in \F^{k \times b}$ and $W' \in \F^{k \times n}$ generic. Let $X := \bigcap_{i=1}^\ell (W_{\le \sigma_i} + W'_{A_i})$. Since $d \ge 1$, we have that $\change{X} \neq 0$. Since $X \subseteq W'_{A_\ell}$ and $|A_\ell| \le k$, each nonzero vector $x \in X$ is a unique nonzero linear combination of the columns of $W'|_{A_{\ell}}$. This implies there is at least one index $i^* \in A_{\ell}$ such that some vector $x \in X$ requires $i^*$ in its linear combination. If we let $A'_{\ell} = A_{\ell} \setminus \{i^*\}$ and otherwise let $\sigma'_i = \sigma_i$ and $A'_i = A_i$, then $\dim(X \cap W'_{A'_{\ell}}) \le \dim(X)-1 = d-1$. Likewise by Proposition~\ref{prop:MDSb-inter-form} we must have that 
\[
  \gid_k((\sigma'_1, A'_1), \hdots, (\sigma'_\ell, A'_\ell)) = \dim(X \cap W'_{A'_{\ell}}) = d-1.
\]

We now move on to part (b). Let $h_1, \hdots, h_\ell \in \F^{\ell-1}$ be nonzero vectors forming an $(\ell, \ell-1)$-MDS code. By Proposition~\ref{lem:mds7} (and adapting it like in Proposition~\ref{prop:MDSb-inter-form}), we have that $\gid_k((\sigma_1, A_1), \hdots, (\sigma_\ell, A_\ell)) = 0$ if and only if 
\[
  \dim \sum_{i=1}^{\ell} h_i \otimes (W_{\le \sigma_i} + W'_{A_i}) = \sum_{i=1}^{\ell} \sigma_i + |A_i|.
\]
Since $n \ge k$, we also have that
\[
  \dim \sum_{i=1}^{\ell} h_i \otimes (W_{\le \sigma_i} + W'_{[n]}) = \dim (\F^{\ell-1} \otimes \F^k) = (\ell-1)k.
\]
Thus, since $\sum_{i=1}^{\ell} \sigma_i + |A_i| < (\ell-1)k$, we argue there exists $i \in [\ell]$ and $j \in [n]$ with $j \not\in A_i$ such that 
\[
\dim \left[h_i \otimes W'_j + \sum_{i=1}^{\ell} h_i \otimes (W_{\le \sigma_i} + W'_{A_i}) \right] = 1 + \dim \sum_{i=1}^{\ell} h_i \otimes (W_{\le \sigma_i} + W'_{A_i}).
\]
If no such $h_i \otimes W'_j$ exists, then every $h_i \otimes W'_j$ must be a linear combination of vectors in $h_i \otimes (W_{\le \sigma_i} + W'_{A_i})$, implying that  $\dim \sum_{i=1}^{\ell} h_i \otimes (W_{\le \sigma_i} + W'_{A_i}) = \dim \sum_{i=1}^{\ell} h_i \otimes (W_{\le \sigma_i} + W'_{[n]})$, a contradiction.

Let $(\sigma'_1, A'_1), \hdots, (\sigma'_\ell, A'_\ell)$ be the minimal expansion \change{of $(\sigma_1, A_1), \hdots, (\sigma_\ell, A_\ell)$} for which $j$ is added to set $A_i$. We then have that $0 = \gid_k((\sigma'_1, A'_1), \hdots, (\sigma'_\ell, A'_\ell)),$ as desired.
\end{proof}

\subsection{Proof of Lemma~\ref{lem-powerS}}

$\F$ throughout is algebraically closed.

At a high level the idea is that we can take a local view at a single point on the irreducible variety and that will give us a power series expansion.

\begin{definition}[Stalk of a variety at a point]
Given a variety $X\subseteq \F^k$ cut out by a radical ideal $I\subseteq \F[x_1,\hdots,x_k]$ and a point $x\in X$ the stalk of $X$ at $x$ is constructed by taking the ring $\F[x_1,\hdots,x_k]/I$ and inverting every polynomial $f$ which does not vanish on $x$.
\end{definition}

The stalk corresponds to a local view of the variety $X$ around $x$.

By construction for a variety $X$ cut out by $I$ there is a map from $\F[x_1,\hdots,x_k]/I$ to the stalk at a point $x\in X$. It is easy to see that this map is injective if $X$ is irreducible (equivalently $I$ is prime). This means, for an irreducible variety looking at the stalk is enough to check if a polynomial is generically non-vanishing.

We next need the notion of smoothness. 

\begin{definition}[Smooth point]
A point $x$ on a variety $X\subseteq \F^k$ of dimension $d$ cut out by a radical ideal $I=\langle f_1,\hdots,f_m\rangle$ is said to be smooth when the Jacobian of the polynomials $f_1,\hdots,f_m$ is of rank $k-d$ at $x$.
\end{definition}

The stalks of smooth points have very nice properties. We will need the following one.

\begin{theorem}[Stalk at a smooth point (See 13.2.J and Section 28.2 in \cite{raviVakilFOAG})]\label{alem-stalkCompletion}
Given a smooth point $x$ on an irreducible variety $X\subseteq \F^k$ of dimension $d$, the stalk at $x$ injects into the power series ring $\F[[z_1,\hdots,z_d]]$.
\end{theorem}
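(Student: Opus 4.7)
The plan is to establish the theorem via three standard ingredients from commutative algebra: regularity of the local ring at a smooth point, the Cohen structure theorem for complete equicharacteristic local rings, and Krull's intersection theorem to upgrade a completion map to an injection.

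First, I would argue that smoothness translates into regularity of the stalk. Let $\mathcal{O}_{X,x}$ denote the stalk of $X$ at $x$, and let $\mathfrak{m}_x$ be its maximal ideal. Since $\F$ is algebraically closed and $x$ is a closed point, the residue field is $\F$ itself. The Jacobian criterion tells us that the cotangent space $\mathfrak{m}_x / \mathfrak{m}_x^2$ has dimension equal to $k$ minus the rank of the Jacobian of the defining equations $f_1, \ldots, f_m$ of $X$ at $x$; by the smoothness assumption this rank equals $k-d$, so $\dim_\F (\mathfrak{m}_x / \mathfrak{m}_x^2) = d$. Since $X$ is irreducible of dimension $d$, the Krull dimension of $\mathcal{O}_{X,x}$ is also $d$. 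Hence $\mathcal{O}_{X,x}$ is a regular local ring of dimension $d$ whose residue field $\F$ coincides with the ground field.

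Next I would pass to the $\mathfrak{m}_x$-adic completion $\widehat{\mathcal{O}_{X,x}}$. By the Cohen structure theorem, any complete Noetherian regular local ring of dimension $d$ that contains a copy of its residue field is isomorphic to a formal power series ring in $d$ variables over that field. Concretely, a regular system of parameters $t_1, \ldots, t_d \in \mathfrak{m}_x$ determines an $\F$-algebra isomorphism $\widehat{\mathcal{O}_{X,x}} \cong \F[[z_1, \ldots, z_d]]$ sending $z_i \mapsto t_i$. This step is the main conceptual obstacle, since it requires the full strength of the Cohen structure theorem; however, since we are in the equicharacteristic case with residue field equal to $\F$, the theorem applies cleanly.

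Finally, I would show that the canonical map $\mathcal{O}_{X,x} \to \widehat{\mathcal{O}_{X,x}}$ is injective. Because $X$ is irreducible, the coordinate ring $\F[x_1, \ldots, x_k] / I$ is a domain, and localizing a domain yields a domain, so $\mathcal{O}_{X,x}$ is a Noetherian local domain. Krull's intersection theorem then gives $\bigcap_{n \ge 0} \mathfrak{m}_x^n = 0$, which is precisely the kernel of the completion map; hence the map is injective. Composing with the isomorphism from the previous step produces the desired injection $\mathcal{O}_{X,x} \hookrightarrow \F[[z_1, \ldots, z_d]]$. The main delicate point is the invocation of Cohen structure; the other steps are standard verifications once smoothness has been recast as regularity.
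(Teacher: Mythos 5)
Your proposal is correct, and it is essentially the argument behind the references the paper cites for this fact (the paper itself gives no proof, deferring to Vakil): smoothness gives regularity of $\mathcal{O}_{X,x}$, the completion of a regular equicharacteristic local ring with coefficient field $\F$ is $\F[[z_1,\hdots,z_d]]$, and Krull's intersection theorem makes $\mathcal{O}_{X,x}\to\widehat{\mathcal{O}}_{X,x}$ injective. No substantive difference from the intended proof.
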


A priori it is not clear if every variety will have a smooth points. Almost all points on a variety are smooth.

\begin{theorem}[Generic Smoothness of varieties (see 21.6.1 in \cite{raviVakilFOAG})]\label{alem-genSmooth}
If $X$ is an irreducible variety then there is a dense open subset of $X$ which only contains smooth points.
\end{theorem}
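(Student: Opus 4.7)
The plan is to first show that the smooth locus $X^{\mathrm{sm}} \subseteq X$ is Zariski-open, and then to exhibit at least one smooth point; irreducibility of $X$ then forces $X^{\mathrm{sm}}$ to be dense. For openness, fix generators $f_1,\dots,f_m$ of the radical ideal $I$ cutting out $X \subseteq \F^k$. Smoothness at $x$ is the condition that the Jacobian matrix $(\partial f_i/\partial x_j)(x)$ has rank at least $k-d$ (in fact the rank cannot exceed $k-d$ on $X$ by general dimension theory, so equality is automatic). This is the complement of the simultaneous vanishing of all $(k-d)\times(k-d)$ minors of the Jacobian, which is a closed condition. Hence $X^{\mathrm{sm}}$ is open, so it suffices to produce a single smooth point.

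To produce a smooth point, I would reduce to a hypersurface via the primitive element theorem. Let $K = \F(X)$ be the function field; it has transcendence degree $d$ over $\F$. Since $\F$ is algebraically closed it is perfect, so $K/\F$ admits a separating transcendence basis $t_1,\dots,t_d$, meaning $K$ is a finite \emph{separable} extension of $\F(t_1,\dots,t_d)$. By the primitive element theorem, $K = \F(t_1,\dots,t_d)[\alpha]$ where the minimal polynomial $p(T)$ of $\alpha$ over $\F(t_1,\dots,t_d)$ is separable. Clearing denominators gives an irreducible polynomial $P(y_1,\dots,y_{d+1}) \in \F[y_1,\dots,y_{d+1}]$ with $\partial P/\partial y_{d+1} \not\equiv 0 \pmod{P}$, so $X$ is birational to the hypersurface $Y = V(P) \subseteq \F^{d+1}$.

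On $Y$, the locus where $\partial P/\partial y_{d+1} \neq 0$ is a non-empty Zariski-open subset (non-empty because $\partial P/\partial y_{d+1}$ does not vanish identically on the irreducible hypersurface $Y$, using separability), and on this locus the $1\times(d+1)$ Jacobian of $P$ achieves its maximal rank $1 = (d+1)-d$, so these points are smooth on $Y$. The birational equivalence $\varphi \colon X \dashrightarrow Y$ restricts to an isomorphism between non-empty open subsets $U \subseteq X$ and $V \subseteq Y$; intersecting $V$ with the smooth locus of $Y$ and pulling back yields a non-empty open set of points in $X$ whose stalks are isomorphic (as local $\F$-algebras) to stalks at smooth points of $Y$.

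The main obstacle is this last step: transferring smoothness across the birational map. The cleanest route is to work at the level of stalks. A birational isomorphism $U \cong V$ induces $\mathcal{O}_{X,x} \cong \mathcal{O}_{Y,y}$ for corresponding points, and smoothness is equivalent to regularity of the local ring (over an algebraically closed field, by the Jacobian criterion and Theorem~\ref{alem-stalkCompletion}), a property manifestly invariant under isomorphism. Executed carefully, this produces the required smooth point on $X$ and completes the proof.
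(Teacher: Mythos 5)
The paper does not prove this statement at all: it is quoted as background, with a pointer to 21.6.1 of \cite{raviVakilFOAG}, and is used only as a black box in the proof of Lemma~\ref{lem-powerS}. So there is nothing in the paper to match your argument against; what you have written is essentially the classical textbook proof (openness of the smooth locus via vanishing of $(k-d)\times(k-d)$ minors, then existence of one smooth point by passing to a separating transcendence basis, invoking the primitive element theorem, and reducing to an irreducible hypersurface $V(P)$ where separability forces $\partial P/\partial y_{d+1}$ not to vanish identically), and that outline is sound. The one step you should be more careful about is the transfer of smoothness across the birational isomorphism of open subsets. Smoothness as defined in the paper is extrinsic (rank of a Jacobian for a chosen embedding and chosen generators of $I$), so you need the fact that it is intrinsic, i.e., invariant under isomorphism of local rings; over an algebraically closed (hence perfect) field this is the standard theorem that the smooth locus coincides with the regular locus (the Jacobian criterion). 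That theorem is the real content of this step, and it is not supplied by Theorem~\ref{alem-stalkCompletion}, which only says that the stalk at a smooth point embeds into a power series ring (an injection, not a characterization of smoothness); as written, your appeal to it is a misattribution rather than a proof. If you either cite the Jacobian criterion explicitly or argue directly that regularity of $\mathcal{O}_{X,x}$ (dimension $d$, maximal ideal generated by $d$ elements) implies the Jacobian rank condition, the argument is complete. Minor points that deserve a sentence each but are fine: the bound $\operatorname{rank} \le k-d$ for the Jacobian on an irreducible $d$-dimensional variety, and the fact that $P \nmid \partial P/\partial y_{d+1}$ (degree in $y_{d+1}$ plus separability) so the partial derivative is nonvanishing on a nonempty open subset of $V(P)$.
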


We are now ready to prove Lemma~\ref{lem-powerS}.

\begin{proof}
Any irreducible variety will have at least one smooth point on it as the set of smooth points is dense (Lemma~\ref{alem-genSmooth}). By Lemma~\ref{alem-stalkCompletion} the stalk at a smooth point on a dimension $d$ irreducible variety maps injectively into $\F[[z_1,\hdots,z_d]]$. This means the coordinate functions $x_1,\hdots,x_k$ on the stalk map to power series $f_1,\hdots,f_k$.

For an irreducible variety checking a function is generically non-vanishing is equivalent to checking if the function is non-zero on a stalk of any point. As the stalk maps injectively into the power series ring we are done with the first part of the theorem.

The second part simply follows from the fact that the product of irreducible varieties is irreducible and we can take the product of the stalk of a point on $X$ to get a stalk on $X^n$.
\end{proof}

\end{document}